\def\beq{\begin{equation}}
\def\eeq{\end{equation}}
\newcommand{\ba}{\begin{array}}
\newcommand{\ea}{\end{array}}
\renewcommand{\l}{\left}\renewcommand{\r}{\right}
\newcommand{\be}{\begin{equation}}
\newcommand{\ee}{\end{equation}}
\newcommand{\R}{\mathbb{R}}
\newcommand{\N}{\mathbb{N}}
\newcommand{\U}{\mathcal{U}}
\newcommand{\V}{\mathcal{V}}
\newcommand{\de}{\mathrm{d}}
\newcommand{\se}{\text{ if }}
\DeclareMathOperator*{\argmax}{argmax}
\DeclareMathOperator*{\argmin}{argmin}
\renewcommand{\natural}{{\mathbb{N}}}
\newcommand{\real}{{\mathbb{R}}}
\newcommand{\realnoneg}{\R_+}
\newcommand{\rlb}{r}
\newcommand{\setdef}[2]{\{#1 \; | \; #2\}}
\newcommand{\until}[1]{\{1,\dots,#1\}}
\newcommand{\zerountil}[1]{\{0,\dots,#1\}}
\newcommand{\mc}{\mathcal}
\def\N{\mathbb{N}}
\newcommand{\flowmax}{C}
\newcommand{\BPAflowmax}{\Gamma}
\newcommand{\Dcum}{\triangle}
\newcommand{\Dnorm}{\mc D}
\newcommand{\onebf}{\mathbf{1}}
\newcommand{\zerobf}{\mathbf{0}}
\newcommand{\mor}{\mc R}
\newcommand{\minCnode}{\underbar{\flowmax}}
\newcommand{\maxCnode}{\bar{\flowmax}}
\newcommand{\tdelta}{\tilde{\delta}}
\newtheorem{assumption}{Assumption}
\newtheorem{theorem}{Theorem}
\newtheorem{proposition}{Proposition}
\newtheorem{definition}{Definition}
\newtheorem{lemma}{Lemma}
\newtheorem{remark}{Remark}
\newtheorem{example}{Example}
\newcommand\oprocendsymbol{\hbox{$\square$}}
\newcommand\oprocend{\relax\ifmmode\else\unskip\hfill\fi\oprocendsymbol}
\newcommand{\conciseversiononly}[1]{}
\newcommand{\memoryversion}[1]{}
\begin{document}

\title{Robust Network Routing under Cascading Failures}


\author{Ketan Savla \quad Giacomo Como \quad Munther A. Dahleh
\thanks{K. Savla is with the Sonny Astani Department of Civil and Environmental Engineering at the University of Southern California, Los Angeles, CA, USA. \texttt{ksavla@usc.edu}. G. Como is with the Department of Automatic Control, Lund University, Lund, Sweden. \texttt{giacomo@control.lth.edu}. M. A. Dahleh is with the Laboratory for Information and Decision Systems at the Massachusetts Institute of Technology, Cambridge, MA, USA. 
\texttt{dahleh@mit.edu}}
} 

\maketitle

\begin{abstract}
We propose a dynamical model for cascading failures in single-commodity network flows. In the proposed model, the network state consists of flows \emph{and} activation status of the links. Network dynamics is determined by a, possibly state-dependent and adversarial, disturbance process that reduces flow capacity on the links, and routing policies at the nodes that have access to the network state, but are oblivious to the presence of disturbance. Under the proposed dynamics, a link becomes irreversibly inactive either due to overload condition on itself or on all of its immediate downstream links. The coupling between link activation and flow dynamics implies that links to become inactive successively are not necessarily adjacent to each other, and hence the pattern of cascading failure under our model is qualitatively different than standard cascade models. The magnitude of a disturbance process is defined as the sum of cumulative capacity reductions across time and links of the network, and the margin of resilience of the network is defined as the infimum over the magnitude of all disturbance processes under which the links at the origin node become inactive. We propose an algorithm to compute an upper bound on the margin of resilience for the setting where the routing policy only has access to information about the local state of the network. For the limiting case when the routing policies update their action as fast as network dynamics, we identify sufficient conditions on network parameters under which the upper bound is tight under an appropriate routing policy. Our analysis relies on making connections between network parameters and monotonicity in network state evolution under proposed dynamics.   
\end{abstract}
%

\section{Introduction}
\label{sec:introduction}
Resilience is becoming a key consideration in the design and operation of many critical infrastructure systems such as transportation, power, water, and data networks. Due to their increasing scale and interconnectedness, these systems tend to exhibit complex behaviors that pose several new challenges in their design and operation. 
In particular,  while exhibiting good trade-off between performance and robustness in the business-as-usual regime, such systems could be fragile to small local disruptions that may give rise to  cascading failures with potentially systemic effects. The problem is further exacerbated by the fact that local actions aimed at mitigating disruptions can increase vulnerability of the other parts of the system.

Models for cascading phenomena in infrastructure networks have been proposed in the statistical physics literature and studied mainly through numerical simulations, e.g., see \cite{Cohen.Erez.ea:00,Watts:01,Motter:2002fk,Crucitti.Latora.ea:04,Barrat.Barthelemy.ea:08}. Simpler models, based on percolation and other interacting particle systems describing the activation status of nodes and links as dependent on the activation status of their neighbors in the network, have lend themselves to more analytical studies, \cite{Liggettbook,Percolation, Adler:91}. While largely used to model the spread of epidemics and rumors in social and economic networks, cascading failures in financial networks and in wireless networks \cite{VegaRedondoCSN,DraiefMassoulie,Lelarge,Kong.Yeh:10}, the applicability of the latter models to the design and control of actual physical networks is severely limited because of their simplistic description of the causal relationship between failures of successive nodes and links. In particular, an inherent characteristic of such percolation- and interacting particles-based models is that the successive nodes and links to fail are constrained to be adjacent to each other, which is typically not the case in infrastructure networks. (see, e.g., \cite{Bernstein.ea:11}) Recently, more physically motivated dynamical models for cascading failures overcoming such limitations have been proposed and analyzed in the context of power networks \cite{Dobson.Carreras.ea:05,Lai.Low:Allerton13}, even allowing for control in between successive failure events~\cite{Bienstock:11}. 

This paper is concerned with  dynamical model for cascading failures in  single-commodity flow networks, and with the characterization of maximally resilient routing policies.
When considering dynamical models for cascading failures in physical infrastructure networks, there are several possibilities for time scale separation between link inactivation dynamics under overload, flow dynamics and reaction time of routing (control) policies that can simplify the analysis. For example, in the case of power networks, flow dynamics is typically much faster than the link inactivation dynamics and reaction time of control policies. The rate of information propagation among geographically distributed routing policies relative to the dynamics can add further complexity. In this paper, we focus our analysis on the limiting case when the rate of information propagation is slow (i.e., routing policies are distributed), and the link inactivation and flow dynamics under routing policies evolve at the same and much faster time scale. Our ability to analyze the dynamical model relies on identifying conditions under which the network state evolves monotonically. Irreversibility in link inactivation in our model 
naturally implies monotonicity in the link activation status. However, monotonicity in the link flows requires additional restrictions on the routing policy. We study these restrictions under \emph{flow monotonicity} and \emph{link monotonicity} which refer to the sensitivity of the action of a distributed routing policy with respect to changes in inflow (due to changes in the upstream part of the network) and activation status of outgoing links, respectively. 

%
%

The contributions of the paper are as follows. First, we propose a dynamical model for cascading failures in network flows and formally state the problem of designing maximally resilient routing policies. Second, we propose a backward propagation algorithm for computing an upper bound on the margin of resilience and to motivate the design of a maximally resilient routing policy. 
Third, we introduce the properties of flow and link monotonicity for distributed routing policies, and show that these are sufficient conditions for the upper bound to be tight. Finally, we identify sufficient conditions on network parameters under which these monotonicity conditions are satisfied. 

The paper is organized as follows. In Section~\ref{sec:cascade-model}, we present our dynamical model for cascading failures in network flow under routing policies. Section~\ref{sec:main-results} contains main results of this paper in terms of upper bound computation, monotonicity conditions on the routing policies and sufficient conditions on network parameters to guarantee these monotonicity properties. The proofs of the main results are collected in Section~\ref{sec:proofs}. Section~\ref{sec:conclusions} provides concluding remarks. The appendix contains a technical lemma which is used heavily in the proofs in Section~\ref{sec:proofs}. 


Before proceeding, we define some preliminary notations to be used throughout the paper. Let $\real$ be the set of real numbers, $\realnoneg:=\{x\in\R:\,x\ge0\}$ be the set of nonnegative real numbers, and $\N$ be the set of natural numbers. When $\mc A$ is a finite set, $|\mc A|$ will denote the cardinality of $\mc A$, $\R^{\mc A}$ (respectively, $\R_+^{\mc A}$) will stay for the space of real-valued (nonnegative-real-valued) vectors whose components are indexed  by elements of $\mc A$.
For $x \in \real^{\mc A}$ and $y \in \real_+^{\mc B}$, $x'$ stands for the transpose of $x$, and
$x \leq y$ means that $x_i \leq y_i$ for all $i \in \mc A \cap \mc B$. 
When $\mc A=\mc B$, $x'y$ stands for the dot product of $x$ and $y$. 
The all-one and all-zero vectors will be denoted by $\onebf$ and $\zerobf$, respectively, their size being clear from the context. 
A directed multigraph is the pair $(\mc V,\mc E)$ of a finite set $\mc V$ of nodes, and of a multiset $\mc E$ of links consisting of ordered pairs of nodes (i.e., we allow for parallel links between a pair of nodes). 
If $e=(v,w)\in\mc E$ is a link, where $v,w \in \mc V$, we shall write $\sigma_e=v$ and $\tau_e=w$  for its tail and head node, respectively. The sets of outgoing  and incoming links of a node $v\in\mc V$ will be denoted by $\mc E^+_v:=\{e\in\mc E:\,\sigma_e=v\}$ and $\mc E^-_v:=\{e\in\mc E:\,\tau_e=v\}$, respectively.
For $x \in \real$, we shall use the notation $[x]^+$ to mean $\max\{0,x\}$.


\section{Dynamical Model for Network Flows and Problem Formulation}
\label{sec:cascade-model}
In this section, we propose a dynamical model for cascading failure in network flows under distributed routing policies. 
We model flow networks as finite weighted directed multi-graphs $\mc N=(\mc V,\mc E,C)$, where $\mc V$ and $\mc E$ stand for the sets of nodes and links, respectively, and $C\in\R^{\mc E}$ is the vector of link capacities, all assumed to be strictly positive. We refer to nodes with no incoming links as origin nodes and to those with no outgoing links as destination nodes. The set of destination nodes is denoted by $\mc D$. Nodes which are neither origin nor destination are referred to as intermediate nodes and are assumed to lie on a path from some origin to some destination. 

Let an external inflow $\lambda_o\ge0$ be associated to every origin node $o \in \mc V$, and, by convention, put $\lambda_v=0$ for every other node $v$. 
Then, the max-flow min-cut theorem, e.g., see \cite{Ahuja.Magnanti.ea:93},  implies that a necessary and sufficient condition for the existence of a feasible equilibrium flow is that the capacity of every cut in the network is larger than the aggregate inflow associated to the non-destination side of the cut. Here, a feasible equilibrium flow refers to a vector $f\in\R_+^{\mc E}$ satisfying capacity constraints $f_e<C_e$ on every link $e\in\mc E$, and mass conservation at every non-destination node, i.e, $\lambda_v+\sum_{e\in\mc E^+_v}f_e=\sum_{e\in\mc E^-_v}f_e$ for all $v\in\mc V\setminus\mc D$. On the other hand, a cut refers to a subset of non-destination nodes $\mc U\subseteq\mc V\setminus\mc D$ , with $C_{\mc U}:=\sum_{e\in\mc E:\sigma_e\in\mc U,\tau_e\in\mc V\setminus\mc U}C_e$ standing for its capacity and $\lambda_{\mc U}:=\sum_{v\in\mc U}\lambda_v$ for the associated aggregate external inflow. Then, the necessary and sufficient condition for the existence of a feasible equilibrium flow is  
\be\label{maxflowmincut}
\max_{\mc U}\left\{\lambda_{\mc U}-C_{\mc U}\right\}<0,
\ee
with the index $\mc U$ running over all possible cuts.

We now describe network flow dynamics, evolving in discrete time.  
Let $\mc N=(\mc V,\mc E,C)$ be a network as above, with inflows $\lambda_o$ at the origin nodes satisfying condition (\ref{maxflowmincut}). 
At every time $t=0,1,\ldots$, the state of the system is described by a tuple $(\mc V(t),\mc E(t),f(t),C(t))$ where:
$\mc V(t)\subseteq\mc V\setminus\mc D$ and
$\mc E(t)\subseteq\mc E$ are the subsets of active non-destination nodes, and links, respectively;
$f(t)\in\R_+^{\mc E}$ is the vector of link flows; 
and $C(t)\in\R^{\mc E}$, with $0\le C_e(t)\le C_e$, is the vector of residual link capacities. 
The initial condition $(\mc V(0),\mc E(0),C(0),f(0))$ is such that $\mc V(0)=\mc V\setminus\mc D$, $\mc E(0)=\mc E$, i.e., all non-destination nodes and all links start active, $C(0)=C$, and $f(0)$ is a feasible equilibrium flow for $\mc N$.

Given its current state $(\mc V(t),\mc E(t),f(t),C(t))$ at time $t=0,1, 2, \ldots$, the network evolves as follows. 
All currently active links which become overloaded, i.e., whose current flow exceeds the current residual capacity, along with all those whose head node is currently inactive,  become irreversibly inactive, i.e., 
\be\label{eq:E-dynamics} \mc E(t+1)=\mc E(t)\setminus\{e\in\mc E(t):\,f_e(t)\ge C_e(t)\}\setminus\{e\in\mc E(t):\,\tau_e(t)\notin\mc V(t)\}\,.\ee
All currently active nodes $v$ that have no active outgoing link become irreversibly inactive, i.e., 
\be\label{eq:V-dynamics} \mc V(t+1)=\mc V(t)\setminus\{v\in\mc V(t):\,\mc E^+_v(t)=\emptyset\}\,.\ee
At every currently active node $v\in\mc V(t)$, a routing policy determines how to split the current inflow $\lambda_v(t):=\lambda_v+\sum_{e\in\mc E^-_v(t)}f_e(t)$ among the set $\mc E^+_v(t)$ of its currently active outgoing links, so that 
\be\label{eq:routingupdate} f_e(t+1)=G_e\left(\mc E_v^+(t),\lambda_v(t)\right)\,,\qquad e\in\mc E^+_v(t)\,.\ee 
Finally, the residual capacity vector is reduced by a disturbance $\delta(t)\in\R_+^{\mc E}$ so that 
\be\label{eq:C-evolution}C_e(t+1)=C_e(t)-\delta_e(t+1)\,,\qquad e\in\mc E(t).
\ee
The sequence $(\delta(1), \delta(2),\ldots)\subseteq\R_+^{\mc E}$ of incremental flow capacity reductions is meant to represent an external, possibly adversarial and network state dependent, process that, without any loss of generality, will be assumed to satisfy 
\be\label{eq:Dcum-def}\Delta(t):=\sum_{1\le s \le t}\delta(s)\le C\,,\qquad \forall t\ge1\,.\ee 

Observe that, in writing (\ref{eq:routingupdate}), we have assumed that the routing at node $v$ is determined only by the local observation of the current inflow $\lambda_v(t)$ and the currently active set of outgoing links $\mc E^+_v(t)$. In particular, the routing policies have no information about the residual link capacities, or equivalently about the disturbance process. 
The formal definition of distributed oblivious routing policies is as follows. 
\begin{definition}\label{def:oblivious}
Given a network $\mc N=(\mc V,\mc E,C)$, a \emph{distributed oblivious routing policy} $\mc G$ is a family of functions 
$$G^v(\mc J, \,\cdot\,):\R_+\to\R_+^{\mc J}\,,\qquad v\in\mc V\setminus\mc D\,,\quad \emptyset\ne\mc J\subseteq\mc E^+_v\,,$$
such that, for every $\mu \ge0$, 
$\sum_{e \in\mc J}G^v_e(\mc J, \mu)=\mu$, 
and, for all $\mc K\subseteq\mc J\subseteq\mc E^+_v$, 
\be
\label{eq:link-monotonicity}
G^v(\mc J,\mu)\le G^v(\mc K,\mu).
\ee
\end{definition}
In reading \eqref{eq:link-monotonicity}, recall our notation established at the end of Section~\ref{sec:introduction} that, for $x \in \real^{\mc A}$ and $y \in \real_+^{\mc B}$, $x \leq y$ implies $x_i \leq y_i$ for all $i \in \mc A \cap \mc B$.
Definition~\ref{def:oblivious} implicitly implies that $G_e^v(\mc J,\mu)=0$ for all $e \in \mc E_v^+ \setminus \mc J$.
{Moreover, we will assume throughout that the initial equilibrium flow $f(0)$ is consistent with the given distributed oblivious routing policy, i.e., $G^v_e(\mc E_v^+, \lambda_v(0))=f_e(0)$ for all $e \in \mc E_v^+$, $v \in \mc V \setminus \mc D$. In other words, the initial equilibrium flow is specified by the routing policy and, as long as there is no perturbation, i.e., $\delta(t)=0$, the network state does not change.} 
The term \emph{oblivious} in distributed routing policies is meant to emphasize that routing policies 
have no information about the disturbance process. 
Hereafter, unless explicitly stated otherwise, we shall refer to a routing policy satisfying Definition~\ref{def:oblivious} simply as a distributed routing policy.
Equation \eqref{eq:link-monotonicity} implies that, at every node, for a fixed inflow, shrinking of the set of active links results in increase in flow assigned to each of the remaining active outgoing links. We shall refer to \eqref{eq:link-monotonicity} as the \emph{link monotonicity} property. 
While \eqref{eq:link-monotonicity} represents a natural condition for distributed routing policies, the maximally resilient routing policies designed in this paper have been found to satisfy it. Alternately, one could regard the results in this paper to be optimal within this class of distributed routing policies. 
We provide additional comments on this aspect in Remark~\ref{rem:link-monotonicity-justification}.

\begin{remark}
\label{rem:link-monotonicity}
\eqref{eq:link-monotonicity} is satisfied by any routing policy at a node $v$ if $|\mc E_v^+| \leq 2$. 
\end{remark}

A simple example of a distributed routing policy is the \emph{proportional} policy: 
for every $\emptyset \neq \mc J \subseteq \mc E_v^+$, $v \in \mc V \setminus \mc D$, $\mu \geq 0$: 
\begin{equation}
\label{eq:prop-routing}
G_{e}^v(\mc J,\mu)= 
\left\{\ba{lcl} 
\left(\flowmax_e / \sum_{j \in \mc J} \flowmax_j \right) \, \mu & \se & e \in \mc J, \\
0 & \se & e \notin \mc J\,.
\ea \right.
\end{equation}

The model in \eqref{eq:E-dynamics}-\eqref{eq:C-evolution} has several salient features worth emphasizing. First, note that the transition from active to inactive status of a link is irreversible. Second, note that a link could become inactive either because it is overloaded or because its downstream node becomes inactive. The mismatch between flow and residual capacity of a link, which gives rise to overload condition, depends on the disturbance process and the action of a distributed routing policy. Therefore, the links to fail successively are not necessarily adjacent to each other. 
Finally, note that in our model, routing policy updates its action at the same time scale as flow and link inactivation dynamics. An implication of this is that the flow vector $f$ may not be an equilibrium flow at all time instants because of violation of flow conservation at some nodes. This is in contrast to the setting of power networks, where the time scale for flow dynamics is much faster than the link failure dynamics and control action. It is possible to extend \eqref{eq:E-dynamics}-\eqref{eq:C-evolution} to model scenarios representing a combination of centralized and non-oblivious routing, link recovery after finite time and long range coupling between failure of links. However, the analysis presented in this paper is restricted to the model in \eqref{eq:E-dynamics}-\eqref{eq:C-evolution}.

The following example illustrates cascading failure under the dynamics in \eqref{eq:E-dynamics}-\eqref{eq:C-evolution}. 
\begin{example}
\label{ex:cascading-failure}
Consider the graph topology depicted in Figure~\ref{fig:depth4}, where the flow capacities are given by $\flowmax_i=4$ for $i=1,2$, $\flowmax_i=3$ for $i=3,4,6,7,10$, $\flowmax_i=1.5$ for $i=5,9$ and $\flowmax_8=0.75$. Let the arrival rate at the origin be $\lambda=4$. 
\begin{figure}[htb!]
\begin{center}
\includegraphics[width=8cm]{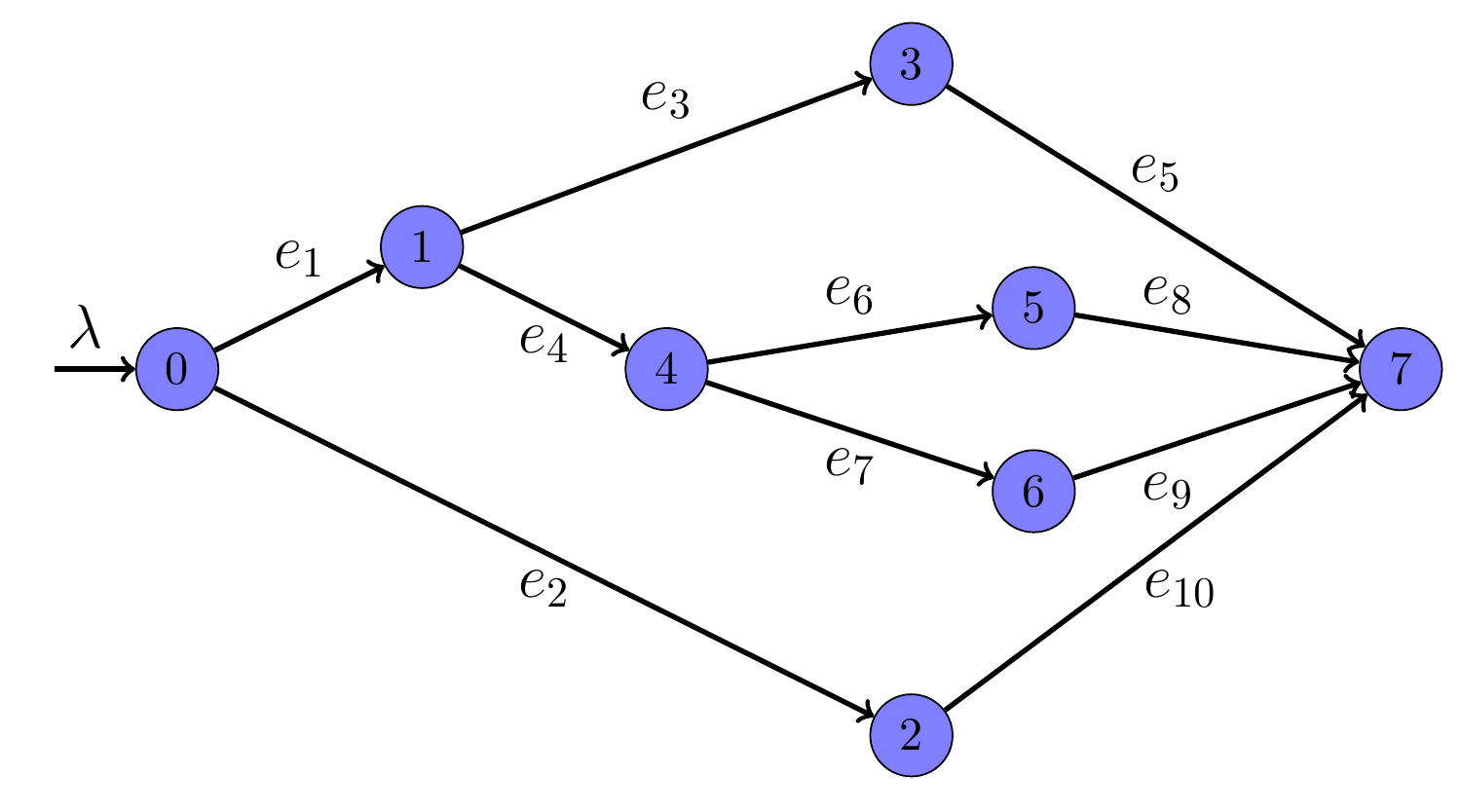}
\end{center}
\caption{A simple graph for the illustration of cascading failure under the proposed network dynamics.}
\label{fig:depth4}
\end{figure}
We consider proportional routing policies, specified in \eqref{eq:prop-routing}, at all the nodes, under which the initial flow on all links are given by $f_i(0)=2$ for $i=1,2,10$, $f_i(0)=1$ for $i=3,4,5$, and $f_i(0)=0.5$ for $i=6,7,8,9$. We now consider the network dynamics under a disturbance process for which $\delta_5(1)=0.55$, $\delta_e(t)=0$ for all $t \geq 2$ and $\delta_i(t) \equiv 0$ for all $i \in \until{10} \setminus \{5\}$. 
Since $\flowmax_5(1)=1.5-0.55=0.95<f_1(1)=1$, $e_5 \notin \mc E(2)$.
This is followed by $3 \notin \mc V(3)$ and $e_3 \notin \mc E(4)$. Therefore, $f_4(5)=2$, $f_6(6)=f_7(6)=f_8(7)=f_9(7)=1$. Since $\flowmax_8(7)=\flowmax_8(0)=0.75 < f_8(7)$, $e_8 \notin \mc E(9)$.
%
%
By continuing along these lines, the order of links to become inactivated is $e_5,e_3,e_8,e_6,e_9,e_7,e_4,e_1,e_{10},e_2$. This clearly demonstrates that the links to fail successively under our proposed network dynamics are not necessarily adjacent to each other.
\end{example}

\begin{remark}
\label{rem:previous-work}
The model in \eqref{eq:E-dynamics}-\eqref{eq:C-evolution} is to be contrasted with the dynamical flow network formulation in our previous work~\cite{Como.Savla.ea:Part1TAC10,Como.Savla.ea:Part2TAC10} where every link has infinite buffer capacity, and hence there are no cascading effects under link overload. This feature is relaxed in our subsequent work~\cite{Como.Lovisari.ea:TCNS13}, where the links are modeled to have finite buffer capacity, and the control policy at every node implements routing as well as flow control under information about the densities and the disturbances on the links incoming and outgoing from that node. 
Such a framework allows for backward cascade effect, which was proven to increase the resilience of the network with respect to the framework in  \cite{Como.Savla.ea:Part1TAC10,Como.Savla.ea:Part2TAC10}. Such control policies were also shown to exhibit \emph{graceful collapse}, i.e., when the inflow to the network exceed its capacity, then all the critical links saturate simultaneously. In this paper, we constrain the actions of the control policies to only routing, and under no information about the disturbance. We emphasize that, although the routing policies have no explicit information about disturbance on the links, they have information about its effect on the activation status on the local links. On the other hand, due to cascade effects, the change in the activation status of a link may not be exclusively due to disturbance on that link. 
\end{remark}

\subsection{Problem Formulation}
In this paper, the performance criterion of interest is the ability of a network to transfer flow from the origin nodes to the destination nodes, under a wide range of disturbance processes. We formalize this notion as follows. 
%
\begin{definition}
\label{def:transfer}
Let $\mc N$ be a network, $\lambda$ a vector of inflows at the origin nodes, $\mc G$ a distributed routing policy, and $(\delta(t))_{t\ge1}$ a disturbance process. 
Then, the associated network flow dynamics in \eqref{eq:E-dynamics}-\eqref{eq:C-evolution} is said to be \emph{transferring} if  
\be\label{eq:transferring} \lim_{t \to + \infty} \sum_{d\in\mc D}\sum_{e\in\mc E^-_{d}}f_e(t)=\sum_{v}\lambda_v\,,\ee
where the summation in $v$ is over the origin nodes.
\end{definition}

Observe that, since $f(0)$ is assumed to be a feasible equilibrium flow, one has that, at time $0$, the aggregate outflow from and inflow to the network match, i.e., $\sum_{d\in\mc D}\sum_{e\in\mc E^-_{d}}f_e(t)=\lambda$ for $t=0$. 
Definition~\ref{def:transfer} requires  that, for a network $\mc N$ and a distributed routing policy $\mc G$ to be transferring under a disturbance process $(\delta(t))_{t \geq 1}$,
aggregate inflow in and outflow from the network also match asymptotically. For disturbance processes that are active only over finite time, \eqref{eq:transferring} can be rephrased to require the inflow and the outflow to match at all times with the possible exception of a finite transient. We shall use this latter formulation in Section~\ref{sec:main-results}, where the setup allows to focus only on finite time disturbance processes without loss of generality.  


The magnitude of a disturbance process $\delta$ is defined as (see \eqref{eq:Dcum-def}): $$\Dnorm(\delta):=\sum_{e \in \mc E} \Dcum_e(\infty).$$ 

\begin{definition}
\label{def:margin}
Let $\mc N$ be a network, $\lambda$ a vector of inflows at the origin nodes, and $\mc G$ a distributed routing policy. 
The margin of resilience of the network, denoted as $\mor(\mc N,\lambda,\mc G)$, is defined as the infimum of the magnitude of disturbance processes under which the associated dynamics is not transferring, i.e., 
$$\mor(\mc N,\lambda,\mc G):=\inf_{\delta} \setdef{\Dnorm(\delta)}{\text{network flow dynamics in \eqref{eq:E-dynamics}-\eqref{eq:C-evolution} for }\mc N, \lambda, \mc G, \delta \text{ is not transferring}}.$$
\end{definition}

%
%

We are now ready to formally state the problem. 
Our objective in this paper is to (i) compute the margin of resilience under distributed routing policies; and (ii) identify maximally resilient distributed routing policies. 
Formally, we consider the following optimization problem: 
\begin{equation}
\label{eq:resilience-optimization-formulation}
\mor^*(\mc N,\lambda)=\sup_{\mc G} \mc R(\mc N,\lambda,\mc G),
\end{equation}
where the supremum is over the class of distributed routing policies. 
A distributed routing policy $\mc G$ is called \emph{maximally resilient} if $\mc R(\mc N, \lambda, \mc G) = \mc R^*(\mc N,\lambda)$. 

\section{Main Results}
\label{sec:main-results}
In this section, we present our main results addressing problem \eqref{eq:resilience-optimization-formulation}. 
From now on, we will be restricted to networks $\mc N=(\mc V,\mc E,C)$ with a single origin destination pair. We will identify the node set $\mc V$ with the integer set $\{0,1,\ldots,n\}$, with $0$ and $n$ associated with the unique origin and destination nodes, respectively. Moreover, let $\lambda > 0$ be the constant inflow at the unique origin node. 
While extensions to multiple destinations is straightforward, extensions to multiple origin nodes is not trivial. We start by giving simple bounds on the margin of resilience.
\subsection{Simple Bounds}
It is straightforward to obtain the following upper and lower bounds on the margin of resilience, valid for every routing policy $\mc G$
\begin{equation}
\label{eq:simple-bounds}
\min_{e \in \mc E} \left\{\flowmax_e - f_e(0) \right\} \leq \mc R(\mc N,\lambda,\mc G) \leq \min_{\U} C_{\U} - \lambda\,,
\end{equation}
where the minimization in the upper bound is over all the cuts in $\mc N$. The lower bound in \eqref{eq:simple-bounds} is due to the fact that at least one link needs to become inactive to ensure non transferring of the network, possibly under cascading failure, and $\min_{e \in \mc E} \left(\flowmax_e - f_e(0) \right)$, which is the minimum among all link residual capacities, corresponds to the disturbance process with minimum magnitude that can cause a link to become inactive. The upper bound in \eqref{eq:simple-bounds}, which is usually referred to as the network residual capacity, is obtained by noting that the network is non-transferring under a disturbance process that removes residual capacity at $t=1$ from the links that constitute a min cut of $\mc N$. As it may be expected, the gap between the upper and lower bounds can be arbitrarily large in general networks. As an illustration, in Example~\ref{ex:cascading-failure}, the minimum link residual capacity is $0.25$, corresponding to link $e_8$, and the network residual capacity is $2.75$, corresponding to the cut $\{3,5,6,2\}$. However the example also constructs a disturbance process of magnitude $0.55$ under which the network is not transferring (under proportional routing policy). 

We now describe a recursive procedure to compute a sharper upper bound. The quantity computed by this procedure can be intuitively related to the margin of resilience when there is a clear time-scale separation between the flow dynamics (fastest), speed of control action (intermediate) and the link inactivation dynamics (slowest), and the routing policy is \emph{centralized} but oblivious to disturbance. In our current setup, it will provide an upper bound. Let $\mc X(\mc J, \lambda)$ be the set of equilibrium flow vectors when the inflow at the origin node is $\lambda$ and the set of active links is $\mc J \subseteq \mc E$, i.e., the set of $f\in\R_+^{\mc J}$ satisfying $f_e<C_e$ on every link $e\in\mc J$, $\sum_{e \in \mc E_0+ \cap \mc J} f_e = \lambda$ and $\sum_{e\in\mc E^+_v \cap \mc J}f_e=\sum_{e\in\mc E^-_v \cap \mc J}f_e$ for all $v\in\mc V\setminus \mc \{n\}$.
%
%
Let $S(\mc J, \lambda)$ correspond to the margin of resilience when the network starts with active link set $\mc J \subseteq \mc E$. The computation of $S(\mc E, \lambda)$, which corresponds to margin of resilience of interest, is based on values of $S(\mc J, \lambda)$ for all $\mc J \subseteq \mc E$ as follows. 
Starting with sets $\mc J \subseteq \mc E$ of size one, i.e., $|\mc J|=1$, and then increasing in size, perform the following recursion: 
if $\mc X(\mc J, \lambda)=\emptyset$, then $S(\mc J,\lambda)=0$, else
\begin{equation}
\label{eq:centralized-routing}
S(\mc J, \lambda):=\max_{x \in \mc X(\mc J,\lambda)}  \min_{e \in \mc J} \Big(C_e - x_e + S\left({\mc J \setminus \{e\}}, \lambda \right) \Big).
\end{equation}
In \eqref{eq:centralized-routing}, $\flowmax_e-x_e$ is the difference between the original capacity of link $e$ and the flow on its under control action $x$ by a (centralized) routing policy. Therefore, it represents the minimum disturbance on link $e$ that will make it inactive. Since $S\left({\mc J \setminus \{e\}}, \lambda \right)$ represents the margin of resilience once link $e$ is removed, the whole term inside the minimum in \eqref{eq:centralized-routing} is the magnitude of the smallest disturbance required to make the network with $\mc J$ links to become non-transferring staring with removal of link $e$. The minimization in \eqref{eq:centralized-routing} over all $e \in \mc J$ searches for the initial link $e$ whose inactivation will minimize the disturbance required to make the network non-transferring. The outer maximization over the feasible action set of the (centralized) routing policy maximizes the magnitude of the worst-case disturbance process that will make the network non-transferring. The next results states the the output of the iterations in \eqref{eq:centralized-routing} is indeed an upper bound on the margin of resilience under any distributed routing policy. 


\begin{proposition}
\label{prop:centralized-upper-bound}
Let $\mc N=(\mc V,\mc E, C)$ be a network with $\lambda  > 0$ a constant total outflow at the origin node. Then, for any distributed routing policy, there exists a disturbance process $(\delta(t))_{t\ge1}$ with $\Dnorm(\delta) \leq S(\mc E,\lambda)$ under which the associated network flow dynamics \eqref{eq:E-dynamics}-\eqref{eq:C-evolution} is not transferring.
\end{proposition}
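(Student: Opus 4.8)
The plan is to prove, by induction on $|\mc J|$, the following strengthening of the statement: from any state reached by the dynamics \eqref{eq:E-dynamics}--\eqref{eq:C-evolution} in which the active link set is $\mc J \subseteq \mc E$, all active links carry their full residual capacity $C_e$, and the flow coincides with the equilibrium flow $x=x(\mc J)\in\mc X(\mc J,\lambda)$ that $\mc G$ sustains on $\mc J$, there is a continuation of the disturbance process of total magnitude at most $S(\mc J,\lambda)$ under which the ensuing dynamics is non-transferring. Applying this with $\mc J=\mc E$ and $x=f(0)$ (the initial $\mc G$-consistent equilibrium flow) yields the proposition. For the base case, if $\mc X(\mc J,\lambda)=\emptyset$ then $S(\mc J,\lambda)=0$, and I would argue that the failure of the max-flow/min-cut condition \eqref{maxflowmincut} on the subnetwork $\mc J$ forces a saturated cut or a severed origin--destination pair; since active links satisfy $f_e<C_e$ strictly, the aggregate outflow stays strictly below $\lambda$ even with no further disturbance, so the dynamics is non-transferring and budget $0$ suffices.

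For the inductive step, assume the claim for all active sets of size less than $|\mc J|$ and that $\mc X(\mc J,\lambda)\neq\emptyset$. Since $x=x(\mc J)$ is a feasible point of the maximization in \eqref{eq:centralized-routing}, the definition of $S$ gives $\min_{e\in\mc J}\bigl(C_e-x_e+S(\mc J\setminus\{e\},\lambda)\bigr)\le S(\mc J,\lambda)$; let $e^*$ attain this minimum. The adversary spends exactly $C_{e^*}-x_{e^*}$ on $e^*$ at the current step, which by \eqref{eq:C-evolution} brings the residual capacity of $e^*$ down to $x_{e^*}=f_{e^*}$, so by \eqref{eq:E-dynamics} link $e^*$ becomes inactive at the next step while every other active link retains its full capacity $C_e$. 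I then let the dynamics evolve with \emph{no} further disturbance: by link monotonicity \eqref{eq:link-monotonicity} the inflow at $\sigma_{e^*}$ is redistributed with weakly larger flow on each surviving outgoing link, and this perturbation propagates downstream, possibly triggering additional overloads and cascading deactivations.

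The crux, and the step I expect to be the main obstacle, is the analysis of this disturbance-free relaxation. Using the technical lemma in the appendix, I would show it terminates in finitely many steps in one of two outcomes: either the cascade reaches an active set $\mc J'$ with $\mc X(\mc J',\lambda)=\emptyset$ (origin disconnected from destination, or a cut saturated), in which case the dynamics is already non-transferring and the total spent is $C_{e^*}-x_{e^*}\le S(\mc J,\lambda)$; or it re-equilibrates to an active set $\mc J'\subseteq\mc J\setminus\{e^*\}$ carrying the $\mc G$-equilibrium flow $x(\mc J')\in\mc X(\mc J',\lambda)$ on full-capacity links, so that the induction hypothesis (applicable since $|\mc J'|<|\mc J|$) furnishes a continuation of magnitude at most $S(\mc J',\lambda)$, for a total of $(C_{e^*}-x_{e^*})+S(\mc J',\lambda)$. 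To close this against $(C_{e^*}-x_{e^*})+S(\mc J\setminus\{e^*\},\lambda)\le S(\mc J,\lambda)$, I would separately establish the monotonicity $S(\mc J',\lambda)\le S(\mc J\setminus\{e^*\},\lambda)$ for $\mc J'\subseteq\mc J\setminus\{e^*\}$ by an induction on the recursion \eqref{eq:centralized-routing}. The two anticipated difficulties are thus (i) the finite-time termination and equilibrium characterization of the relaxation under an arbitrary distributed policy, which is precisely where the appendix lemma does the work, and (ii) the monotonicity of $S$ in the active link set; once these are in hand, the rest is bookkeeping of disturbance magnitudes across the successive single-link removals.
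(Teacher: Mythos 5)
Your proposal is correct and follows essentially the same route as the paper's proof: induction on $|\mc J|$, using the policy's actual equilibrium flow as a feasible point of the maximization in \eqref{eq:centralized-routing} to select the minimizing link $e^*$, spending exactly $C_{e^*}-x_{e^*}$ on it, letting the disturbance-free cascade settle to a strictly smaller active set, and closing via the induction hypothesis together with the set-monotonicity $S(\mc K,\lambda)\le S(\mc J,\lambda)$ for $\mc K\subseteq\mc J$, which the paper likewise establishes first by a separate induction on the recursion. The only notable discrepancy is your attribution of the cascade-settling step to the appendix lemma: that lemma concerns the BPA functions $S(\mc J,\rlb,\mu)$ and plays no role here; the paper simply asserts finite settling (which follows from the irreversibility and finiteness of deactivations), so your two ``anticipated difficulties'' are precisely the two ingredients the paper supplies.
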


\begin{remark}
\label{rem:centralized-routing}
\begin{enumerate}
\item It is easy to prove that $S(\mc E, \lambda)$ is a tighter upper bound than the one in \eqref{eq:simple-bounds}.
\item The computation of $S(\mc E, \lambda)$ involves 
recursion over all sets $\mc J$ in $2^{\mc E}$. However, for each $e \in \mc J \subseteq \mc E$ and $\lambda \geq 0$, the term inside the minimization in \eqref{eq:centralized-routing} is affine in $x \in \mc X(\mc J, \lambda)$. Hence,  computing $S(\mc J, \lambda)$ is a convex optimization problem. 
\end{enumerate}
\end{remark}

For Example~\ref{ex:cascading-failure}, by simulations, we find that $S(\mc E, \lambda)=1.14$, which is less than $2.75$, the value corresponding to the upper bound in \eqref{eq:simple-bounds}. However, it is still greater than $0.55$, the magnitude of disturbance in Example~\ref{ex:cascading-failure}, under which the network is not transferring. This conservatism is because the recursions in \eqref{eq:centralized-routing} implicitly assume centralized routing, and do not take into account the possibility of link inactivation due to the inactivation of the corresponding head node. In Section~\ref{subsec:BPA}, we propose an algorithm, the Backward Propagation Algorithm (BPA), that addresses these limitations to provide a tighter upper bound, and we identify conditions under which this upper bound is provably tight. 
The BPA is designed for network topologies satisfying the following acyclicity assumption. 

\begin{assumption}
\label{ass:acyclicity}
$(\mc V, \mc E)$ contains no cycles.
\end{assumption}

A consequence of Assumption~\ref{ass:acyclicity}, the oblivious property of routing policies and the finiteness of $\mc V$ and $\mc E$ is that, we can assume without loss of generality that, for every $e \in \mc E$, there exists at most one $t_e \geq 0$ such that $\delta_e(t_e) > 0$, and that $\delta(t)=\zerobf$ after some finite time. 
Therefore, it is sufficient to restrict our attention to disturbance processes $\delta$ that are non-zero only for a finite time, and hence there exists a finite time after which $(\mc V(t), \mc E(t), f(t), \flowmax(t))$ comes to a steady state under any such disturbance process $\delta$. Let $\mc T$ denote that finite termination time. 
In this case, Definition~\ref{def:transfer} simplifies as: network flow dynamics is  transferring if $\lambda_n(\mc T)=\lambda$.

The formulation and analysis of the BPA implicitly relies on the following simple result showing an equivalence between a network being transferring and its origin node being active all the time.

\begin{proposition}
\label{prop:transferring-notion-equivalence}
Let $\mc N$ be a network satisfying Assumption~\ref{ass:acyclicity} with $\lambda$ a constant inflow at the origin node, $\mc G$ a routing policy, and $(\delta(t))_{t\ge1}$ a disturbance process. 
Then, the associated network flow dynamics \eqref{eq:E-dynamics}-\eqref{eq:C-evolution} is transferring if and only if $0 \in \mc V(\mc T)$. Moreover, $\lambda_n(\mc T) \in \{0, \lambda\}$.
\end{proposition}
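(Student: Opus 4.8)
The plan is to reduce the entire statement to a single global mass-conservation identity evaluated at the steady state reached at time $\mc T$. Concretely, I would aim to prove the identity
\be
\lambda_n(\mc T)=\lambda\,\1_{\{0\in\mc V(\mc T)\}},
\ee
after which both assertions are immediate: since $\lambda>0$, the right-hand side equals $\lambda$ precisely when $0\in\mc V(\mc T)$ and equals $0$ otherwise, so the simplified transferring condition $\lambda_n(\mc T)=\lambda$ holds if and only if $0\in\mc V(\mc T)$, and in every case $\lambda_n(\mc T)\in\{0,\lambda\}$.

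To establish this identity, I would first record the structural properties of the frozen configuration $(\mc V(\mc T),\mc E(\mc T),f(\mc T),C(\mc T))$, which is a fixed point of \eqref{eq:E-dynamics}-\eqref{eq:C-evolution} because $\delta(t)=\zerobf$ after $\mc T$. (i) Evaluating \eqref{eq:routingupdate} at the fixed point and using $\sum_{e\in\mc J}G^v_e(\mc J,\mu)=\mu$ from Definition~\ref{def:oblivious} yields mass conservation at every active node: $\sum_{e\in\mc E^+_v(\mc T)}f_e(\mc T)=\lambda_v+\sum_{e\in\mc E^-_v(\mc T)}f_e(\mc T)$ for all $v\in\mc V(\mc T)$. (ii) Every active link has an active tail: if $\sigma_e$ were an inactive (hence non-destination) node, then at the step when $\sigma_e$ was deactivated its active out-set was empty by \eqref{eq:V-dynamics}, so $e$ was already inactive, and by irreversibility it stays inactive, contradicting $e\in\mc E(\mc T)$. (iii) Every active link has a head that is either active or the destination: otherwise $\tau_e$ is an inactive non-destination node and \eqref{eq:E-dynamics} would deactivate $e$ at the next step, contradicting stationarity of $\mc E(\mc T)$.

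With these in hand I would sum the conservation equation in (i) over all $v\in\mc V(\mc T)$. By (ii) the outflow term $\sum_{v\in\mc V(\mc T)}\sum_{e\in\mc E^+_v(\mc T)}f_e(\mc T)$ counts every active link exactly once, giving $\sum_{e\in\mc E(\mc T)}f_e(\mc T)$; by (iii) the inflow term $\sum_{v\in\mc V(\mc T)}\sum_{e\in\mc E^-_v(\mc T)}f_e(\mc T)$ equals $\sum_{e\in\mc E(\mc T)}f_e(\mc T)$ minus the flow carried by active links whose head is the destination, i.e.\ minus $\lambda_n(\mc T)$; and the external term $\sum_{v\in\mc V(\mc T)}\lambda_v$ collapses to $\lambda\,\1_{\{0\in\mc V(\mc T)\}}$ since $0$ is the only node with positive external inflow. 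Cancelling the common $\sum_{e\in\mc E(\mc T)}f_e(\mc T)$ gives the displayed identity.

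The main obstacle I anticipate is not the telescoping arithmetic but the careful justification of the structural facts (ii)–(iii) directly from the update rules, together with the correct treatment of the destination node, which is never in $\mc V(t)$ yet must be regarded as a permanent sink when applying the head-deactivation rule in \eqref{eq:E-dynamics}; getting these conventions right is exactly what makes the summation collapse cleanly. A secondary point to verify is that the configuration at $\mc T$ is genuinely a fixed point in the flow variable, not merely in the active sets, so that the fixed-point form of \eqref{eq:routingupdate} used in (i) is legitimate; this is where the finiteness of $\mc T$ guaranteed by Assumption~\ref{ass:acyclicity} enters.
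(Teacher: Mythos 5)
Your proof is correct, but it takes a genuinely different route from the paper's. The paper argues via cuts in the residual graph $(\mc V(\mc T),\mc E(\mc T))$: when $0\in\mc V(\mc T)$ it starts from the cut formed by $0$ and the heads of its active outgoing links and propagates the equilibrium-flow property outward, cut by cut, to conclude that the flow crossing every cut --- hence the flow into $n$ --- equals $\lambda$; when $0\notin\mc V(\mc T)$ the same propagation gives zero across every cut. You instead sum node-wise conservation over all of $\mc V(\mc T)$ at once and read off the single identity $\lambda_n(\mc T)=\lambda\,\1_{\{0\in\mc V(\mc T)\}}$ from the boundary terms, using your structural facts (ii)--(iii) (active links have active tails; heads of active links are active or equal to $n$). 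What your route buys: it treats both cases simultaneously, and it makes explicit exactly the facts the paper leaves implicit when it asserts that $f(\mc T)$ is an equilibrium flow on the residual graph and that ``one can continue along these lines''; your explicit handling of the destination-as-permanent-sink convention in \eqref{eq:E-dynamics} likewise addresses a point the paper glosses over. What the paper's route buys: the cut-by-cut propagation directly supports the remark following the proposition --- transferring is equivalent to existence of a directed path from $0$ to $n$ in $(\mc V(\mc T),\mc E(\mc T))$ --- which your global identity does not immediately yield without an additional flow-tracing or flow-decomposition argument. Two minor notes: the fixed-point property of the flow variable that you flag as needing verification is already granted by the paper's definition of $\mc T$ as the finite steady-state time, so no extra work is needed there; and your step (i) implicitly uses stationarity of $\mc V(\mc T)$ to guarantee that every active node has a nonempty active out-set (otherwise \eqref{eq:routingupdate} would not apply and conservation could fail at that node) --- this follows from \eqref{eq:V-dynamics} by the same contradiction-with-stationarity argument as your (iii), and is worth stating when writing the proof in full.
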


\begin{remark}
The analyses of conventional models for cascading failure focus primarily on the connectivity of the residual graph $(\mc V(\mc T), \mc E(\mc T))$. For the setting of this paper, the proof of Proposition~\ref{prop:transferring-notion-equivalence} can be used to easily show that there exists a directed path from $0$ to $n$ in $(\mc V(\mc T), \mc E(\mc T))$ if and only if the associated network flow dynamics is transferring.
\end{remark}

\subsection{Simple settings}
\label{sec:simple-settings}
Before describing the BPA, we present results for the maximal margin of resilience and the maximally resilient routing policy in simple settings.
We use these calculations merely to motivate the key steps in the construction of BPA, and refer to  Theorems~\ref{thm:upper-bound} and \ref{thm:lower-bound} for their rigorous justification.

\begin{figure}[htb!]
\begin{center}
\begin{minipage}[c]{.3\textwidth}
\begin{center}
\includegraphics[width=3cm]{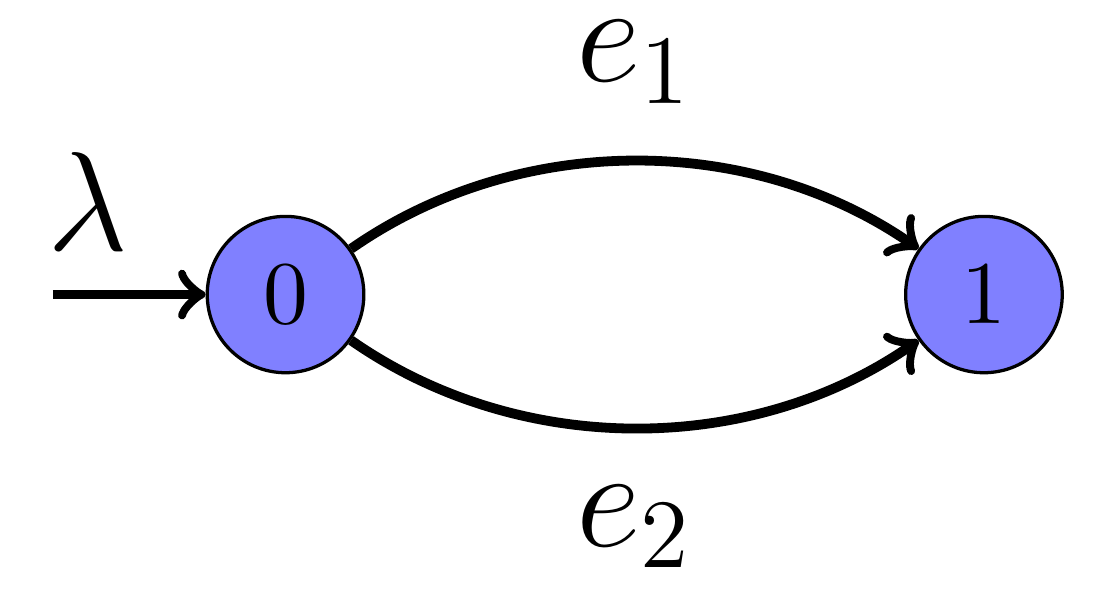}
\end{center}
\end{minipage}
\begin{minipage}[c]{.3\textwidth}
\begin{center}
\includegraphics[width=4.5cm]{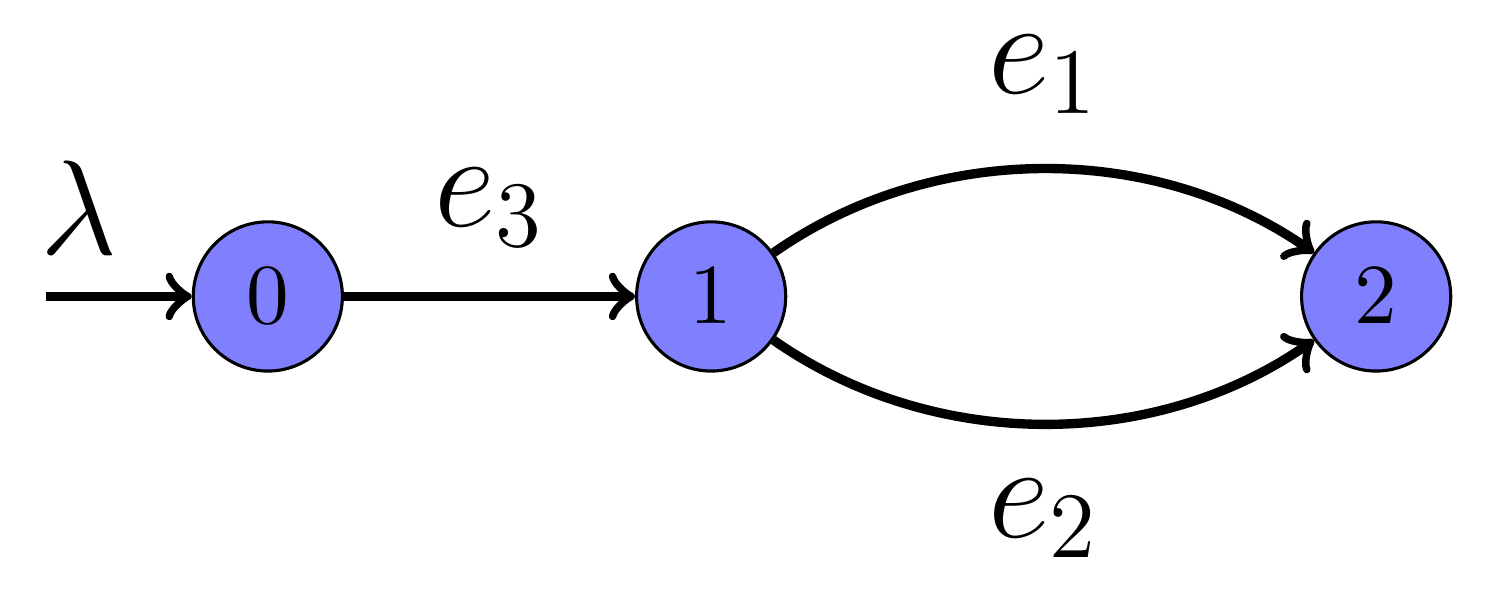} 
\end{center}
\end{minipage}
\begin{minipage}[c]{.3\textwidth}
\begin{center}
\includegraphics[width=3.5cm]{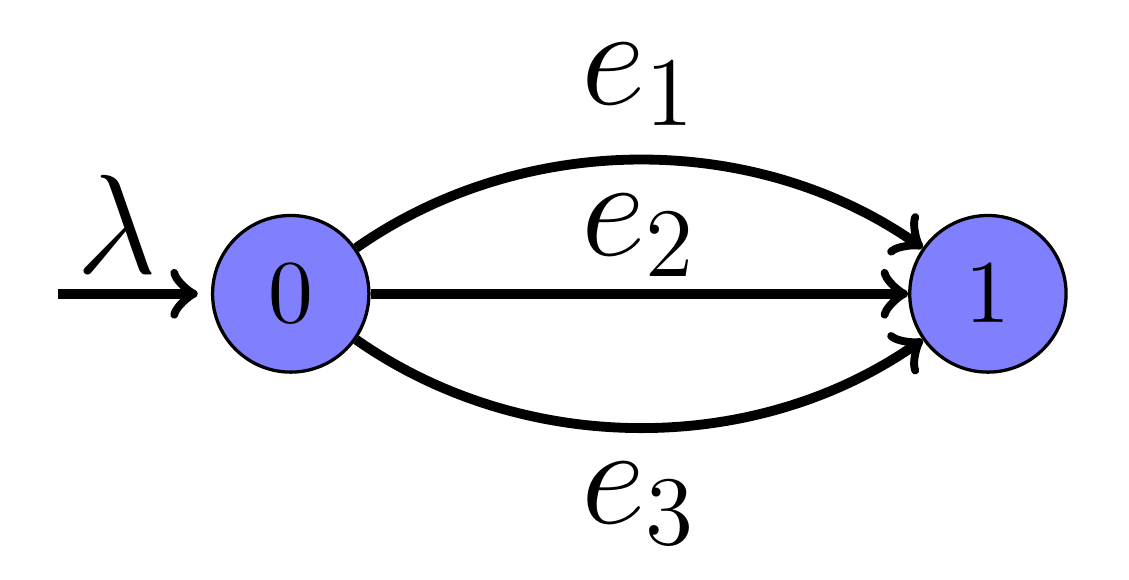} 
\end{center}
\end{minipage}

\begin{minipage}[c]{.3\textwidth}
\begin{center}
(a)
\end{center}
\end{minipage}
\begin{minipage}[c]{.3\textwidth}
\begin{center}
(b)
\end{center}
\end{minipage}
\begin{minipage}[c]{.3\textwidth}
\begin{center}
(c)
\end{center}
\end{minipage}
\end{center}

\caption{Illustrative simple network topologies.}
\label{fig:simple-setting}
\end{figure}

Let $\mc N_1$ denote the flow network illustrated in Figure~\ref{fig:simple-setting}(a), with $\mc E_1=\{e_1,e_2\}$ maximum link flow capacities $\flowmax_i$, $i=1,2$. Following Remark~\ref{rem:link-monotonicity}, the routing policy at node $0$ is completely specified by any vector $x \in \mc X(\mc E_1,\lambda)$. Considering all the possible outcomes of the disturbance process, the margin of resilience is given by:
\begin{align}
\label{eq:2nodes-mor}
\mor(\lambda,\mc N_1, x) & =\min\left\{[\flowmax_1-x_1]^+ + [\flowmax_2-\lambda]^+, [\flowmax_2-x_2]^+ + [\flowmax_1-\lambda]^+, [\flowmax_1-x_1]^+ + [\flowmax_2-x_2]^+ \right\} \\
\nonumber
& = \min\left\{[\flowmax_1-x_1]^+ + [\flowmax_2-\lambda]^+, [\flowmax_2-x_2]^+ + [\flowmax_1-\lambda]^+ \right\},
\end{align}
 where first term inside the $\min$ in the right hand side of \eqref{eq:2nodes-mor} corresponds to the inactivation of link $e_1$ at $t=2$ under $\delta(1)=\left[[\flowmax_1-x_1]^+ \, \, \, 0 \right]'$ followed by inactivation of $e_2$ at $t=3$ under $\delta(2)=\left[0 \, \, \, [\flowmax_2-\lambda]^+\right]'$, the second term corresponds to inactivation of $e_2$ at $t=2$ followed by inactivation of $e_1$ at $t=3$, and the third term corresponds to the simultaneous inactivation of links $e_1$ and $e_2$ at $t=2$ under $\delta(1)=\left[[\flowmax_1-x_1]^+ \, \, \, [\flowmax_2-x_2]^+ \right]'$.
  Therefore, the maximum possible margin of resilience, and the corresponding maximally resilient routing policy are, respectively, given by $\mor^*(\mc N_1,\lambda)=\max_{x \in \mc X(\mc E_1,\lambda)} \mor(\mc N_1, \lambda, x)$ and $x^*=\argmax_{x \in \mc X(\mc E_1,\lambda)} \mor(\mc N_1, \lambda, x)$, which, using simple algebra, can be computed as (see Figure~\ref{fig:Rv0} (a) for an illustration):
 \be
\label{eq:Rv-caseb}
\mor^*(\mc N_1, \lambda) = \left\{\ba{lcl} \flowmax_{1} + \flowmax_{2}-3\lambda/2 & \se & \lambda \in \left[0, \minCnode \right], \\ 
\minCnode/2+ \maxCnode-\lambda & \se & \lambda \in \left[ \minCnode, \maxCnode \right], \\
(\flowmax_{1}+\flowmax_{2})/2-\lambda/2 & \se & \lambda \in \left[ \maxCnode, \flowmax_{1} + \flowmax_{2}\right], \\
0 & \se & \lambda \geq \flowmax_{1} + \flowmax_{2} \,,
\ea \right.
\ee
where $\minCnode:=\min \left \{\flowmax_{1},\flowmax_{2} \right\}$ and $\maxCnode:=\max \left \{\flowmax_{1},\flowmax_{2} \right\}$, and (see Figure~\ref{fig:Rv0} (b) for an illustration)
\begin{equation}
\label{eq:Gstar1-caseb}
x^*_{1}(\lambda) = \left\{\ba{lcl} \lambda/2 & \se & \lambda \in \left[0, \minCnode \right] \\
\flowmax_{1}/2 + \left(\lambda - \left(\flowmax_{1}+\flowmax_{2}\right)/2 \right)  \mathbb{1}_{\flowmax_{1} > \flowmax_{2}} & \se & \lambda \in \left[\minCnode,
\maxCnode \right] \\
\lambda/2 + \left(\flowmax_{1}-\flowmax_{2}\right)/2 & \se & \lambda \in \left[\maxCnode, \flowmax_{1} + \flowmax_{2} \right].
\ea \r.
\end{equation}

\begin{figure}[htb!]
\begin{center}
\includegraphics[width=0.275\linewidth]{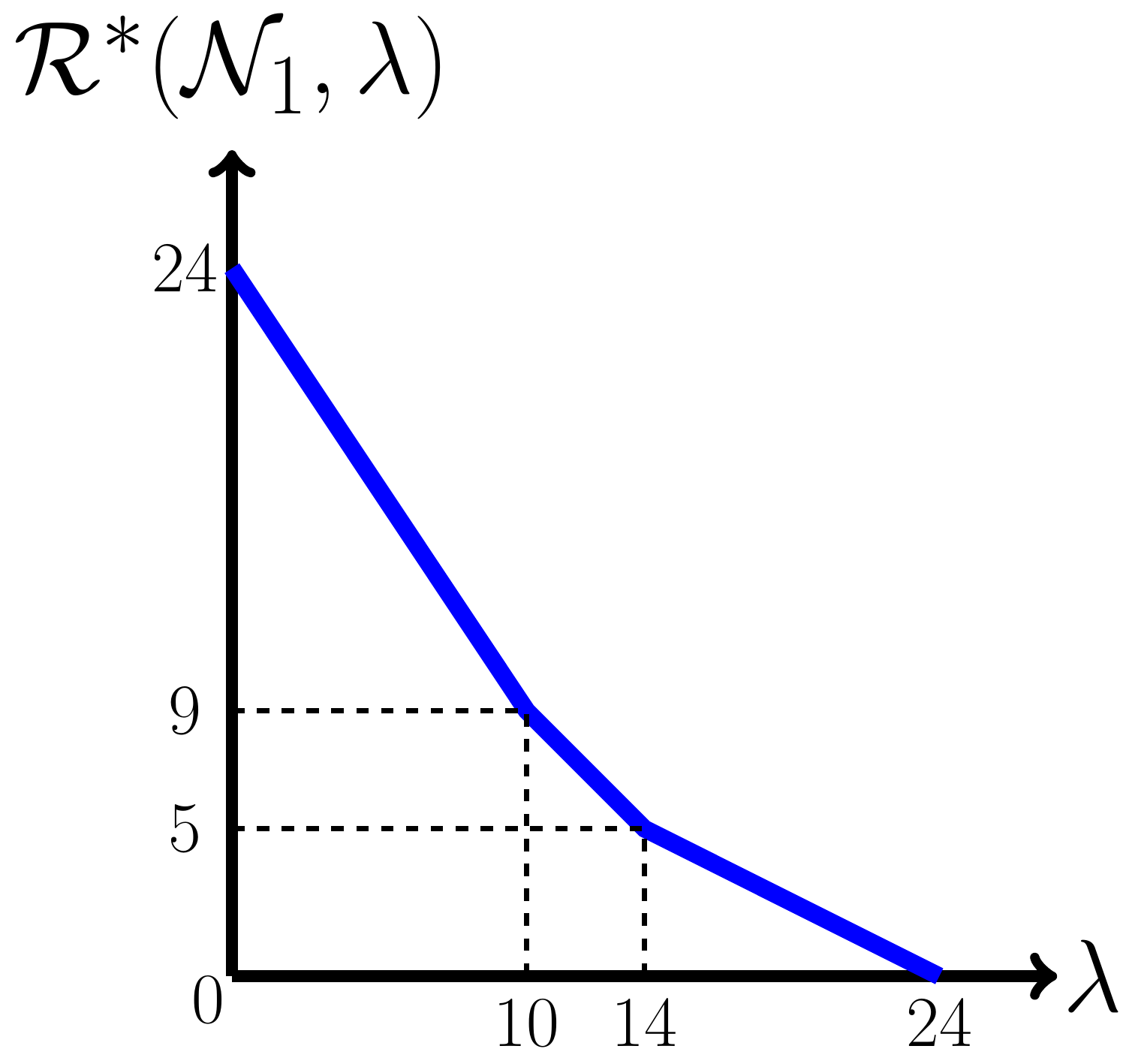} \hspace{0.2in}
\includegraphics[width=0.3\linewidth]{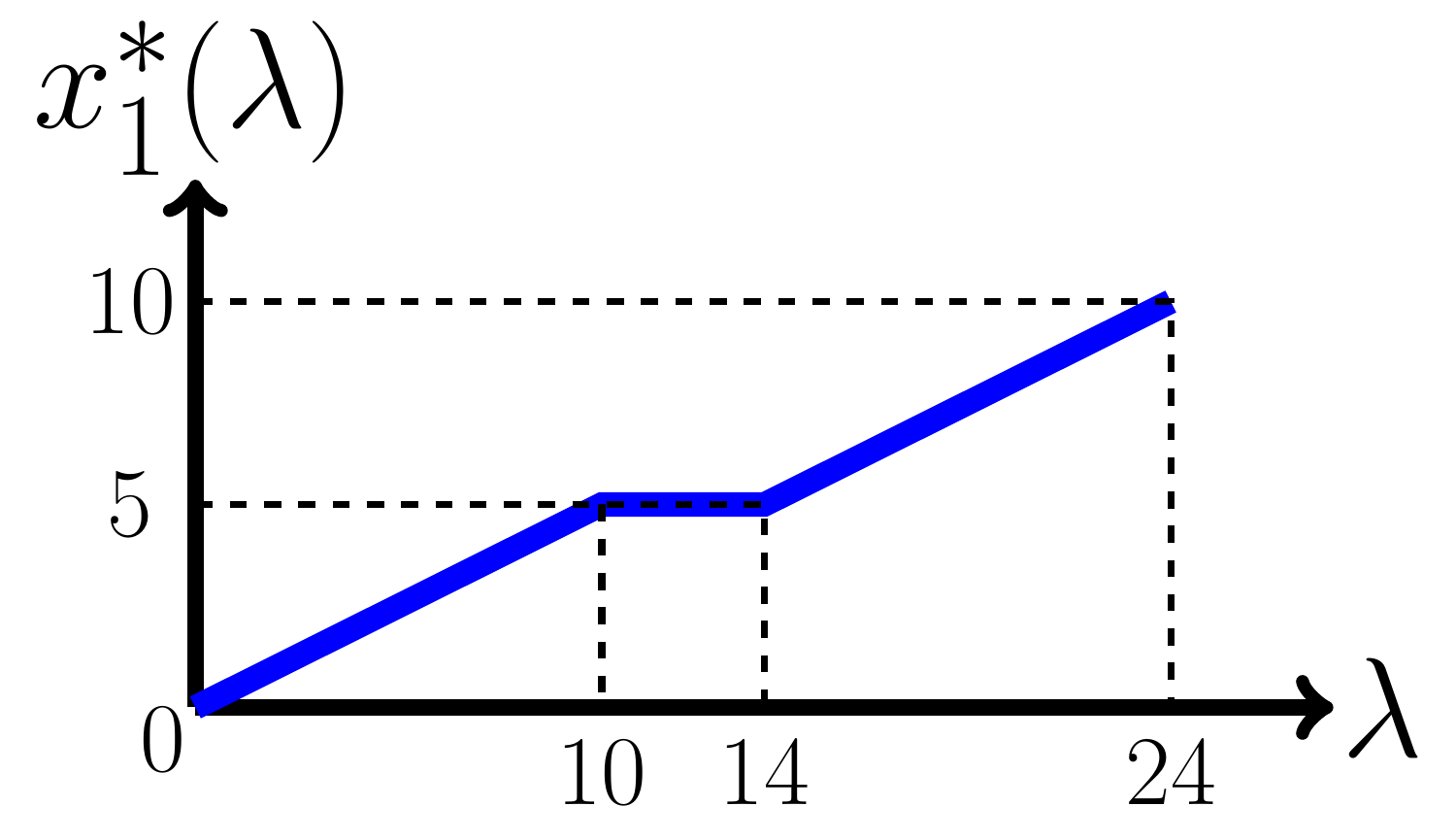} \hspace{0.2in}
\includegraphics[width=0.275\linewidth]{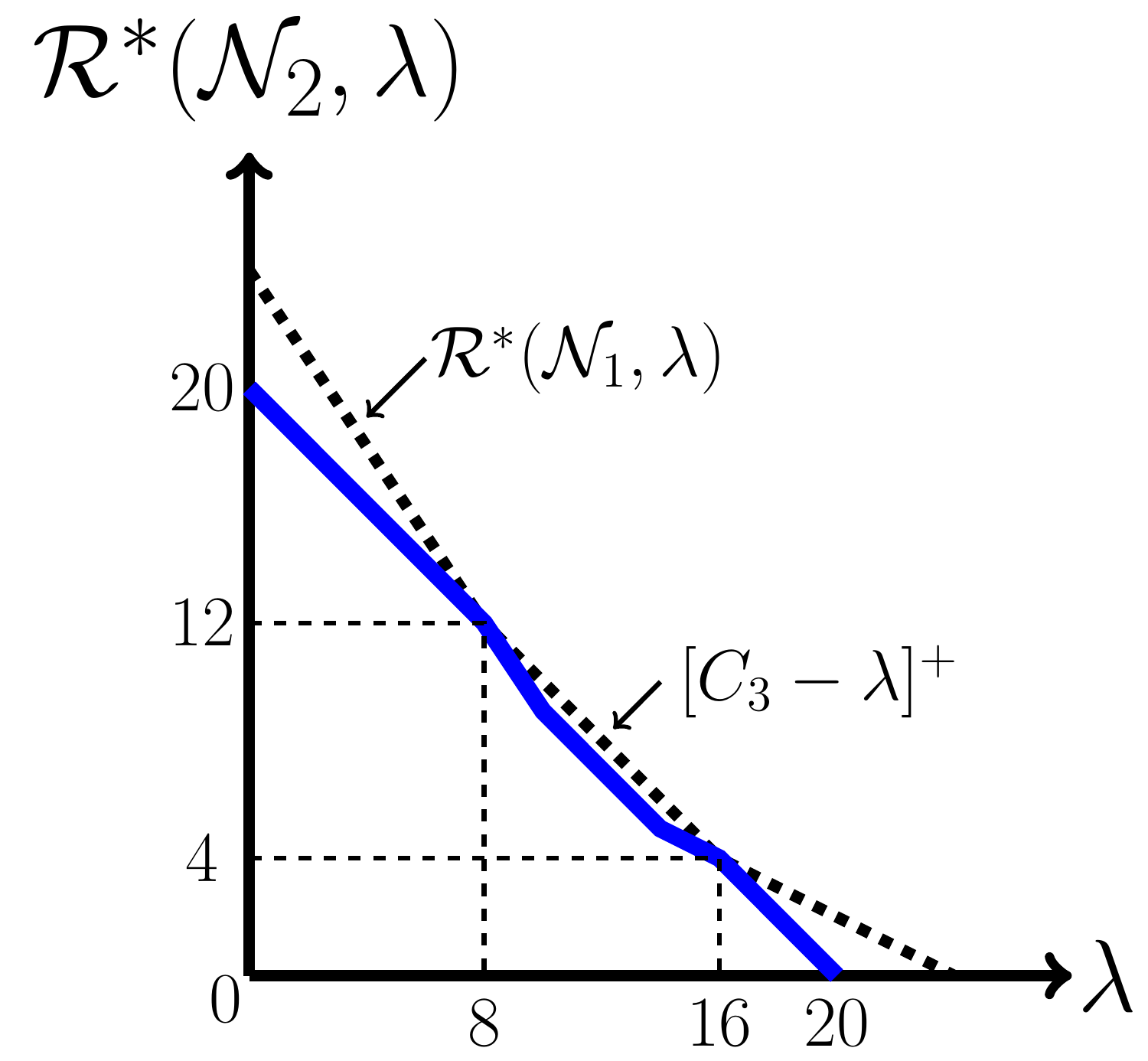} \\
(a) \qquad  \quad \qquad \qquad \qquad \qquad \qquad (b) \qquad  \qquad \qquad \qquad \qquad \quad \qquad (c)
\end{center}
\caption{Illustrations of (a) $\mc R^*(\mc N_1,\lambda)$ and (b) $x_1^*(\lambda)$ for $\flowmax_{1} = 10$ and $\flowmax_{2}=14$; and 
(c) $\mc R^*(\mc N_2,\lambda)$ for $\flowmax_{3}=20$, $\flowmax_{1} = 10$ and $\flowmax_{2}=14$. 
}
\label{fig:Rv0}
\end{figure}
Figure~\ref{fig:Rv0} (b) illustrates that proportional routing policies (e.g., see \eqref{eq:prop-routing}) where the proportionality constants are independent of the arrival rate are in general not maximally resilient. 
For $\mc N_1$, comparing \eqref{eq:2nodes-mor} and \eqref{eq:centralized-routing}, we see that $\mor^*(\mc N_1,\lambda)=S(\mc E_1, \lambda)$. 

Let $\mc N_2$ denote the flow network illustrated in Figure~\ref{fig:simple-setting}(b), with maximum link flow capacities $\flowmax_i$, $i=1, 2, 3$. 
Following the dynamics in \eqref{eq:E-dynamics}-\eqref{eq:C-evolution}, link $e_3$ can become inactivate because of saturation of link $e_3$ or because of deactivation of node $\tau_{e_3}=1$. Accordingly, the maximum margin of resilience is the minimum of $[\flowmax_3-\lambda]^+$ and the maximum margin of resilience of the sub-network rooted at node $1$ when the inflow at node $1$. 
%
From our analysis of $\mc N_1$, the latter is equal to $\mc R^*(\mc N_1,\lambda)$, and hence the maximally resilient routing policy at node $1$ is the same as in \eqref{eq:Gstar1-caseb}. An illustration of $\mc R^*(\mc N_2,\lambda)$ is given in Figure~\ref{fig:Rv0} (c). In this case, $S(\mc E, \lambda)$ from \eqref{eq:centralized-routing} is in general not equal to $\mc R^*(\mc N_2, \lambda)$. This is because \eqref{eq:centralized-routing} does not take into account the fact link $e_3$ could become inactive due to inactivation of node $1$.

%
%
Before proceeding with our next example, we define feasible flow vectors over active local links at node $v$. This will be the set of feasible control actions for the routing policy at node $v$. 
For $\mc J \subseteq \mc E_v^+$, $\rlb \in \real_+^{\mc E_v^+}$, $v \in \mc V \setminus \{n\}$ and $\mu \geq 0$, let
\be
\label{eq:X-v-def}
\mc X_v(\mc J, \rlb, \mu):=\Big\{x \in \real_+^{\mc J} \, : \, r \leq x \leq \flowmax; \, \onebf' x  = \mu \Big\}.
\ee
\eqref{eq:X-v-def} is a local version of the set $\mc X(\mc J, \lambda)$ used in \eqref{eq:centralized-routing} with two exceptions. First, \eqref{eq:X-v-def} is defined for a generic inflow $\mu$, since the inflow at $v$ is time-varying. Second, as will become clear in the construction of BPA, the presence of element-wise lower bound $\rlb$ allows one to impose the link monotonicity property defined in \eqref{eq:link-monotonicity}. 

Let $\mc N_3$ denote the flow network illustrated in Figure~\ref{fig:simple-setting}(c), with maximum link flow capacities $\flowmax_i$, $i=1, 2, 3$. Let $S(\mc J, \rlb, \lambda)$ be the margin of resilience when the set of active links at node $v$ is $\mc J \subseteq \mc E_v^+$, and when the action of routing policy is constrained to be element-wise greater than $\rlb$. For all $e \in \mc E_v^+$, $S(e,\rlb,\lambda)=0$ if $\lambda < r_e$ or $\lambda \geq \flowmax_e$, and $S(e,\rlb,\lambda)=C_e-\lambda$ otherwise. For $|\mc J| \geq 2$ in $\mc N_3$, one can write the following recursion:
\be
\label{eq:N3-recursion}
S(\mc J,\rlb,\lambda)=\max_{x \in \mc X_0(\mc J, \rlb, \mu)} \min_{e \in \mc J} \Big(\flowmax_e - x_e + S(\mc J \setminus \{e\}, x, \lambda) \Big).
\ee
Inside the minimization in \eqref{eq:N3-recursion}, the term $\flowmax_e-x_e$ is the difference  between the capacity of link $e$ when the flow on it is $x_e$, and hence represents the minimal disturbance required to make link $e$ inactive under routing action $x$. The term
$S(\mc J \setminus \{e\}, x, \lambda)$ is the magnitude of disturbance that is sufficient to make the network non-transferring after link $e$ has become inactive, under the constraint that the flows on links in $\mc J \setminus \{e\}$ can not be element-wise less than the flow $x$ on them when $e$ was active. The margin of resilience for $\mc N_3$ is then $S(\mc E, \zerobf, \lambda)$. The ability of \eqref{eq:N3-recursion} to incorporate link monotonicity constraints yields a sharper upper bound in comparison to \eqref{eq:centralized-routing}. 

The recursion in \eqref{eq:N3-recursion} can be used to derive margin of resilience for a network with arbitrary number of links between nodes $0$ and $1$ in Figure~\ref{fig:simple-setting} (c). However, in order to handle networks with arbitrary number of links between nodes $1$ and $2$ in Figure~\ref{fig:simple-setting} (b), we need to include the effect of inactivation of downstream nodes into \eqref{eq:N3-recursion}. This is the basis of the Backward Propagation Algorithm, which we describe next.

\subsection{The Backward Propagation Algorithm (BPA)}
\label{subsec:BPA}
We now describe the Backward Propagation Algorithm (BPA) to compute a tighter upper bound on the margin of resilience in comparison to Proposition~\ref{prop:centralized-upper-bound}. The same algorithm will also motivate the design of BPA routing which will be proven to be maximally resilient under certain sufficient conditions.

Assumption \ref{ass:acyclicity} implies that one can find a (not necessarily unique) topological ordering of the node set $\mc V=\zerountil{n}$ (see, e.g., \cite{Cormen.Leiserson:90}). We shall assume to have fixed one such ordering in such a way that $\mc E^-_{v}\subseteq\bigcup_{0\le u<v}\mc E^+_{u}$ for all $v=1,\ldots,n$.
We recall that the \emph{depth} of a graph $(\mc V, \mc E)$ satisfying Assumption~\ref{ass:acyclicity} is the length of the longest directed path in $(\mc V, \mc E)$.

\begin{algorithm}[htb!]
\caption{The Backward Propagation Algorithm (BPA)}
\begin{algorithmic}[1]
\STATE $S(\mc E_n^+,\rlb,\mu):=+\infty$ for all $\rlb \in \real_+^{\mc E_n^+}$ and $\mu \geq 0$ \Comment{destination node}
\FOR {$v=n-1,n-2,\ldots ,0$} \Comment{construct a series of intermediate functions for every node starting with $n-1$, and going backward up to the origin}

\STATE for all $\rlb \in \real_+^{\mc E_v^+}$ and $\mu \geq 0$, 
$$S(\emptyset,\rlb, \mu) = 0$$
$$S(\mc J, \rlb, \mu) :=0 \text{ if } \mc X_v(\mc J, \rlb, \mu)=\emptyset, \quad \, \forall \, \, \emptyset \neq \mc J \subseteq \mc E_v^+,$$
\begin{equation}
\label{eq:Se-def}
S_e(\mu)=S(e,\rlb,\mu):=\min\Big\{\flowmax_e-\mu, S(\mc E_{\tau_e}^+, \zerobf, \mu) \Big \} \quad \forall \, e \in \mc E_v^+.
\end{equation}

\STATE iteratively compute $S(\mc J, \rlb, \mu)$ for $\mc J \subseteq \mc E_v^+$ of increasing size, starting with sets of size 2: 
\begin{equation}
\label{eq:S-tilde-def-new}
{S}(\mc J,\rlb,\mu)  :=\max_{x \in \mc X_v(\mc J, \rlb, \mu)} \, \, 
\min_{e \in \mc J} \Big( S_e(x_e) + S\big(\mc J \setminus \{e\}, x, \mu \big) \Big)
\end{equation}

\ENDFOR
\end{algorithmic}
\label{algo:BPA}
\end{algorithm}

Note that $\rlb$ appears only in the constraint set in the right hand side of \eqref{eq:S-tilde-def-new}. 
The fundamental difference between the recursions in \eqref{eq:S-tilde-def-new} and \eqref{eq:N3-recursion} is in the first term inside the minimization in \eqref{eq:S-tilde-def-new}. This term represents the minimum magnitude of disturbance required to make a link inactive. While it was sufficient to consider the disturbance on link $e$ for this purpose in $\mc N_3$, for general networks, \eqref{eq:Se-def} implies that the minimal disturbance could correspond to making the downstream node inactive. Therefore, the recursive computations at node $v$ depend on the outcome of the computations done for nodes downstream to $v$. 
The Backward Propagation Algorithm derives its name from the central feature of the algorithm, where an intermediate node collects $S(\mc J, \rlb, \mu)$ functions from its downstream nodes, performs updates with respect to local network parameters, and transmits it to upstream nodes. As such, the BPA can be executed in a distributed fashion. 

Complementary to the maximization in \eqref{eq:S-tilde-def-new} is the set of corresponding maximizers: 
\begin{equation}
\label{eq:g-def-1}
g\left(\mc J,\rlb, \mu \right) := \argmax_{x \in \mc X_v(\mc J, \rlb, \mu)} \, \, \min_{e \in \mc J} \Big( S_e(x_e) + S\left(\mc J \setminus \{e\}, x, \mu\right) \Big).
\end{equation}
A simple implication of \eqref{eq:g-def-1} which is used heavily in the paper is:
\be
\label{eq:g-def-implication}
z \in g\left(\mc J,\rlb, \mu \right) \implies z \geq r.
\ee

\subsection{Upper bound on the margin of resilience}
The quantity $S(\mc E_0^+,\zerobf,\lambda)$ computed by BPA is next shown to be an upper bound  on the margin of resilience under any distributed routing policy. For brevity in notation, we let $S^*(\mc N, \lambda):=S(\mc E_0^+,\zerobf,\lambda)$.
\begin{theorem}
\label{thm:upper-bound}
Let $\mc N$ be a network satisfying Assumption~\ref{ass:acyclicity} and with $\lambda$ a constant inflow at the origin node. Then, for any distributed routing policy $\mc G$, there exists a disturbance process $(\delta(t))_{t\ge1}$ with $\Dnorm(\delta) \leq S^*(\mc N, \lambda)$ under which the associated network flow dynamics \eqref{eq:E-dynamics}-\eqref{eq:C-evolution} is not transferring.
\end{theorem}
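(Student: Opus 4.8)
The plan is to produce, for an arbitrarily fixed distributed routing policy $\mc G$, an explicit (adaptive) disturbance process whose magnitude is at most $S^*(\mc N,\lambda)=S(\mc E_0^+,\zerobf,\lambda)$ and under which node $0$ eventually becomes inactive. By Proposition~\ref{prop:transferring-notion-equivalence} this is equivalent to the dynamics being non-transferring, and by \eqref{eq:V-dynamics} node $0$ deactivates precisely when all links of $\mc E_0^+$ have deactivated. Thus the theorem follows from a stronger local statement $P(v,\mc J,\rlb,\mu)$, applied at $v=0$ with $\mc J=\mc E_0^+$, $\rlb=\zerobf$, $\mu=\lambda$, which reads: if at some time the active outgoing links of $v$ are $\mc J\subseteq\mc E_v^+$, the inflow at $v$ equals $\mu$, and the current flow on $\mc J$ is $\ge\rlb$ componentwise, then against $\mc G$ there is a disturbance supported on the links reachable from $v$, of total magnitude at most $S(\mc J,\rlb,\mu)$, after which $v$ becomes inactive. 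I would prove $P$ by a backward (outer) induction on the fixed topological order $v=n,n-1,\dots,0$, nested with an (inner) induction on $|\mc J|$; the ordering gives $\tau_e>v$ for every $e\in\mc E_v^+$, so the heads of $v$'s outgoing links are strictly downstream and the outer hypothesis is available for them, while the convention $S(\mc E_n^+,\cdot,\cdot)=+\infty$ makes $P(n,\cdot)$ vacuously true and seeds the induction.

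\emph{Single link and inner base.} For $|\mc J|=1$ I would justify \eqref{eq:Se-def}: a link $e$ carrying its whole inflow $\mu$ can be deactivated either by reducing its residual capacity down to $\mu$ at cost $\flowmax_e-\mu$ (direct overload, via \eqref{eq:E-dynamics}), or by deactivating its head node $\tau_e$, which by the outer hypothesis $P(\tau_e,\cdot)$ costs at most $S(\mc E_{\tau_e}^+,\zerobf,\cdot)$ and forces $e$ out through \eqref{eq:E-dynamics}. The adversary takes the cheaper option, giving cost $\le S_e(\mu)$; the infeasible cases $\mc X_v(\mc J,\rlb,\mu)=\emptyset$ correspond to a link already doomed at zero extra cost, consistent with $S=0$.

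\emph{Inductive step.} For $|\mc J|\ge2$ the routing policy responds to the current configuration with some flow vector $x=G^v(\mc J,\mu)$, which lies in $\mc X_v(\mc J,\rlb,\mu)$ since $\onebf'x=\mu$ and, by link monotonicity \eqref{eq:link-monotonicity}, $x\ge\rlb$ (the bound $\rlb$ records the flows before the last sibling deactivation). The adversary then selects $e\in\mc J$ attaining the inner minimum in \eqref{eq:S-tilde-def-new} and deactivates it at cost $\le S_e(x_e)$ by the single-link argument. Once $e$ is gone, node $v$ reroutes $\mu$ over $\mc J\setminus\{e\}$, and link monotonicity \eqref{eq:link-monotonicity} guarantees the new flows dominate $x$ on the surviving links, so the lower bound can be updated to $x$; the inner hypothesis applied to $(\mc J\setminus\{e\},x,\mu)$ completes the deactivation of $v$ at additional cost $\le S(\mc J\setminus\{e\},x,\mu)$. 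Summing, the total is at most $S_e(x_e)+S(\mc J\setminus\{e\},x,\mu)$, and since the realized $x$ is only one feasible point this is dominated by the $\max_x\min_e$ defining $S(\mc J,\rlb,\mu)$ in \eqref{eq:S-tilde-def-new}.

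\emph{Main obstacle.} The delicate point is that \eqref{eq:Se-def} charges the cost of deactivating $\tau_e$ as $S(\mc E_{\tau_e}^+,\zerobf,x_e)$, i.e.\ as if $\tau_e$'s inflow were exactly the flow $x_e$ on $e$, whereas in a general DAG $\tau_e$ may have several incoming links and a partly depleted active out-set. I expect this to be the crux, and I would resolve it with the monotonicity properties of $S$ (the technical lemma of the appendix): $S(\mc J,\rlb,\mu)$ is non-increasing in $\mu$ and in $\rlb$, and non-decreasing when the active out-set is enlarged. Since the genuine inflow into $\tau_e$ is at least $x_e$ (link $e$ alone contributes $x_e$) and its genuine active out-set is contained in $\mc E_{\tau_e}^+$, the outer hypothesis applied to the true configuration yields a cost no larger than $S(\mc E_{\tau_e}^+,\zerobf,x_e)$, exactly the term appearing in \eqref{eq:Se-def}. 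The remaining bookkeeping — that the disturbances built for successive deactivations are simply superposed, never exceed $\flowmax_e$ on any link (each reduction only lowers a capacity toward a current flow), and that overlaps of the downstream subnetworks of distinct siblings can only make a link die earlier and hence only lower the realized magnitude — keeps the total within $S^*(\mc N,\lambda)$ and yields an admissible disturbance process, establishing $P(0,\mc E_0^+,\zerobf,\lambda)$ and with it the theorem.
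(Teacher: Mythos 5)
Your proposal is correct and follows essentially the same route as the paper: your local statement $P(v,\mc J,\rlb,\mu)$ is precisely the paper's Lemma~\ref{lem:upperbound-general-lemma}, proved there by the same double induction (network size paired with $|\mc E_v^+(0)|$ in the paper, topological order paired with $|\mc J|$ in your write-up), with link monotonicity \eqref{eq:link-monotonicity} guaranteeing feasibility of the realized flow $x=G^v(\mc J,\mu)\ge\rlb$ and the appendix monotonicity lemma (Lemma~\ref{lem:s-function-decreasing}) resolving the multiple-incoming-links/reduced-capacity issue exactly as you propose. The one substantive difference is that what you defer to ``remaining bookkeeping'' --- sibling links of the attacked link dying spontaneously under rerouting during the cascade, and earlier capacity reductions overlapping the surviving subnetworks --- is handled in the paper by the explicit telescoping inequalities \eqref{eq:Se-telescopic}--\eqref{eq:Se-telescopic-base} over the simultaneous-failure times $t_1\le\cdots\le t_m$, computed on residual graphs via $S^t$; this rests on the very lemma you cite, so your argument closes with the tools you already invoke.
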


\begin{remark}
\begin{enumerate}
\item Theorem~\ref{thm:upper-bound} implies that $\mor(\mc N,\lambda,\mc G) \leq S^*(\mc N, \lambda)$ for all distributed routing policies $\mc G$, and hence $\mor^*(\mc N,\lambda) \leq S^*(\mc N, \lambda)$.
\item While the statement of Theorem~\ref{thm:upper-bound} merely suggests the existence of a worst-case disturbance process, its proof in Section~\ref{subsec:ub-proof} explicitly constructs one such disturbance process. Therefore, in scenarios when the upper bound in Theorem~\ref{thm:upper-bound} is tight, this constructive procedure can also be used to identify the most vulnerable links of the network for adversarial setting.   

\item The computational complexity of BPA has tradeoffs in comparison to \eqref{eq:centralized-routing}. On one hand, while the recursion in \eqref{eq:centralized-routing} involves all elements in $2^{\mc E}$, BPA involves all elements only in $\cup_v 2^{\mc E_v^+}$, which is much smaller in comparison, especially when $|\mc V|$ is large.  On the other hand, \eqref{eq:centralized-routing} involves computation only for a fixed $\lambda$, whereas BPA involves computations, in general, for all $\mu \in [0,\lambda]$ and $r \in \mc X_v(\mc J, \zerobf, \mu)$. Moreover, whereas each recursion in \eqref{eq:centralized-routing} is a convex optimization problem (see Remark~\ref{rem:centralized-routing} (ii)), BPA does not enjoy this property in general. This is because, under \eqref{eq:Se-def}, the expression inside the minimization in \eqref{eq:S-tilde-def-new} is in general not affine in $x$, e.g., see Figure~\ref{fig:Rv0} (c) for an illustration. 
\end{enumerate}
\end{remark}

\subsection{BPA routing and lower bound on the margin of resilience}
In this section, we develop lower bounds for $\mor^*(\mc N,\lambda)$.
This will be done by analyzing a specific distributed routing policy, called \emph{BPA-routing}, whose construction is inspired by the Backward Propagation Algorithm. 
BPA routing is a routing policy that satisfies the following for all $v \in \mc V \setminus \{n\}$, $\mu \geq 0$: 
\be
\label{eq:BPA-routing}
\begin{split}
\rlb^*:=G^v(\mc E_v^+,\mu) & \in g\left(\mc E_v^+, \zerobf, \mu \right), \\
G^v\left(\mc J,\mu \right) & \in g\left(\mc J, \rlb^*, \mu \right), \quad \mc J \subset \mc E_v^+.
%
%
\end{split}
\ee
BPA routing derives its name from the fact that it relies on the function $g(\mc J, \rlb, \mu)$ from \eqref{eq:g-def-1}, which is directly related to the central computation in the BPA. However, note that the lower bound $\rlb^*$ in \eqref{eq:BPA-routing} is independent of $\mc J$ and $\mu$, and is always equal to the action of the routing policy under the same inflow $\mu$, when all local links are active, and with no lower bound constraint. Following \eqref{eq:g-def-implication}, $G^v(\mc J, \mu) \geq G^v(\mc E_v^+,\mu)$ for all $\mc J \subseteq \mc E_v^+$ and $\mu \geq 0$.
The following lemma formally states conditions under which BPA routing satisfies the link monotonicity in \eqref{eq:link-monotonicity}. 

\begin{lemma}
Let $\mc N$ be a network satisfying Assumption~\ref{ass:acyclicity} with $|\mc E_v^+| \leq 3$ for all $v \in \mc V \setminus \{n\}$. Then BPA routing defined in \eqref{eq:BPA-routing} and \eqref{eq:g-def-1} satisfies \eqref{eq:link-monotonicity}, and hence is a distributed routing policy as per Definition~\ref{def:oblivious}.
\end{lemma}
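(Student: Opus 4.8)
The plan is to verify the link-monotonicity inequality \eqref{eq:link-monotonicity}, namely $G^v(\mc J,\mu)\le G^v(\mc K,\mu)$ for every nested pair $\mc K\subseteq\mc J\subseteq\mc E_v^+$ and every $\mu\ge0$, working node by node. First I would dispose of the easy nodes: whenever $|\mc E_v^+|\le2$, Remark~\ref{rem:link-monotonicity} already guarantees \eqref{eq:link-monotonicity} for \emph{any} routing policy, so nothing is required there. This reduces the lemma to the case $|\mc E_v^+|=3$, say $\mc E_v^+=\{e_1,e_2,e_3\}$, where by \eqref{eq:BPA-routing} the reference action is $\rlb^*:=G^v(\mc E_v^+,\mu)\in g(\mc E_v^+,\zerobf,\mu)$ and $G^v(\mc J,\mu)\in g(\mc J,\rlb^*,\mu)$ for every proper subset $\mc J\subsetneq\mc E_v^+$.

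The core of the argument rests on the implication \eqref{eq:g-def-implication}, that every maximizer respects its own lower-bound constraint. I would split the nested pairs into two cases. In the first, the larger set is the full set, $\mc J=\mc E_v^+$: for any nonempty $\mc K\subsetneq\mc E_v^+$ we have $G^v(\mc K,\mu)\in g(\mc K,\rlb^*,\mu)$, so \eqref{eq:g-def-implication} gives $G^v(\mc K,\mu)\ge\rlb^*$ on the links of $\mc K$, that is $G^v_e(\mc K,\mu)\ge\rlb^*_e=G^v_e(\mc E_v^+,\mu)$ for every $e\in\mc K$, which is precisely \eqref{eq:link-monotonicity}. This single observation settles every comparison in which the larger set is $\mc E_v^+$.

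In the second case $\mc J\subsetneq\mc E_v^+$: since $|\mc E_v^+|=3$, a proper subset $\mc J$ has size one or two, so the only nontrivial nesting $\mc K\subsetneq\mc J$ is a singleton $\mc K=\{e\}$ inside a pair $\mc J=\{e,e'\}$. Here the unique split of $\mu$ consistent with conservation on $\{e\}$ forces $G^v(\{e\},\mu)=\mu$, whereas $G^v_e(\mc J,\mu)=\mu-G^v_{e'}(\mc J,\mu)\le\mu$ by nonnegativity of the flow on $e'$; hence $G^v_e(\mc J,\mu)\le G^v_e(\{e\},\mu)$, again \eqref{eq:link-monotonicity}. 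Comparisons with $\mc K=\emptyset$ are vacuous, and two distinct size-two subsets are never nested, so no further pairs arise. Combining the two cases with the reduction above establishes \eqref{eq:link-monotonicity} at every node, and since any $x\in g(\mc J,\rlb^*,\mu)\subseteq\mc X_v(\mc J,\rlb^*,\mu)$ automatically satisfies the conservation constraint $\onebf'x=\mu$, BPA routing is a distributed routing policy in the sense of Definition~\ref{def:oblivious}.

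I expect the genuinely delicate point to be recognizing \emph{why} the hypothesis $|\mc E_v^+|\le3$ is essential rather than incidental: it is exactly the threshold below which every nesting of \emph{proper} subsets degenerates to a singleton-inside-a-pair, a configuration handled trivially by conservation alone. As soon as $|\mc E_v^+|\ge4$ one must compare $g(\mc J,\rlb^*,\mu)$ with $g(\mc K,\rlb^*,\mu)$ for two nested proper subsets of sizes at least two sharing the \emph{same} lower bound $\rlb^*$; the implication \eqref{eq:g-def-implication} only certifies that both exceed $\rlb^*$ and is powerless to order them, so a finer structural analysis of the maximizer set $g$ would be needed. The substantive work of the proof is therefore the bookkeeping that confirms no such problematic pair can occur when $|\mc E_v^+|\le3$.
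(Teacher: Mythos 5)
Your proof is correct and follows essentially the same route as the paper's: the key step in both is that \eqref{eq:g-def-implication} forces $G^v(\mc K,\mu)\ge \rlb^*=G^v(\mc E_v^+,\mu)$ for any proper subset $\mc K$, while all remaining nested pairs (singleton inside a pair) are trivial by conservation. The paper simply compresses the case bookkeeping into the phrase ``the only non-trivial case,'' whereas you spell it out and correctly identify that $|\mc E_v^+|\le 3$ is exactly what makes every proper-subset nesting degenerate.
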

\begin{proof}
The only non-trivial case to prove is that, for every $j \in \mc J \subseteq \mc E_v^+$, $\mu \geq 0$, BPA routing satisfies: 
$$G^v(\mc E_v^+, \mu) \leq G^v(\mc E_v^+ \setminus \{j\}, \mu).$$
This is straightforward since \eqref{eq:g-def-implication} implies $G^v(\mc E_v^+ \setminus \{j\}, \mu) \geq \rlb^* = G^v(\mc E_v^+,\mu)$. 
\end{proof}


\memoryversion
{
We define two versions of BPA routing: without and with memory.  
The BPA routing without and with memory are distributed routing policies that, respectively,  satisfy the following for all $\mc J \subseteq \mc E_v^+$, $v \in \mc V \setminus \{n\}$, $\mu \geq 0$, $t \geq 0$: 
\be
\label{eq:BPA-routing}
G^v\left(\mc J,\mu \right) \in \argmax_{x \in \real_+^{\mc J}: \, r(t) \preceq x; \, \onebf.x=\mu} \, \, 
\min_{e \in \mc E_v^+(t)} \Big( S(e, r(t), x_e) + S\left(\mc J \setminus \{e\}, x, \mu\right) \Big), 
\ee
with 
\begin{equation}
\label{eq:g-def-1}
r(t) \equiv G^v(\mc E_v^+, \mu)
\end{equation}
or 
\begin{equation}
\label{eq:g-def-2}
r(t)=G^v\left(\mc E_v^+(t-1),\lambda_v(t-1)\right), \, t \geq 1. 
%
\end{equation}

\begin{remark}
\begin{enumerate}
\item We clarify that \eqref{eq:BPA-routing} and \eqref{eq:g-def-1} corresponds to BPA routing without memory, whereas \eqref{eq:BPA-routing} and \eqref{eq:g-def-2} corresponds to BPA routing with memory.
\item \eqref{eq:BPA-routing} mimics the recursion step \eqref{eq:S-tilde-def-new} in the BPA -- hence the name BPA routing.  
\item We suppress the dependence of $G^v(\mc J, \mu)$ on $t$ in \eqref{eq:BPA-routing} for BPA routing with memory, for brevity in notation. BPA routing with memory is maximally resilient within the class of routing policies with memory, for a larger class of networks in comparison to its memoryless counterpart.  However, we choose to restrict our formulation in 
 \eqref{eq:E-dynamics}-\eqref{eq:C-evolution} to memoryless routing policies for minimalism in presentation. Indeed, when stating our main result in Theorem~\ref{thm:lower-bound} for 
BPA routing with memory, we implicitly assume the extension of \eqref{eq:E-dynamics}-\eqref{eq:C-evolution} to routing policies with memory. Finally, note that the memory requirement in \eqref{eq:g-def-2} is fairly minimal: it consists of inflow and local active link set at the last time when one of the local links became inactive.

\end{enumerate}
\end{remark}


%

In the rest of this paper, we shall reserve the term BPA routing by default for its memoryless version.
}

In general, BPA routing is not readily maximally resilient for general networks which are not \emph{directed trees} \footnote{Recall that $(\mc V, \mc E)$ is a directed tree if the undirected graph underlying $(\mc V, \mc E)$ is a tree.}. 
This is because if a node $v$ has multiple incoming links, then inactivation of $v$ results in inactivation of all the incoming links. However, 
the BPA algorithm does not take into account such correlations between link inactivations and hence the upper bound in Theorem~\ref{thm:upper-bound} is conservative for networks which are not trees. While it is possible to modify BPA algorithm to reduce this conservatism, this comes with additional computational complexity and additional difficulty in formulating the corresponding maximally resilient routing policy. Therefore, we make the following directed tree assumption in this paper for deriving lower bound on the margin of resilience.  

\begin{assumption}
\label{ass:polytree}
$\left(\mc V \setminus \{n\}, \mc E \setminus \mc E_n^- \right)$ is a directed tree.
\end{assumption}

With a slight abuse of terminology, we refer to $\mc N$ satisfying Assumption~\ref{ass:polytree} as a tree. Note that, $\mc N$ satisfying Assumption~\ref{ass:polytree} is a tree rooted at the unique origin node.

\begin{remark}
For a network satisfying Assumption~\ref{ass:polytree}, if $\lambda$ is less than the min cut capacity, then $f(0)$ under BPA routing is an equilibrium flow. Recall that the max flow min cut theorem implies that this is also a necessary condition for the existence of an equilibrium flow.   
\end{remark}

BPA routing is maximally resilient on flow networks which are trees and \emph{symmetric}. Recall that 
 a weighted rooted tree of depth
 one is called symmetric if all the links outgoing from the root node have equal weights. 
A weighted rooted tree of depth greater than one is called symmetric if all the sub-trees rooted at the children\footnote{In a directed tree $(\mc V,\mc E)$, $u \in \mc V$ is called a children node of $v \in \mc V$ if $\mc E_u^- \cap \mc E_v^+ \neq \emptyset$.} nodes are symmetric, and identical to each other.

\begin{proposition}
\label{prop:symmetric}
Let $\mc N$ be a symmetric network satisfying Assumption~\ref{ass:polytree} with $\lambda  > 0$ a constant inflow at the origin node and BPA routing policy. Then, the associated network flow dynamics \eqref{eq:E-dynamics}-\eqref{eq:C-evolution}
is transferring for every disturbance process $(\delta(t))_{t\ge1}$ with $\Dnorm(\delta) < S^*(\mc N, \lambda)$.
\end{proposition}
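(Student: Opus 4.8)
The plan is to prove the contrapositive through a recursive lower bound on the total disturbance. By Proposition~\ref{prop:transferring-notion-equivalence}, the dynamics is transferring if and only if the origin node $0$ is still active at the termination time, and $0$ becomes inactive precisely when every link in $\mc E_0^+$ has become inactive. It therefore suffices to show that \emph{if} the origin becomes inactive under BPA routing, \emph{then} $\Dnorm(\delta)\ge S^*(\mc N,\lambda)=S(\mc E_0^+,\zerobf,\lambda)$; the strict inequality in the statement follows by contraposition. First I would record the structural consequence of symmetry together with the link-monotonicity built into BPA routing through \eqref{eq:BPA-routing} and \eqref{eq:g-def-implication}: since all outgoing links of a node share a common capacity and all subtrees rooted at its children are identical, the maximizer $g(\mc J,\rlb,\mu)$ in \eqref{eq:g-def-1} may be taken to split the inflow $\mu$ equally, so that BPA routing assigns flow $\mu/|\mc J|$ to each currently active outgoing link.

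The core is an induction in reverse topological order on the nodes (equivalently, on the depth of the subtree rooted at a node), establishing the invariant: \emph{if node $v$ becomes inactive while its inflow never exceeds $\mu$, then the total disturbance charged to the links of the subtree $T_v$ is at least $S(\mc E_v^+,\zerobf,\mu)$}. Fix the order $e_{(1)},\dots,e_{(k)}$ in which the links of $\mc E_v^+$ die, $k=|\mc E_v^+|$. Because $e_{(j)}$ is by definition the next link of $v$ to fail once $e_{(1)},\dots,e_{(j-1)}$ are inactive, the equal-split property keeps the flow on $e_{(j)}$ constant at $\mu/(k-j+1)$ over the interval preceding its failure. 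Each $e_{(j)}$ fails either by direct saturation, which costs at least $C_{e_{(j)}}-\mu/(k-j+1)$ of disturbance on that link, or because its head node $\tau_{e_{(j)}}$ becomes inactive, which by the inductive hypothesis costs at least $S(\mc E_{\tau_{e_{(j)}}}^+,\zerobf,\mu/(k-j+1))$ of disturbance inside the (pairwise disjoint) subtree $T_{\tau_{e_{(j)}}}$; in either case the cost is at least $S_{e_{(j)}}(\mu/(k-j+1))$ by \eqref{eq:Se-def}. Summing the disjoint contributions yields $\Dnorm(\delta)\ge\sum_{j=1}^{k}S_{e_{(j)}}(\mu/(k-j+1))$.

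It remains to identify this sum with $S(\mc E_v^+,\zerobf,\mu)$, which is where symmetry is essential and where I expect the main obstacle. The recursion \eqref{eq:S-tilde-def-new} propagates the lower bound $\rlb$ as the \emph{current} flow vector, whereas BPA routing always uses the fixed lower bound $\rlb^*=G^v(\mc E_v^+,\mu)$ in \eqref{eq:BPA-routing}; a priori these disagree from the second failure onward. The resolution under symmetry is that the equal-split optimum at every stage strictly dominates every accumulated lower bound — after $j$ failures each survivor carries $\mu/(k-j)$, exceeding the previously frozen value $\mu/(k-j+1)$ — so none of the link-monotonicity constraints is ever active and hence $S(\mc J,\rlb,\mu)=S(\mc J,\zerobf,\mu)$ for all subsets $\mc J$ and bounds $\rlb$ that arise in the cascade. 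Evaluated at the equal-split maximizer, $\min_{e}\big(S_e(x_e)+S(\mc J\setminus\{e\},x,\mu)\big)$ equals $S(\mc J,\zerobf,\mu)$ by \eqref{eq:S-tilde-def-new}, and unfolding the recursion over the identical symmetric branches telescopes exactly into $\sum_{j=1}^{k}S_{e}(\mu/(k-j+1))$, matching the bound of the previous paragraph. One final technical point is that the inflow to a deeper subtree is time-varying and attains its maximal value $\mu/(k-j+1)$ only at the instant $\tau_{e_{(j)}}$ dies; to apply the inductive hypothesis at this maximal inflow I would invoke monotonicity of $S(\mc J,\rlb,\cdot)$ in the inflow argument, which is the content of the technical lemma in the appendix. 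Specializing the invariant to $v=0$ and $\mu=\lambda$ gives $\Dnorm(\delta)\ge S(\mc E_0^+,\zerobf,\lambda)=S^*(\mc N,\lambda)$, as required; the base case consists of nodes all of whose children are the destination, where $S(\mc E_n^+,\cdot)=+\infty$ forces failure by saturation only and $S_e(\mu)=C_e-\mu$, recovering the two- and three-link computations of Section~\ref{sec:simple-settings}.
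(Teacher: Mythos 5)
Your proposal is correct in substance, but it takes a genuinely different route from the paper. The paper proves Proposition~\ref{prop:symmetric} in one line, as a corollary of Theorem~\ref{thm:lower-bound} (hence of Lemma~\ref{lem:lb-lem1}, Lemma~\ref{lem:monotonicity}, and the appendix Lemma~\ref{lem:s-function-decreasing}), after recalling that BPA routing on a symmetric tree is flow monotone. You bypass flow monotonicity entirely: under equal splitting, the flow on any surviving link of $v$ is at most $\mu/|\mc E_v^+(t)|$ whenever the inflow to $v$ is at most $\mu$, and this uniform bound alone lets you charge the $j$-th failure at least $S_e(\mu/(k-j+1))$ --- whether it occurs by saturation or by head-node inactivation --- on pairwise disjoint subtrees, with no appeal to monotonicity of flows in time. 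Along the way you extract a closed form for the BPA value on symmetric trees, $S(\mc J,\rlb,\mu)=\sum_{i=1}^{|\mc J|}S_e(\mu/i)$ whenever the equal split is feasible for $\rlb$, which the paper never makes explicit. A notable payoff: your argument works for arbitrary out-degree, whereas the paper's appeal to Theorem~\ref{thm:lower-bound} formally carries the hypothesis $|\mc E_v^+|\le 3$, which Proposition~\ref{prop:symmetric} itself does not impose; in that respect your proof covers the stated proposition more completely than the paper's own.

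Three details deserve tightening. (1) The telescoping identity needs both directions spelled out, not just the unfolding: the upper bound follows by pigeonhole (any feasible $x$ has some $x_e\ge\mu/|\mc J|$, hence $S_e(x_e)\le S_e(\mu/|\mc J|)$, combined with $S(\mc J\setminus\{e\},x,\mu)\le S(\mc J\setminus\{e\},\zerobf,\mu)$ from Lemma~\ref{lem:s-function-decreasing}); the lower bound follows by evaluating \eqref{eq:S-tilde-def-new} at the equal split and applying the induction with lower-bound vector $(\mu/|\mc J|)\onebf$, which is dominated by the next equal split $(\mu/(|\mc J|-1))\onebf$ --- this is exactly your ``constraints never bind'' claim, and it does go through. (2) The set $g(\mc J,\rlb,\mu)$ in \eqref{eq:g-def-1} need not be a singleton, and \eqref{eq:BPA-routing} permits any selection; your proof fixes the equal-split selection, so you should either argue uniqueness of the maximizer or state the result for that selection (the paper's one-line proof implicitly makes the same restriction when it asserts flow monotonicity). (3) Your statement that the flow on $e_{(j)}$ is ``constant at $\mu/(k-j+1)$'' is only an upper bound once inflows at internal nodes are time-varying; as you acknowledge at the end, the upper bound is what the induction actually uses, so this is phrasing rather than a gap.
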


The tree assumption is not sufficient for BPA routing to match the upper bound $S^*(\mc N, \lambda)$ given by the BPA for networks which are not symmetric, as illustrated in the following example. 

\begin{example}
\label{ex:flow-monotonicity-motivation}
Consider the graph topology from Figure~\ref{fig:depth4}, with $\lambda=2$, 
$\flowmax_{e_1}=2.5$, 
$\flowmax_{e_i}=3$ for $i=2,3$, $\flowmax_{e_i}=2$ for $i=4,7$, $\flowmax_{e_i}=0.6$ for $i=5,6$, $\flowmax_{e_8}=0.75$, $\flowmax_{e_9}=1.5$ and $\flowmax_{e_{10}}=0.17$. 
\begin{figure}[htb!]
\begin{center}
\includegraphics[width=8cm]{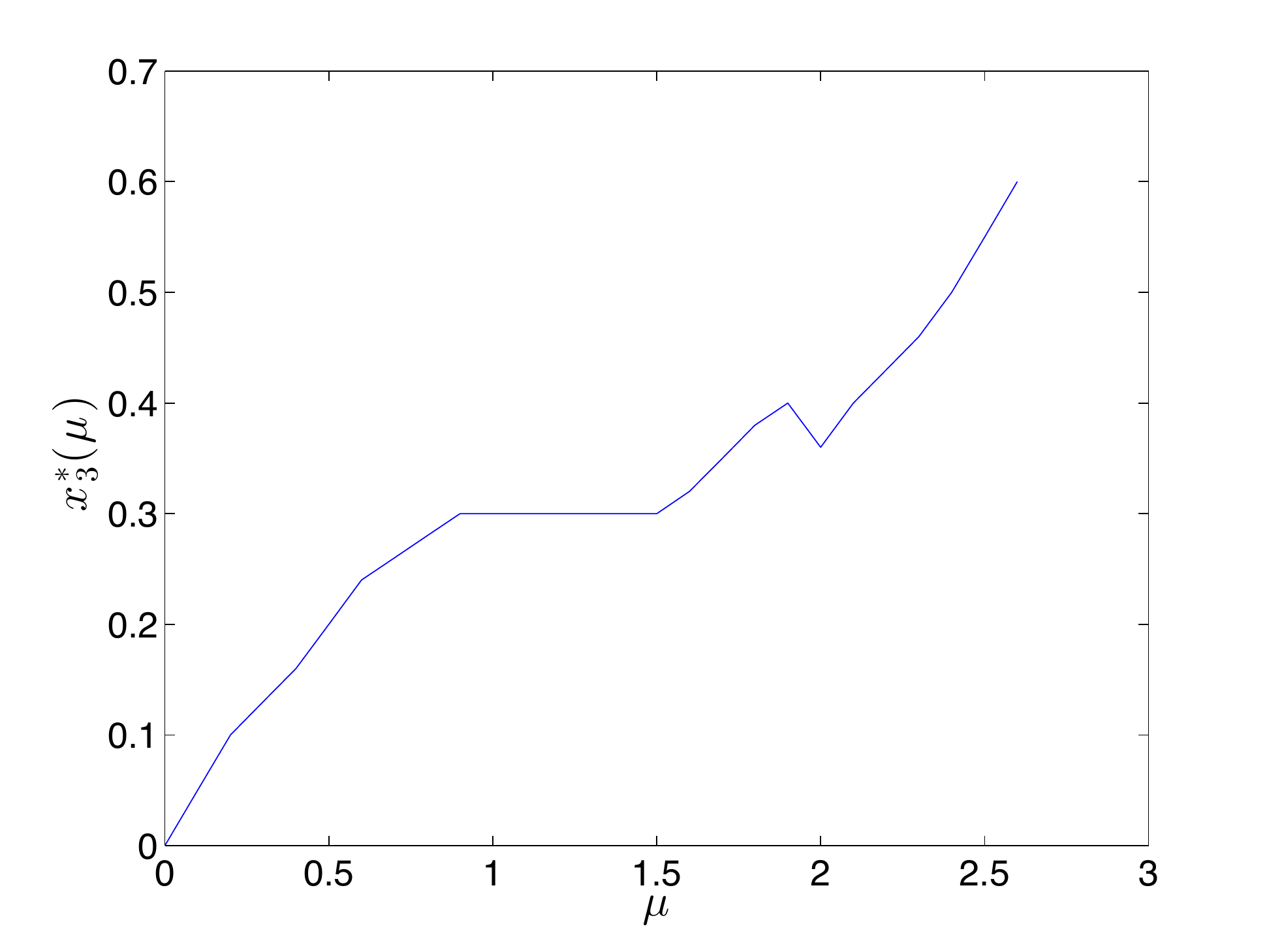}
\end{center}
\caption{Plot of $x_3^*(\mu):=G_{e_3}(\mc E_1^+(0),\mu)$ vs. $\mu$. }
\label{fig:nonmonotone}
\end{figure}
The plot of $x_3^*(\mu):=G_{e_3}(\mc E_1^+(0),\mu)$ vs. $\mu$ under BPA routing for these values is given in Figure~\ref{fig:nonmonotone}, which shows that $x_3^*(\mu)$ is decreasing in $\mu$ over $[1.9,2]$.
Also, for these values, $S^*(\mc N,\lambda)=0.3$. Consider a disturbance process such that $\delta_5(1)=0.2$, $\delta_{10}(1)=0.07$, $\delta_i(1)=0$ for $i \in \until{15} \setminus \{2,4,5\}$ and $\delta(t) \equiv \zerobf$ for all $t \geq 2$. The magnitude of such a disturbance process is $0.27$, which is strictly less than $S^*(\mc N,\lambda)=0.3$. We now describe how such a disturbance process makes the associated network flow dynamics \eqref{eq:E-dynamics}-\eqref{eq:C-evolution} not transferring. 

Under BPA routing, $f(0)$ is such that: $2-f_2(0)=f_1(0)=1.9$.  Figure~\ref{fig:nonmonotone} then implies that $1.9-f_{4}(0)=f_3(0)=f_5(0)=0.4$. Therefore, under the given disturbance process, $\{e_{10}, e_5\}\notin \mc E(2)$, and $\{e_2, e_3\} \notin \mc E(3)$. Hence $f_{1}(4)=2$ and $f_4(5)=2=\flowmax_{e_4}$.  This implies that $e_4 \notin \mc E(6)$, and hence $e_1 \notin \mc E(8)$, which leads to the dynamics being not-transferring. 

On the other hand, it is easy to see that the dynamics would be transferring under this disturbance process if the routing policy at node $1$ is such that $f_3(0) < 0.4$, and $f_3(0)=x_3^*(2)=0.35$ (see Figure~\ref{fig:nonmonotone}) in particular. This would correspond to the routing policy at node $1$ anticipating its inflow in advance, which is not feasible under the oblivious and distributed setting for routing policies.     
\end{example}

Example~\ref{ex:flow-monotonicity-motivation} suggests that the non-monotonicity in the control action of BPA routing, and hence in the evolution of flows on the links,  under \emph{point-wise} (with respect to inflow) optimization could lead to its sub optimality. This motivates consideration of the following additional constraint.

\begin{definition}
\label{def:flow-monotone-routing}
A distributed routing policy $\mc G$ is called flow-monotone at node $v \in \mc V \setminus \{n\}$ if, for every $\mc J \subseteq \mc E_v^+$: 
\begin{equation}
\label{eq:flow-monotonicity}
0 \leq \mu_1 \leq \mu_2 \implies G^v(\mc J,\mu_1) \leq G^v(\mc J,\mu_2),
\end{equation}
\end{definition}
Under a flow-monotone routing policy, if the inflow at a node increases, then the 
flow assigned to every active outgoing link from that node does not decrease.
A routing policy which is flow monotone over all $v \in \mc V \setminus \{0,n\}$, is said to be flow monotone over $\mc N$. 
We exclude the origin node because the inflow $\lambda$ at the origin node is fixed. 

\begin{remark}
\label{rem:link-monotonicity-justification}
Note that, unlike the link monotonicity condition in \eqref{eq:link-monotonicity}, we did not include the flow monotonicity condition in \eqref{eq:flow-monotonicity} as part of the definition of distributed routing policies. This is because, while Example~\ref{ex:flow-monotonicity-motivation} illustrates that BPA routing is not necessarily flow monotone, we have not been able to find an example where link monotonicity is violated by BPA routing with $\rlb^*=\zerobf$ in \eqref{eq:BPA-routing}. However, a mathematical proof to support this observation is lacking at this point.
\end{remark}

Under a flow monotone distributed routing policy, the network dynamics can be easily shown to possess the following simple property (which we state without proof), which simplifies the analysis considerably.

\begin{lemma}
\label{lem:monotonicity}
Let $\mc N$ be a network with $|\mc E_v^+| \leq 3$ for all $v \in \mc V \setminus \{n\}$ and satisfying Assumption~\ref{ass:polytree}, $\lambda  > 0$ a constant intflow at the origin node and BPA routing policy that is flow monotone. Then,  
$$t_1 \leq t_2 \implies f_e(t_1) \leq f_e(t_2) \qquad \forall \, e \in \mc E(t_2).$$
\end{lemma}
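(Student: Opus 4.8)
The plan is to reduce to a single time step and then chain. By the irreversibility of link deactivation in \eqref{eq:E-dynamics}, a link $e\in\mc E(t_2)$ belongs to $\mc E(t)$ for every $t\le t_2$; hence it suffices to prove the one-step estimate $f_e(t)\le f_e(t+1)$ whenever $e\in\mc E(t+1)$, since composing it over $t=t_1,\dots,t_2-1$ yields $f_e(t_1)\le\cdots\le f_e(t_2)$.

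For the one-step estimate, fix $e\in\mc E(t+1)$ and set $v=\sigma_e$. Since $e\in\mc E_v^+(t+1)\subseteq\mc E_v^+(t)\subseteq\mc E_v^+(t-1)$, the node $v$ is active at $t-1$ and $t$, and \eqref{eq:routingupdate} gives $f_e(t+1)=G^v_e(\mc E_v^+(t),\lambda_v(t))$ and $f_e(t)=G^v_e(\mc E_v^+(t-1),\lambda_v(t-1))$. Only two quantities differ between the two right-hand sides: the active outgoing set and the local inflow. Irreversibility gives $\mc E_v^+(t)\subseteq\mc E_v^+(t-1)$, so the hypothesis $|\mc E_v^+|\le 3$ together with the preceding lemma (which certifies that BPA routing obeys link monotonicity \eqref{eq:link-monotonicity}) yields $G^v_e(\mc E_v^+(t-1),\mu)\le G^v_e(\mc E_v^+(t),\mu)$ for each fixed $\mu$. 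The flow-monotonicity hypothesis \eqref{eq:flow-monotonicity} yields $G^v_e(\mc E_v^+(t-1),\lambda_v(t-1))\le G^v_e(\mc E_v^+(t-1),\lambda_v(t))$ as soon as $\lambda_v(t-1)\le\lambda_v(t)$. Chaining these two inequalities produces $f_e(t)\le f_e(t+1)$, so the whole lemma reduces to proving monotonicity of the local inflow, $\lambda_v(t-1)\le\lambda_v(t)$, at every node feeding a link that survives to $t_2$.

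I would prove inflow monotonicity by induction along the topological ordering of $\mc V$ fixed in Section~\ref{subsec:BPA}. For the origin $v=0$ the inflow is the constant $\lambda$, so the base case needs only link monotonicity. For an intermediate node $v\neq 0$, Assumption~\ref{ass:polytree} forces a unique incoming link $e'$ whose tail $u=\sigma_{e'}$ precedes $v$ in the ordering, and $\lambda_v(t)=f_{e'}(t)$ at every time at which $e'$ is active; the inductive hypothesis applied to $e'$ at $u$ then makes $\lambda_v$ nondecreasing as long as $e'$ stays active.

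The delicate step, and the one I expect to be the main obstacle, is exactly the case excluded above: when the unique feeding link $e'$ of $v$ deactivates at some time $\le t_2$. At that instant $\lambda_v$ collapses to $0$, and one step later so do the flows on all of $v$'s outgoing links, even though such a link may persist in $\mc E(t_2)$ — carrying zero flow it can never overload, and if its head remains active it is never removed by \eqref{eq:E-dynamics}. Controlling this configuration is the heart of the matter, and it is precisely where the tree hypothesis (Assumption~\ref{ass:polytree}) and the exact form of the head-inactivity clause in \eqref{eq:E-dynamics} together with the node rule \eqref{eq:V-dynamics} must be brought to bear: one must argue that a node cannot lose its feeding link while still carrying an active outgoing link into $\mc E(t_2)$, so that every surviving link is supplied by a source path active throughout $[0,t_2]$ and the inductive inflow bound applies. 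I would attempt this by a backward sweep over the subtree rooted at $v$, trying to show that deactivation of $e'$ propagates through \eqref{eq:V-dynamics}--\eqref{eq:E-dynamics} to the entire subtree before $t_2$; verifying that this cascade truly completes, rather than leaving a zero-flow but still active remnant (which would contradict the conclusion), is the single point on which the whole argument turns, and likely requires restricting to the cascade patterns arising in the margin-of-resilience analysis rather than arbitrary disturbances.
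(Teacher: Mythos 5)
There is no paper proof to compare against here: the authors state this lemma explicitly without proof, asserting it ``can be easily shown.'' The routine half of your argument --- the reduction to a one-step estimate, the chaining of link monotonicity \eqref{eq:link-monotonicity} with flow monotonicity \eqref{eq:flow-monotonicity}, and the induction along the topological order to obtain monotonicity of the local inflows $\lambda_v(t)$ --- is correct and is surely the argument the authors had in mind. Everything therefore hinges on the case you isolate at the end: a node $v$ whose unique feeding link $e'$ (unique by Assumption~\ref{ass:polytree}) deactivates before $t_2$ while some outgoing link of $v$ survives to $t_2$.

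Your proposed repair of that case --- a backward sweep showing that deactivation of $e'$ propagates through \eqref{eq:V-dynamics}--\eqref{eq:E-dynamics} and wipes out the whole subtree before $t_2$ --- cannot succeed, because in this model deactivation propagates only upstream: by \eqref{eq:E-dynamics} a link dies only if it is overloaded or its \emph{head} node has died, and by \eqref{eq:V-dynamics} a node dies only when \emph{all} of its outgoing links have died. A subtree severed from its supply therefore persists as a zero-flow remnant: by \eqref{eq:routingupdate} its flows drop to $0$, zero-flow links never overload (their residual capacities stay positive absent further disturbance), and their head nodes never deactivate. This is exactly the configuration you feared, and it shows the lemma is false as literally stated. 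Concretely, take $\mc V=\{0,1,2,n\}$ with links $e_1=(0,1)$, $e_2=(0,2)$, $e_3=(1,n)$, $e_4=(2,n)$, all capacities equal to $10$, and $\lambda=5$: BPA routing splits evenly, $f_{e_1}(0)=f_{e_3}(0)=2.5$, and is trivially flow monotone since nodes $1,2$ have out-degree one. The disturbance $\delta_{e_1}(1)=7.5$ makes $e_1$ overloaded, so it deactivates; thereafter $\lambda_1(t)=0$ and $f_{e_3}(t)=0<2.5=f_{e_3}(0)$, while $e_3$ remains active for all time. Note also that $\Dnorm(\delta)=7.5<12.5=S^*(\mc N,\lambda)$, so restricting to sub-threshold disturbances, as your last sentence suggests, would not rescue the statement either. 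The correct repair is to weaken the conclusion: monotonicity holds for those $e\in\mc E(t_2)$ such that every link on the unique path from the origin to $\sigma_e$ is still in $\mc E(t_2)$ --- equivalently, for nodes whose local inflow is non-decreasing on $[0,t_2]$. Under that restriction your topological induction closes with no delicate case at all, and this restricted form is precisely what the paper actually uses in the proof of Lemma~\ref{lem:lb-lem1}, where non-decreasing inflow at the node under consideration is an explicit hypothesis.
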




The following is a key result, which, along with Theorem~\ref{thm:upper-bound},  identifies conditions under which BPA routing is maximally resilient.

\begin{theorem}
\label{thm:lower-bound}
Let $\mc N$ be a network with $|\mc E_v^+| \leq 3$ for all $v \in \mc V \setminus \{n\}$ and satisfying Assumption~\ref{ass:polytree}, $\lambda  > 0$ a constant inflow at the origin node and BPA routing policy that is flow monotone. Then, the associated network flow dynamics \eqref{eq:E-dynamics}-\eqref{eq:C-evolution}
is transferring for every disturbance process $(\delta(t))_{t\ge1}$ with $\Dnorm(\delta) < S^*(\mc N, \lambda)$.
\end{theorem}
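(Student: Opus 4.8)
The plan is to establish the lower bound matching Theorem~\ref{thm:upper-bound}, so that together they pin $\mor^*(\mc N,\lambda)$ to $S^*(\mc N,\lambda)$ and certify BPA routing as maximally resilient. First I would reduce to a statement about the origin staying alive: by Proposition~\ref{prop:transferring-notion-equivalence}, since $\mc N$ is a tree the dynamics is transferring if and only if the origin $0$ is active at the termination time $\mc T$. It therefore suffices to prove the contrapositive, namely that if the origin $0$ ever becomes inactive then the disturbance magnitude satisfies $\Dnorm(\delta) \ge S^*(\mc N,\lambda) = S(\mc E_0^+,\zerobf,\lambda)$. In other words, I want to show that BPA routing forces any adversary to spend at least the value computed by the BPA recursion in order to deactivate the origin.

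The core is an induction in reverse topological order $v = n-1, n-2, \ldots, 0$, mirroring the BPA construction. The tree structure of Assumption~\ref{ass:polytree} is used twice here: every non-origin node has a \emph{unique} incoming link, so the inflow to a node equals the flow on that link, and the subtrees hanging off distinct outgoing links of a node are disjoint, so disturbances charged to different subtrees add without interaction. The inductive claim I would carry is: for each node $v$, if $v$ receives an inflow whose final (time-asymptotic) value is $\mu$ and whose BPA allocation respects a lower bound $\rlb$, then $v$ can be deactivated only after the disturbance confined to the subtree rooted at $v$ reaches magnitude at least $S(\mc E_v^+,\rlb,\mu)$. The base case is a link $e$ into the destination, where $S(\mc E_n^+,\zerobf,\mu)=+\infty$ forces $S_e(\mu)=C_e-\mu$ in \eqref{eq:Se-def}, which is exactly the residual capacity an adversary must remove to overload $e$.

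For the inductive step at a node $v$ with $\mc J=\mc E_v^+$ and $|\mc J|\le 3$, I would track the order in which the outgoing links die. Let $e^*$ be the first. Its death is either by overload, costing at least $C_{e^*}-x_{e^*}$ in direct disturbance on $e^*$, where $x=G^v(\mc J,\mu)$ is the BPA allocation, or by deactivation of its head $\tau_{e^*}$, which by the induction hypothesis costs at least $S(\mc E_{\tau_{e^*}}^+,\zerobf,x_{e^*})$ within that subtree; by \eqref{eq:Se-def} either way costs at least $S_{e^*}(x_{e^*})$. After $e^*$ dies, BPA routing redistributes through $g(\mc J\setminus\{e^*\},\cdot)$, and \eqref{eq:g-def-implication} guarantees the surviving flows never fall below $x$, which is precisely the link-monotone lower bound needed to invoke the induction hypothesis on $\mc J\setminus\{e^*\}$ with lower bound $x$, yielding an additional cost of at least $S(\mc J\setminus\{e^*\},x,\mu)$ in the complementary subtrees. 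Summing and minimizing over the adversary's choice of $e^*$ reproduces $\min_{e\in\mc J}(S_e(x_e)+S(\mc J\setminus\{e\},x,\mu))$, which equals $S(\mc J,\rlb,\mu)$ exactly because BPA routing selects the maximizing $x$ in \eqref{eq:S-tilde-def-new}.

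The hard part is that the inflow to $v$ is time-varying and, under the flow-monotonicity hypothesis, only increases, so the flows at the instants links actually die may exceed the allocation $x$ computed at a single inflow value, and the fixed lower bound $\rlb^*$ used by BPA routing in \eqref{eq:BPA-routing} must be reconciled with the progressively increasing lower bounds tracked by the recursion. This is where Lemma~\ref{lem:monotonicity} and the flow-monotonicity assumption become indispensable: I would use them to argue that it is legitimate to evaluate the recursion at the final inflow and that an increase in inflow can only shrink the residual capacities available to the adversary in a manner already absorbed by $S_e$, so that no disturbance is double-counted or missed; the appendix's technical lemma is the tool for making this monotone-in-time bookkeeping rigorous. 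The restriction $|\mc E_v^+|\le 3$ enters both here and in the redistribution step — it is what guarantees (via the preceding lemma) that BPA routing is genuinely link-monotone, and it keeps the case analysis of death orders and flow reallocations finite, so that the surviving-flow lower bounds can be verified to propagate correctly.
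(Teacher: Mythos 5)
Your strategy coincides with the paper's: reduce transferring to ``the origin stays active'' via Proposition~\ref{prop:transferring-notion-equivalence}, then prove, by backward induction over nodes in reverse topological order, that deactivating a node $v$ whose inflow has maximal value $\mu$ and whose BPA allocation respects a lower bound $\rlb$ costs the adversary at least $S(\mc E_v^+,\rlb,\mu)$ within the subtree rooted at $v$ (this is the paper's Lemma~\ref{lem:lb-lem1}), using Lemma~\ref{lem:monotonicity}, flow monotonicity, and the appendix lemma for the time-varying-inflow bookkeeping. However, your inductive step has a genuine gap: you let $e^*$ be \emph{the} first outgoing link of $v$ to die, charge $S_{e^*}(x_{e^*})$ for it, and recurse on $\mc J\setminus\{e^*\}$ with lower bound $x$. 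Under the dynamics \eqref{eq:E-dynamics}, several outgoing links of $v$ can become inactive \emph{in the same time step} (simultaneous overloads, or cascades in disjoint subtrees terminating at the same instant); in that case the flow never passes through the intermediate allocation on $\mc J\setminus\{e^*\}$, and the one-failure-at-a-time chaining does not apply. What is needed is the splitting inequality
\begin{equation*}
S(\mc J,\rlb,\mu)\;\le\;\sum_{e\in\mc K}S_e(x_e)+S\big(\mc J\setminus\mc K,x,\mu\big),\qquad \forall\,\mc K\subseteq\mc J,\ x\in g(\mc J,\rlb,\mu),
\end{equation*}
valid for an arbitrary set $\mc K$ of simultaneously failing links. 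This is precisely the paper's Lemma~\ref{lem:S-split-upperbound}; it does not follow directly from \eqref{eq:S-tilde-def-new} (which gives only the case $|\mc K|=1$) and requires its own induction built on Lemma~\ref{lem:s-function-decreasing}.

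A second, softer issue: the paragraph you label ``the hard part'' states the difficulty (time-varying inflow; reconciling the fixed lower bound $\rlb^*$ of \eqref{eq:BPA-routing} with the growing lower bounds tracked by the recursion) but does not resolve it. The paper closes it with two concrete devices: (i) before the first local failure, flow monotonicity and Lemma~\ref{lem:monotonicity} give $f_e(t)\le f_e^*=G^v_e(\mc E_v^+(0),\mu)$, so the adversary's cumulative disturbance by the first failure time is at least $S_e(f_e^*)$ by the monotonicity of $S_e$ from Lemma~\ref{lem:s-function-decreasing}, and the membership $f^*\in g(\mc E_v^+(0),\rlb^*,\mu)$ guaranteed by \eqref{eq:BPA-routing} is exactly what permits invoking Lemma~\ref{lem:S-split-upperbound}; (ii) in the three-link case, when exactly one link survives the analysis reduces to smaller cases, while when two links survive one necessarily has $\mc E_v^+(0)=\mc E_v^+$, hence $f^*=\rlb^*$, which is what makes the fixed BPA lower bound consistent with the recursion's lower bound on the surviving links --- a use of $|\mc E_v^+|\le 3$ beyond guaranteeing link monotonicity. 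Without these (or equivalent) steps the backward induction does not close, so as written your argument is a correct outline of the paper's proof with the key splitting lemma and the reconciliation argument missing.
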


Since BPA routing is completely specified by network parameters $(\mc V, \mc E, \flowmax)$, flow monotonicity is a condition on the network parameters.
BPA routing is flow monotone at $v$ trivially if $|\mc E_v^+|=1$. 
One could perform extensive (offline) numerical tests to check flow-monotonicity of BPA routing over a given flow network $\mc N$. 
However, it is possible to identify a few flow networks over which BPA routing is provably flow-monotone. 
In order to characterize such networks in Proposition~\ref{prop:monotonicity-sufficient-conditions-graphical} and Remark~\ref{remark:flow-monotone-construction}, we need the concept of \emph{d-expansion} of a network: given $\mc N$, its d-expanded version $\mc N^d$ is obtained by creating multiple copies of the destination node in $\mc N$, one for each incoming link. For example, the network in Figure~\ref{fig:graphical-sufficient-conditions}(a) is d-expanded version of the network in Figure~\ref{fig:simple-setting} (a). It is easy to recover the original flow network from its d-expanded version.



\begin{figure}[htb!]
\begin{center}
\begin{minipage}[c]{.175\textwidth}
\begin{center}
\includegraphics[width=2cm]{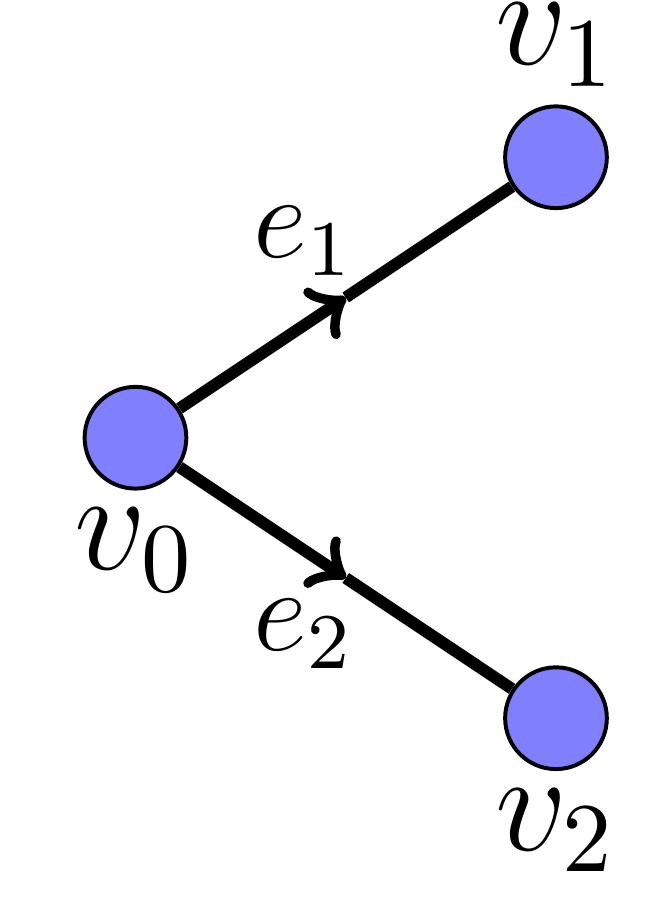} 
\end{center}
\end{minipage}
\begin{minipage}[c]{.2\textwidth}
\begin{center}
\includegraphics[width=3cm]{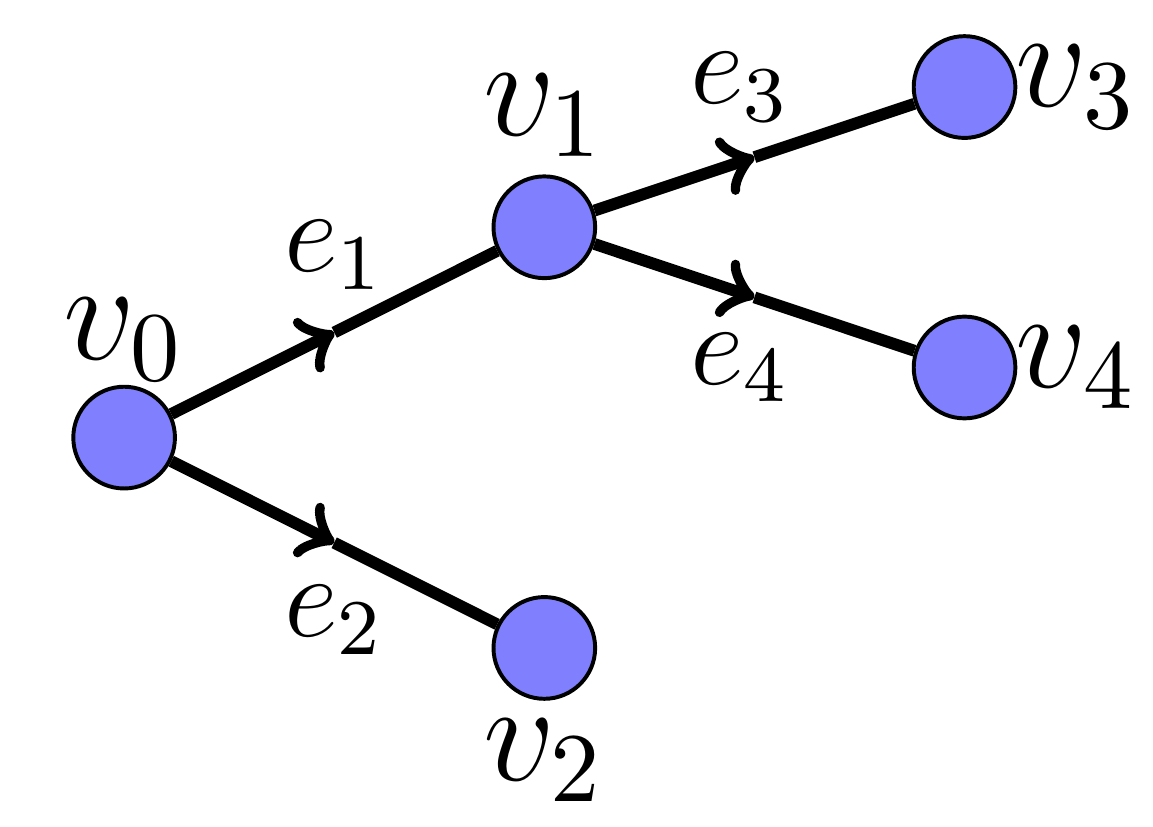} 
\end{center}
\end{minipage} 
\begin{minipage}[c]{.2\textwidth}
\begin{center}
\includegraphics[width=3cm]{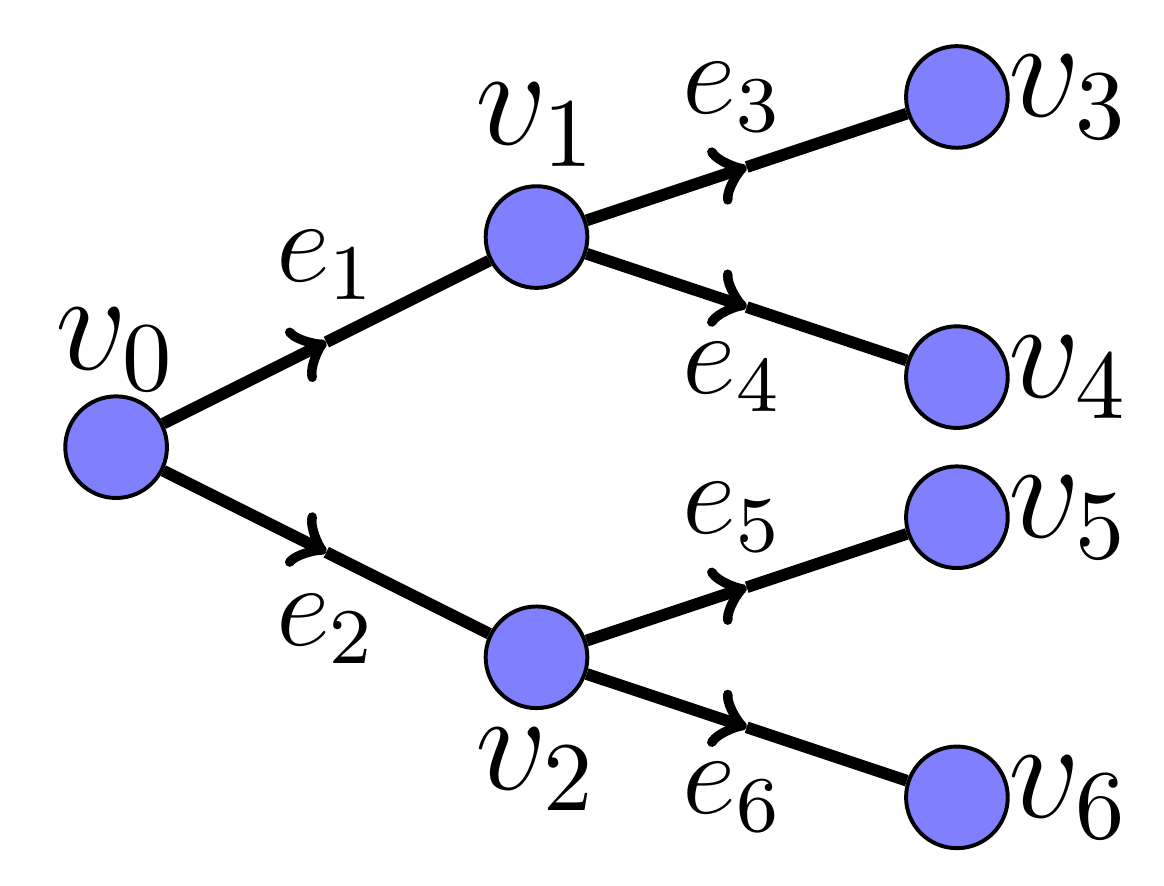} 
\end{center}
\end{minipage}

\begin{minipage}[c]{.175\textwidth}
\begin{center}
(a)
\end{center}
\end{minipage}
\begin{minipage}[c]{.2\textwidth}
\begin{center}
(b)
\end{center}
\end{minipage} 
\begin{minipage}[c]{.2\textwidth}
\begin{center}
(c)
\end{center}
\end{minipage}
\end{center}
\caption{d-expanded flow networks that induce flow-monotonicity of BPA routing at the root node $v_0$.}
\label{fig:graphical-sufficient-conditions}
\end{figure}


\begin{proposition}
\label{prop:monotonicity-sufficient-conditions-graphical}
Let $\mc N$ be a network satisfying Assumption~\ref{ass:polytree} with $\lambda  > 0$ a constant inflow at the origin node. 
Then, BPA routing is flow-monotone at $v \in \mc V$ if the the sub-tree in the d-expanded version $\mc N^d$ rooted at $v$ is either (a), or (b) with $\flowmax_{e_2} \geq \flowmax_{e_1}$ or (c) with $\flowmax_{e_1}=\flowmax_{e_2}$ in Figure~\ref{fig:graphical-sufficient-conditions}.
\end{proposition}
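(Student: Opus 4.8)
The plan is to verify the defining inequality \eqref{eq:flow-monotonicity} directly for each of the three subtrees, by computing the pointwise maximizers \eqref{eq:g-def-1} in closed form and checking that they are componentwise nondecreasing in the inflow $\mu$. I would first reduce the claim to the behaviour of $g(\mc E_v^+,\zerobf,\mu)$ together with the lower-bound-constrained maximizers $g(\mc J,\rlb^*,\mu)$ for proper subsets $\mc J\subsetneq\mc E_v^+$; the singleton sets are trivial, since a single active link carries the entire inflow $\mu$, so the only content lies in the two-element active set $\mc E_v^+=\{e_1,e_2\}$ (the two outgoing links of $v$). Because these maximizers depend on the subtree below $v$ only through the link-value functions $S_{e_1},S_{e_2}$ defined by \eqref{eq:Se-def}, the first substantive task is to compute these functions explicitly for each topology of Figure~\ref{fig:graphical-sufficient-conditions}; in all three cases they turn out to be continuous, piecewise-linear and nonincreasing, with breakpoints located at the capacities of the subtree.

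For case (a) both children of $v$ are (d-expanded) destinations, so $S(\mc E_{\tau_{e_i}}^+,\zerobf,\mu)=+\infty$ and \eqref{eq:Se-def} gives $S_{e_i}(\mu)=[\flowmax_{e_i}-\mu]^+$. The recursion \eqref{eq:S-tilde-def-new} at $v$ then collapses exactly to the two-link expression \eqref{eq:2nodes-mor}, whose maximizer is the closed form $x_1^*(\mu)$ in \eqref{eq:Gstar1-caseb}. Reading off the three pieces of \eqref{eq:Gstar1-caseb}, the slope of $x_1^*$ equals $1/2$ on $[0,\minCnode]$, equals $0$ or $1$ (according to the sign of $\flowmax_1-\flowmax_2$) on $[\minCnode,\maxCnode]$, and equals $1/2$ on $[\maxCnode,\flowmax_1+\flowmax_2]$; all slopes lie in $[0,1]$, so both $x_1^*$ and $x_2^*=\mu-x_1^*$ are nondecreasing, which is precisely \eqref{eq:flow-monotonicity}.

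For cases (b) and (c) one (or both) of $e_1,e_2$ feeds a branching node, so I would first compute that node's BPA value function, which by the case-(a) computation is the continuous piecewise-linear nonincreasing function $\mc R^*(\cdot)$ of \eqref{eq:Rv-caseb} (with the capacities of that node's two links), and substitute it into \eqref{eq:Se-def}, obtaining $S_{e_1}$ and $S_{e_2}$ as minima of $\flowmax_{e_i}-\mu$ with a shifted copy of $\mc R^*$. Next, since after the removal of one link at $v$ the entire inflow is forced onto the surviving link, the continuation term $S(\{e_j\},x,\mu)$ in \eqref{eq:S-tilde-def-new} equals $S_{e_j}(\mu)$, so the maximization at $v$ reduces to maximizing, over the feasible $x_1$ with $x_2=\mu-x_1$, the one-dimensional function $h(x_1):=\min\bigl\{S_{e_1}(x_1)+S_{e_2}(\mu),\,S_{e_2}(\mu-x_1)+S_{e_1}(\mu)\bigr\}$. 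The first branch is nonincreasing and the second nondecreasing in $x_1$, so $h$ is single-peaked and its maximizer $x_1^*(\mu)$ is characterized by equalizing the two branches, i.e.\ by $S_{e_1}(x_1)-S_{e_1}(\mu)=S_{e_2}(\mu-x_1)-S_{e_2}(\mu)$, unless the lower bound $\rlb^*$ or a capacity constraint makes the maximizer a boundary point of $\mc X_v$.

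The last step, which I expect to be the main obstacle, is to show that the equalizing solution $x_1^*(\mu)$ is nondecreasing in $\mu$. Since $S_{e_1},S_{e_2}$ are only piecewise-linear and nonincreasing (the lower envelope in \eqref{eq:Se-def} is in general neither convex nor concave), the equalization point can in principle jump backward as $\mu$ crosses a breakpoint of $S_{e_1}$ or $S_{e_2}$; this backward jump is exactly the non-monotone phenomenon exhibited in Example~\ref{ex:flow-monotonicity-motivation}. I would therefore enumerate the finitely many breakpoints of $S_{e_1}$ and $S_{e_2}$ and verify, regime by regime, that the slope $\de x_1^*/\de\mu$ stays nonnegative. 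The conditions $\flowmax_{e_2}\geq\flowmax_{e_1}$ in (b) and $\flowmax_{e_1}=\flowmax_{e_2}$ in (c) are precisely what force the breakpoints of the two functions into the order that rules out such a reversal, so the bulk of the remaining work is a careful but finite case analysis of the relative positions of these breakpoints under each capacity condition, together with checking that the lower-bound-constrained subproblems for $\mc J\subsetneq\mc E_v^+$ inherit the same monotonicity from the unconstrained one.
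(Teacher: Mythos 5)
Your proposal follows essentially the same route as the paper's own proof: case (a) is read off from the closed form \eqref{eq:Gstar1-caseb} exactly as you do, and for case (b) (with (c) derived from the same constructs) the paper likewise reduces to the interior/boundary dichotomy and the two-branch equalization $S_{e_2}(y_2)+S_{e_1}(\mu)=S_{e_1}(\mu-y_2)+S_{e_2}(\mu)$, then shows the equalizing point is nondecreasing in $\mu$. The "careful but finite case analysis" you defer is carried out in the paper compactly via the implicit function theorem, which gives \eqref{eq:x3star-derivative-twolinks}, namely $\frac{\de}{\de\mu}y_2(\mu)=\bigl(S'_{e_2}(\mu)+S'_{e_1}(\mu-y_2)-S'_{e_1}(\mu)\bigr)/\bigl(S'_{e_2}(y_2)+S'_{e_1}(\mu-y_2)\bigr)$; the denominator is negative by Lemma~\ref{lem:s-function-decreasing} and the numerator is nonpositive because $S_{e_2}(\mu)\equiv[\flowmax_2-\mu]^+$, the slope differences of $S_{e_1}$ are bounded by $1/2$ (from \eqref{eq:Rv-caseb} and \eqref{eq:S1-cased}), and $\flowmax_{2}\ge\flowmax_{1}$ — which is precisely the slope bookkeeping your regime-by-regime check would amount to.
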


\begin{remark}
\label{remark:flow-monotone-construction}
Let $\bar{\mc N}^d$ denote the set of d-expanded versions of ``\emph{simple}" flow networks over which BPA routing is known to be flow monotone, either from explicit analysis as in Proposition~\ref{prop:monotonicity-sufficient-conditions-graphical}, or through extensive simulations. One can use $\bar{\mc N}^d$ as a basis to form arbitrarily large networks over which BPA routing is flow monotone, using an iterative procedure as follows. Initialize $\mc N^d$ to be an element of $\bar{\mc N}^d$. At each iteration, execute the following concatenation step. Take any  destination node in $\mc N^d$, say $v$, with a single incoming link $\mc E_v^- =\{e\}$ and a member $\bar{\mc N}^d_i$ from $\bar{\mc N}^d$ whose origin node is $v_{0,i}$. If $[\flowmax_e-\mu]^+ \leq S(\mc E_{v_{0,i}}^+, \zerobf, \mu)$ for all $\mu \in [0,\flowmax_e]$, then concatenate $\bar{\mc N}^d_i$ to $\mc N^d$ at $v$, i.e., $v=v_{0,i}$ and the the sub-network downstream of $v$ is $\bar{\mc N}^d_i$. 
BPA routing on the flow network obtained at the end of every iteration of this procedure is guaranteed to be flow monotone, because \eqref{eq:Se-def} implies that $S_e(\mu)$ is equal to $[\flowmax_e-\mu]^+$ even in the concatenated network. 

Flow monotonicity of BPA routing over a given network $\mc N$ is maintained even after replacing any link, say $e=(u,v)$, in $\mc N^d$ (at any iteration) with a non-branching chain $e_1, \ldots, e_k$ such that $\sigma_{e_1}=u$, $\tau_{e_m}=v$ and $\min_{i \in \until{m}} \flowmax_{e_i}=\flowmax_e$. 

One can also devise a procedure which is counterpart to the \emph{expansion} procedure described above 
to check if the d-expanded version of a given network $\mc N^d$ can be decomposed into elements of $\bar{\mc N}^d$, in which case the BPA routing over $\mc N^d$ is flow monotone. 
As one keeps enriching the basis $\bar{\mc N}^d$, these procedures allow to construct or to verify large networks over which BPA routing is flow monotone, and hence maximally resilient by Theorem~\ref{thm:lower-bound}. 
\end{remark}


\section{Proofs of the Main Results}
\label{sec:proofs}
In this section, we provide proofs of the main results presented in Section~\ref{sec:main-results}. Some of the proofs rely on certain analytical properties of $S(\mc J,\rlb,\mu)$ defined in \eqref{eq:S-tilde-def-new}. We state and prove these properties in Lemma~\ref{lem:s-function-decreasing} in the Appendix.

\subsection{Proof of Proposition~\ref{prop:centralized-upper-bound}}
\label{subsec:centralized-ub-proof}
We first show that, for any $\lambda \geq 0$, $\mc K \subseteq \mc J \subseteq \mc E$,
\be
\label{eq:K-J-ineq}
S(\mc K, \lambda) \leq S(\mc J, \lambda).
\ee
It suffices to show that $S(\mc J \setminus \{e\}, \lambda) \leq S(\mc J, \lambda)$ for all $e \in \mc J$. It is trivially true for $|\mc J|=1$. Assume it to be true for all $|\mc J| \leq k$ for some $k \geq 1$. \eqref{eq:centralized-routing} implies that, for all $x \in \mc X(\mc J,\lambda)$: 
\begin{align}
\nonumber 
S(\mc J, \lambda) & \geq \min_{j \in \mc J} \big(\flowmax_j - x_j + S(\mc J \setminus \{j\}, \lambda) \big) \\ 
\nonumber
& = \min \left (C_e-x_e + S(\mc J \setminus \{e\},\lambda), \min_{j \in \mc J \setminus \{e\}} \big(\flowmax_j - x_j + S(\mc J \setminus \{j\}, \lambda) \big) \right) \\
\nonumber 
& \geq \min \left (S(\mc J \setminus \{e\},\lambda), \min_{j \in \mc J \setminus \{e\}} \big(\flowmax_j - x_j + S(\mc J \setminus \{j\}, \lambda) \big) \right) \\
\label{eq:SJ-longeq} 
& \geq \min \left (S(\mc J \setminus \{e\},\lambda), \min_{j \in \mc J \setminus \{e\}} \big(\flowmax_j - x_j + S(\mc J \setminus \{j,e\}, \lambda) \big) \right),
\end{align}
where the second inequality follows the fact that $x_e \leq \flowmax_e$, whereas the third inequality follows from the inductive argument on $\mc J \setminus \{i\}$.
For $x \in \argmax_{x \in \mc X(\mc J \setminus \{e\},\lambda)} \min_{j \in \mc J \setminus \{e\}}  \big(\flowmax_j - x_j + S(\mc J \setminus \{j,e\}, \lambda) \big)$ 
we get $S(\mc J \setminus \{e\}, \lambda) = \min_{j \in \mc J \setminus \{e\}} \big(\flowmax_j - x_j + S(\mc J \setminus \{j,e\}, \lambda)$, which when used in \eqref{eq:SJ-longeq}, finishes the proof for $|\mc J|=k+1$. By induction, \eqref{eq:K-J-ineq} is then true for all $\mc J \subseteq \mc E$. 

We now prove the proposition by induction on $|\mc J|$. When $\mc J = \{e\}$, \eqref{eq:centralized-routing} implies that $S(\mc J, \lambda) \leq [\flowmax_e-\lambda]^+$. It is easy to see that under a disturbance process $\delta_e(1)=[\flowmax_e-\lambda]^+$ and $\delta_e(t)=0$ for all $t \geq 1$, the associated network dynamics will be non-transferring. Assume that the proposition is true for all $\mc J \subseteq \mc E$ with $|\mc J| \leq k$ for some $k \geq 1$. Let $\mc E(0)=\mc J$,  with $|\mc J|=k+1$.
Pick $e \in \argmin_{j \in \mc J} \left(\flowmax_j - f_j(0) + S(\mc J \setminus \{j\}, \lambda) \right)$. Therefore,  
\be
\label{eq:SJ-ineq}
S(\mc J, \lambda) \geq \flowmax_e - f_e(0) + S(\mc J \setminus \{e\}, \lambda).
\ee

Consider a disturbance process such that $\delta_{e}(1)=\flowmax_e-f_e(0)$ and $\delta_j(1)=0$ for all $j \in \mc J \setminus \{e\}$. Under this disturbance, link $e$ becomes inactive, followed by a possible cascading failure. Let the network state come to a steady state after a finite time $\mc T_1$. Since, $|\mc E(\mc T_1)| \leq k$, one can use induction to extend $\delta$ after $\mc T_1$ to ensure that the network dynamics is 
not transferring. By induction, the total magnitude of $\delta$ is then upper bounded as 
$\Dnorm(\delta) \leq \delta_{e}(1) + S(\mc E(\mc T_1),\lambda)$. Since $\mc E(\mc T_1) \subsetneq \mc J$, using \eqref{eq:K-J-ineq}, this can be further upper bounded as $\Dnorm(\delta) \leq \delta_{e}(1) + S(\mc J \setminus \{e\},\lambda)=\flowmax_e-f_e(0) + S(\mc J \setminus \{e\},\lambda)$, which combined with \eqref{eq:SJ-ineq} implies that 
$\Dnorm(\delta) \leq S(\mc J, \lambda)$. 

\subsection{Proof of Proposition~\ref{prop:transferring-notion-equivalence}}
If $0 \notin \mc V(\mc T)$, then the flow across links outgoing from every cut $\U$ in $(\mc V(\mc T), \mc E(\mc T))$, and $\V(\mc T) \setminus \{n\}$ in particular, is zero at $\mc T$. This implies that $\lambda_n(\mc T)=0$. This also proves that $\lambda_n(\mc T) \in \{0,\lambda\}$. 

If $0 \in \mc V(\mc T)$, then $\mc J:= \mc E(\mc T) \cap \mc E_0^+$ is non-empty, and $\sum_{e \in \mc J}f_e(\mc T)=\lambda$. Let $\mc U=0 \cup \{v \in \mc V(\mc T) : (0,v) \in \mc J\}$. It is clear that $\mc U \subseteq \mc V(\mc T)$, and that $\mc U$ is in fact a cut in $(\mc V(\mc T), \mc E(\mc T))$. Since $f(\mc T)$ is an equilibrium flow, 
the total flow across the links outgoing from $\mc U$ is $\lambda$. 
One can continue along these lines to claim that the flow across links outgoing from any cut in $(\mc V(\mc T), \mc E(\mc T))$, and $\V(\mc T) \setminus \{n\}$ in particular, is equal to $\lambda$. That is, $\lambda_n(\mc T)=\lambda$.

\subsection{Proof of Theorem~\ref{thm:upper-bound}} 
\label{subsec:ub-proof}
Theorem~\ref{thm:upper-bound} is a corollary of the following lemma, where we allow the possibility that $(\mc V(0),\mc E(0)) \neq (\mc V, \mc E)$. 


\begin{lemma}
\label{lem:upperbound-general-lemma}
Consider a node $v$ in a network $\mc N$ with initial condition $(\mc V(0),\mc E(0))\subseteq (\mc V, \mc E)$, satisfying Assumption~\ref{ass:acyclicity}, with a constant inflow $\mu \geq 0$, and operating under a 
distributed routing policy satisfying $G^v(\mc E_v^+(0), \mu) \geq \rlb$ for some  
$\rlb \in \real_+^{\mc E_v^+}$. Then, for any $h \in \natural$, there exists a finite $\mc T_v \geq h$, and a disturbance process $(\delta^{v}(t))_{t\ge h}$ satisfying $\Dnorm(\delta^{v}) \leq S(\mc E_v^+(0),\rlb,\mu)$, under which
$v \notin \mc V(\mc T_v).$
\end{lemma}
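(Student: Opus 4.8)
The plan is to prove the stronger, localized statement by strong induction on the number of links reachable from $v$ in $(\mc V(0),\mc E(0))$; Assumption~\ref{ass:acyclicity} makes this a well-founded measure, and the two ways a link can die in \eqref{eq:E-dynamics} correspond exactly to the two terms of the defining recursion \eqref{eq:Se-def}--\eqref{eq:S-tilde-def-new}. At node $v$ I would process the active outgoing links $\mc J:=\mc E_v^+(0)$ one at a time, so that the outer layer of the induction mirrors the increasing-$|\mc J|$ recursion \eqref{eq:S-tilde-def-new} while the inner layer recurses onto strictly smaller downstream cones. The base cases are $\mc J=\emptyset$ (then $v$ is inactivated at the next step by \eqref{eq:V-dynamics} with the null disturbance, and $S(\emptyset,\rlb,\mu)=0$) and $\mc X_v(\mc J,\rlb,\mu)=\emptyset$ (then the routing cannot fit $\mu$ within the local capacities, so some outgoing link is already overloaded and dies with zero disturbance, matching $S=0$).

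For the inductive step, let $x:=G^v(\mc E_v^+(0),\mu)$. By Definition~\ref{def:oblivious} this is a feasible split with $x\ge\rlb$, i.e. $x\in\mc X_v(\mc J,\rlb,\mu)$, so the max in \eqref{eq:S-tilde-def-new} gives $\min_{e\in\mc J}\big(S_e(x_e)+S(\mc J\setminus\{e\},x,\mu)\big)\le S(\mc J,\rlb,\mu)$. Fix a minimizer $e^\ast$. The first phase inactivates $e^\ast$ at cost at most $S_{e^\ast}(x_{e^\ast})$, splitting on the two terms of \eqref{eq:Se-def}: if $S_{e^\ast}(x_{e^\ast})=\flowmax_{e^\ast}-x_{e^\ast}$ I apply $\delta_{e^\ast}(h)=\flowmax_{e^\ast}-x_{e^\ast}$ so that the flow $x_{e^\ast}$ reaches the reduced capacity and $e^\ast$ dies by \eqref{eq:E-dynamics}; otherwise $S_{e^\ast}(x_{e^\ast})=S(\mc E^+_{\tau_{e^\ast}},\zerobf,x_{e^\ast})$, which is finite and hence forces $\tau_{e^\ast}\ne n$ (since $S(\mc E_n^+,\zerobf,\cdot)=+\infty$), and I invoke the induction hypothesis on $w:=\tau_{e^\ast}$ (a strictly smaller reachable cone, by acyclicity), starting at time $h$ with lower bound $\zerobf$. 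As the inflow $\mu_w$ into $w$ contains the flow $x_{e^\ast}$ carried by $e^\ast$, we have $\mu_w\ge x_{e^\ast}$, and the monotonicity of $S$ in its inflow argument (Lemma~\ref{lem:s-function-decreasing}) yields a disturbance killing $w$ of magnitude at most $S(\mc E^+_w,\zerobf,\mu_w)\le S(\mc E^+_w,\zerobf,x_{e^\ast})=S_{e^\ast}(x_{e^\ast})$; once $w$ is inactive, $e^\ast=(v,w)$ dies by \eqref{eq:E-dynamics}.

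Let $\mc T_1$ be the time $e^\ast$ is confirmed inactive. The inflow $\mu$ at $v$ is unchanged, and upon removal of $e^\ast$ the routing re-splits to $G^v(\mc J\setminus\{e^\ast\},\mu)\ge x|_{\mc J\setminus\{e^\ast\}}$ by the link-monotonicity property \eqref{eq:link-monotonicity}. I then run the second phase by applying the induction hypothesis at $v$ with active set $\mc J\setminus\{e^\ast\}$, lower bound $\rlb':=x|_{\mc J\setminus\{e^\ast\}}$ (for which the hypothesis $G^v(\mc J\setminus\{e^\ast\},\mu)\ge\rlb'$ now holds), and start time $\mc T_1$; this inactivates $v$ using a disturbance of magnitude at most $S(\mc J\setminus\{e^\ast\},x,\mu)$. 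Because the two phases act on disjoint time windows $[h,\mc T_1)$ and $[\mc T_1,\mc T_v)$, their magnitudes add, giving $\Dnorm(\delta^v)\le S_{e^\ast}(x_{e^\ast})+S(\mc J\setminus\{e^\ast\},x,\mu)\le S(\mc E_v^+(0),\rlb,\mu)$, closing the induction.

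The main obstacle is the bookkeeping that keeps the recursive hypotheses valid in a general acyclic network, and in particular controlling the inflow of each recursively attacked node. The delicate point is that killing a downstream node $w$ can also kill links it shares with a parent lying outside $w$'s cone, which by \eqref{eq:link-monotonicity} can only raise the inflow into a later target; I would argue such increases are harmless because $S$ is non-increasing in the inflow (Lemma~\ref{lem:s-function-decreasing}), so every bound invoked above persists, and because acyclicity guarantees that each target's inflow is carried by links strictly outside the cone currently being disturbed, the inflow stays effectively constant throughout that target's attack window. Verifying this localization, together with the disjointness of the time windows and the propagation of the $x\ge\rlb$ constraint through \eqref{eq:link-monotonicity}, is where the care lies; the arithmetic of the two-term split itself is immediate from \eqref{eq:Se-def}--\eqref{eq:S-tilde-def-new}.
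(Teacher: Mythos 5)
Your skeleton is the paper's: induct, use the realized split $x=G^v(\mc E_v^+(0),\mu)\in\mc X_v(\mc J,\rlb,\mu)$ as a feasible point in the max--min \eqref{eq:S-tilde-def-new} to extract a minimizer $e^*$, inactivate $e^*$ first, then recurse on $\mc J\setminus\{e^*\}$ with the lower bound $x|_{\mc J\setminus\{e^*\}}$ propagated through link monotonicity \eqref{eq:link-monotonicity}, gluing budgets with Lemma~\ref{lem:s-function-decreasing}. Your explicit case split on the two terms of \eqref{eq:Se-def} (direct overload versus killing $\tau_{e^*}$) is what the paper's induction on the number of nodes is for. The gap is in what you substitute for the paper's cascade accounting. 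The technical core of the paper's written proof is the telescoping chain \eqref{eq:Se-telescopic}--\eqref{eq:Se-telescopic-base}: after the first kill, the cascade runs freely, and after each wave of simultaneous failures at times $t_1\le\cdots\le t_m$ one re-verifies that the value of $S$ \emph{recomputed on the realized residual graph at the current flows}, $S^{t_i}\big(\mc E_v^+(t_i),f(t_i),\mu\big)$, has not increased. This works only because all accounting stays at $v$, where the inflow is constant by hypothesis, so $f(t_i)\ge f(t_{i-1})$ by \eqref{eq:link-monotonicity}, and because links that die at $v$ have zero $S$-value at the current flow and so can be dropped (the set $\mc K$ in Lemma~\ref{lem:s-function-decreasing}). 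You assert that ``every bound invoked above persists'' but never carry this out; it is the bulk of the paper's argument, not bookkeeping.

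More seriously, the localization claim you lean on in the downstream branch --- that a target's inflow ``is carried by links strictly outside the cone currently being disturbed'' and hence ``stays effectively constant'' --- is false under Assumption~\ref{ass:acyclicity} alone. When you attack the cone below $w=\tau_{e^*}$, killing a shared descendant $u$ also removes links $(z,u)$ with $z$ outside that cone; if $(z,u)$ was the last active outgoing link of $z$, then $z$ dies by \eqref{eq:V-dynamics}, and this upward cascade can reach a sibling link $e_2\in\mc E_v^+$, at which point \eqref{eq:link-monotonicity} at $v$ raises the flow on $e^*$, i.e., $w$'s inflow jumps inside the very window in which you are invoking the induction hypothesis at $w$ with a constant inflow. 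Your fallback --- monotonicity of $S$ in the inflow from Lemma~\ref{lem:s-function-decreasing} --- does not repair this as a black box: after the jump, the routing at $w$ re-splits according to $G^w(\cdot,\mu_w')$, and since distributed policies need not be flow monotone (Example~\ref{ex:flow-monotonicity-motivation} shows even BPA routing is not), the new flows need not dominate the lower-bound argument of the allowance $S(\mc J_w\setminus\{\cdot\},f^w,\cdot)$ you already booked; Lemma~\ref{lem:s-function-decreasing} requires ordered lower-bound arguments and thus cannot compare the two values, so a planned one-shot kill can now cost more than was budgeted. To close this you must either strengthen the induction hypothesis to non-decreasing inflows and prove it against arbitrary (non-monotone) routing responses, or do what the paper does: keep the recursion at the node whose inflow is genuinely constant and absorb all collateral effects by recomputing $S$ on the residual graph after each failure wave.
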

\begin{proof}
It is sufficient to prove the lemma for $h=1$. 
For brevity in notation, we let $f(t):=G^v(\mc E_v^+(t),\mu)$ be the action of the control policy at $t \geq 0$. By assumption and the link monotonicity property of routing policy in \eqref{eq:link-monotonicity}, $f(t_2) \geq f(t_1) \geq \rlb$ for all $t_2 \geq t_1 \geq 0$. 
We follow the convention that, unless specified otherwise, $\delta^v_e(t)=0$ for all $e \in \mc E$ and $t \geq 1$. 

The proof is by double induction, on the number of nodes $n+1$ and the cardinality of $|\mc E_v^+(0)|$. The proof is easy to verify for acyclic networks with $n=1$ node and $|\mc E_v^+(0)|=1$, since in that case, with $\mc E_v^+(0)=\{e\}$, $S_{e}(\mu)  =[\flowmax_{e} - \mu]^+$, and therefore, one can apply 
$\delta^v_{e}(1)=[\flowmax_e-\mu]^+$, under which $v \notin \mc V(2)$ and $\Dnorm(\delta^v) = [\flowmax_e-\mu]^+$. 
%
Assume the lemma to be true for arbitrary acyclic networks with $n+1$ nodes and $|\mc E_v^+(0)| \leq k$ for some $k \geq 1$. 

For $|\mc E_v^+(0)|=k+1$, pick $e$ in $\argmin_{j \in \mc E_v^+(0)} \big( [\flowmax_j-f_j(0)]^+ + S\left(\mc E_v^+(0) \setminus \{j\}, y, \mu\right) \big)$. Consider a disturbance process $\delta^v$ such that $\delta^v_e(1)=[\flowmax_e-f_e(0)]^+$, in which case 
\be
\label{eq:ub-eq1}
S(\mc E_v^+(0), \rlb, \mu) \geq [\flowmax_e-y_e]^+ + S(\mc E_v^+(0) \setminus \{e\},f(0),\mu).
\ee

Let the times at which links fail simultaneously be $2=t_1 \leq \ldots \leq t_m$. Let $S^t(\mc J, \rlb, \mu)$ denote the functions computed by the BPA in \eqref{eq:S-tilde-def-new} for the residual graph $(\mc V(t), \mc E(t), \flowmax(t))$ at $t \geq 0$. 
By convention, we set $S^0 \equiv S$. 
Since links in $\mc E_v^+(t_{i-1}) \setminus \mc E_v^+(t_{i})$ fail simultaneously at $t_i$, $S_j^{t_i-1}(f_j(t_i-1))=0$ for all $j \in \mc E_v^+(t_{i-1}) \setminus \mc E_v^+(t_{i})$. Therefore, using Lemma~\ref{lem:s-function-decreasing}, for all $i \in \{2, \ldots, m\}$:
\be
\label{eq:Se-telescopic}
\begin{split}
S^{t_i}(\mc E_v^+(t_i),f(t_i),\mu) & \leq S^{t_i}(\mc E_v^+(t_i),f(t_i-1),\mu) \leq S^{t_i-1}(\mc E_v^+(t_i),f(t_i-1),\mu) \\ & = S^{t_i-1}(\mc E_v^+(t_i-1),f(t_i-1),\mu) \leq S^{t_{i-1}}(\mc E_v^+(t_{i-1}),f(t_{i-1}),\mu),
\end{split}
\ee
where we have used the fact that $f(t_i) \geq f(t_i-1)=f(t_{i-1})$ and $\mc E(t_i-1) \subseteq \mc E(t_{i-1})$. 
Using the same arguments, since $S^1_e(f_1(1))=0$, we have that 
\be
\label{eq:Se-telescopic-base}
S^{t_1}\big(\mc E_v^+(t_1), f(t_1), \mu \big) \leq S^1\big(\mc E_v^+(1), f(1), \mu \big) = S^{1}\big(\mc E_v^+(0) \setminus \{e\}, f(1), \mu \big) \leq S\big(\mc E_v^+(0) \setminus \{e\}, \rlb, \mu \big).
\ee

Combining \eqref{eq:Se-telescopic} and \eqref{eq:Se-telescopic-base}, we get that 
\be
\label{eq:ub-eq2}
S^{t_m}\big(\mc E_v^+(t_m), f(t_m), \mu\big) \leq S\big(\mc E_v^+(0) \setminus \{e\}, \rlb, \mu\big).
\ee
Using induction on the residual graph at $t=t_m$, where $|\mc E_v^+(t_m)| < |\mc E_v^+(0)|$, one can construct a disturbance process $(\tdelta^{v}(t))_{t\ge t_m}$ such that $v \notin \mc V(\mc T)$, and 
\be
\label{eq:ub-eq3}
D(\tdelta^v) \leq S^{t_m}\big(\mc E_v^+(t_m), f(t_m), \mu\big).
\ee 
Augmenting $\delta^v$ with $\tdelta^v$, i.e., $(\delta^v(t))_{t \geq t_m}=(\tdelta^{v}(t))_{t\ge t_m}$, and applying \eqref{eq:ub-eq1}, \eqref{eq:ub-eq2} and \eqref{eq:ub-eq3}, we get that 
$$\Dnorm(\delta^v)=[\flowmax_e-f_e(0)]^+ +\Dnorm(\tdelta^{v}) \leq S(\mc E_v^+(0), f(0), \mu) \leq S(\mc E_v^+(0), \rlb, \mu),$$
which proves the lemma for acyclic networks with $n+1$ nodes and $|\mc E_v^+(0)| = k+1$. The proof can be easily extended to acyclic networks with $n+2$ nodes and $|\mc E_v^+(0)|=k$, after which the lemma follows from induction. 
\end{proof}

Theorem~\ref{thm:upper-bound} follows by combining Lemma~\ref{lem:upperbound-general-lemma} for $v=0$, $h=1$, $\mu=\lambda$, $\rlb = \zerobf$ and $(\mc V(0), \mc E(0))=(\mc V, \mc E)$ with Proposition~\ref{prop:transferring-notion-equivalence}.

\subsection{Proof of Proposition~\ref{prop:symmetric} and Theorem~\ref{thm:lower-bound}}
Proposition~\ref{prop:symmetric} follows from Theorem~\ref{thm:lower-bound} by recalling that, for symmetric flow networks that are directed trees, BPA routing satisfies flow monotonicity.
Theorem~\ref{thm:lower-bound} follows from the following lemmas.
The following simple property of the functions $S(\mc J,\rlb,\mu)$ computed in \eqref{eq:S-tilde-def-new} will be useful in the proofs in this subsection.
We recall the definition of $g(\mc J,\rlb,\mu)$ from \eqref{eq:g-def-1}.
\begin{lemma}
\label{lem:S-split-upperbound}
For any $x \in g(\mc J, \rlb, \mu)$, $\mc J \subseteq \mc E_v^+$, $\rlb \in \real_+^{\mc E_v^+}$, $v \in \mc V \setminus \{n\}$ and $\mu \geq 0$, 
$$S(\mc J, \rlb, \mu) \leq \sum_{e \in \mc K} S_e(x_e) + S(\mc J \setminus \mc K, x, \mu) \qquad \forall \mc K \subseteq \mc J.$$
\end{lemma}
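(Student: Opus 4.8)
The plan is to exploit two monotonicity properties of the function $S$ collected in Lemma~\ref{lem:s-function-decreasing}: that $S(\mc J,\rlb,\mu)$ is non-increasing in the lower bound $\rlb$ (coordinatewise), and that each $S_e$ from \eqref{eq:Se-def} is non-increasing in its scalar argument (indeed $\flowmax_e-\mu$ is decreasing, and $S(\mc E^+_{\tau_e},\zerobf,\mu)$ is decreasing in $\mu$ by the same lemma). Fix $x\in g(\mc J,\rlb,\mu)$, so that by \eqref{eq:S-tilde-def-new} and \eqref{eq:g-def-1} one has $S(\mc J,\rlb,\mu)=\min_{e\in\mc J}\big(S_e(x_e)+S(\mc J\setminus\{e\},x,\mu)\big)$, and, by \eqref{eq:g-def-implication}, $\rlb\le x\le\flowmax$ with $\onebf'x=\mu$. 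I would first reduce the claim to the same statement but with the lower bound $x$ in place of $\rlb$, by observing that $x$ is feasible for the tighter program defining $S(\mc J,x,\mu)$ and attains there the value $S(\mc J,\rlb,\mu)$; hence $S(\mc J,\rlb,\mu)\le S(\mc J,x,\mu)$, and it suffices to bound $S(\mc J,x,\mu)$.

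The engine of the proof is a single-link peeling inequality: for any active set $\mc I$, any lower bound $\rlb'$ with nonempty feasible set, and any $e_0\in\mc I$, $S(\mc I,\rlb',\mu)\le S_{e_0}(\rlb'_{e_0})+S(\mc I\setminus\{e_0\},\rlb',\mu)$. To prove it I would take a maximizer $z\in g(\mc I,\rlb',\mu)$, retain only the term $e=e_0$ in the defining minimum to get $S(\mc I,\rlb',\mu)\le S_{e_0}(z_{e_0})+S(\mc I\setminus\{e_0\},z,\mu)$, and then invoke $z\ge\rlb'$ from \eqref{eq:g-def-implication} together with the two monotonicity properties to replace $z$ by $\rlb'$ in both summands (decreasing $S_{e_0}$ on the first, decreasing-in-lower-bound $S$ on the second).

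With this inequality in hand I would prove the lemma by induction on $|\mc K|$, keeping the lower bound fixed at $x$. The base case $\mc K=\emptyset$ is the reduction of the first paragraph. For the step, write $\mc K=\mc K'\cup\{e_0\}$ with $e_0\notin\mc K'$; the induction hypothesis gives $S(\mc J,x,\mu)\le\sum_{e\in\mc K'}S_e(x_e)+S(\mc J\setminus\mc K',x,\mu)$, and applying the peeling inequality to $\mc I=\mc J\setminus\mc K'$ with lower bound $x$ and element $e_0$ yields $S(\mc J\setminus\mc K',x,\mu)\le S_{e_0}(x_{e_0})+S(\mc J\setminus\mc K,x,\mu)$ (here $\rlb'_{e_0}=x_{e_0}$ since the lower bound is $x$). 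Adding the two bounds closes the induction and, combined with $S(\mc J,\rlb,\mu)\le S(\mc J,x,\mu)$, establishes the lemma.

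The main obstacle to anticipate is precisely why one cannot simply unfold the top-level recursion along the maximizer $x$: when a link is removed the inflow $\mu$ is conserved and reassigned over the surviving links, so $x$ restricted to a proper subset no longer sums to $\mu$ and is infeasible for the corresponding subproblem. This is what forces the detour through the fixed lower bound $x$ and the monotonicity-based peeling inequality rather than a recursion on sub-maximizers, and it is where care with the conserved-inflow bookkeeping is essential. The only remaining nuisance is the degenerate case where some $\mc X_v(\mc I,x,\mu)$ is empty, in which $S=0$ by the convention in the BPA; since every $S$-value and every $S_e(x_e)$ with $x_e\le\flowmax_e$ is nonnegative, all the inequalities above hold trivially there.
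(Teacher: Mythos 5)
Your proposal is correct and follows essentially the same route as the paper's proof: induction on $|\mc K|$, where each step peels off one link by taking a maximizer $z$ of the subproblem, retaining a single term of the defining minimum, and then replacing $z$ by the lower bound via the monotonicity properties of Lemma~\ref{lem:s-function-decreasing}. The only differences are organizational --- you isolate the peeling step as a standalone inequality and pass to the lower bound $x$ at the outset (via $S(\mc J,\rlb,\mu)\le S(\mc J,x,\mu)$), whereas the paper keeps $\rlb$ in the leftmost term and performs the same peeling inline in its inductive step --- so the substance is identical.
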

\begin{proof}
The lemma is trivially true from \eqref{eq:S-tilde-def-new} and \eqref{eq:g-def-1} for $|\mc K|=1$. Assume it to be true for all $\mc K \subset \mc J$ with $|\mc K| \leq k$ for some $k \geq 1$. Consider the case $|\mc K|=k+1$. Induction implies that, for any $j \in \mc K$:
\be
\label{eq:S-split-eq1}
S(\mc J, \rlb, \mu) \leq \sum_{e \in \mc K \setminus \{j\}} S_e(x_e)+ S(\mc J \cup \{j\}\setminus \mc K, x, \mu).
\ee
Applying induction to the second term in \eqref{eq:S-split-eq1}, for every $z \in g(\mc J \cup \{j\} \setminus \mc K, x, \mu)$ we get
 \be
\label{eq:S-split-eq2}
S(\mc J \cup \{j\} \setminus \mc K, x, \mu) \leq S_j(z_j)+ S(\mc J \setminus \mc K, z, \mu).
\ee
$z \in g(\mc J  \cup \{j\} \setminus \mc K, x, \mu)$ implies $z \geq x$, which in turn implies $S_j(z_j) \leq S_j(x_j)$ and $S(\mc J \setminus \mc K, z, \mu) \leq S(\mc J \setminus \mc K, x, \mu)$ Lemma~\ref{lem:s-function-decreasing}. Combining this with \eqref{eq:S-split-eq1} and \eqref{eq:S-split-eq2} establishes the lemma for $|\mc K|=k+1$, and hence for all $\mc K \subseteq \mc J$ by induction.
\end{proof}

For the next lemma, we again allow the possibility that $(\mc V(0),\mc E(0)) \neq (\mc V, \mc E)$. 

\begin{lemma}
\label{lem:lb-lem1}
Consider a node $v \in \mc V \setminus \{n\}$ with $|\mc E_v^+| \leq 3$ in a flow network $\mc N$ with initial condition $(\mc V(0),\mc E(0))\subseteq (\mc V, \mc E)$, satisfying Assumption~\ref{ass:polytree}, and operating under BPA routing. 
Let the inflow $\lambda_v(t)$ be non-decreasing and satisfy $\max_{t \geq 0} \lambda_v(t)=\mu \geq 0$. If $\lambda_v(t) \equiv \mu$\,, or if 
BPA routing at $v$ is flow monotone, then $v \in \mc V(\mc T)$ under any disturbance process $\delta^v$ satisfying 
$\Dnorm(\delta^v) < S(\mc E_v^+(0), \rlb^*, \mu)$, with $r^*=G^v(\mc E_v^+,\mu)$.

\end{lemma}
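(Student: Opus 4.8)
The plan is to establish the contrapositive of the stated implication: I would show that whenever the node $v$ is eventually deactivated, i.e. $v\notin\mc V(\mc T)$, the total disturbance must satisfy $\Dnorm(\delta^v)\ge S(\mc E_v^+(0),\rlb^*,\mu)$. By the deactivation rule \eqref{eq:V-dynamics}, $v$ becomes inactive exactly when \emph{all} of its outgoing links have become inactive, so the task is to lower bound the disturbance needed to destroy every link of $\mc E_v^+(0)$. I would run this as a double induction on the number of nodes of $\mc N$ and on $|\mc E_v^+(0)|$, exactly paralleling Lemma~\ref{lem:upperbound-general-lemma} but in the lower-bounding direction, so that the destruction of a downstream node $\tau_e$ and the destruction of $v$ after one outgoing link has been removed are both covered by strictly smaller instances. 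Three ingredients set the stage: Assumption~\ref{ass:polytree} makes the subtrees hanging from distinct outgoing links of $v$ \emph{disjoint}, so disturbance spent in one subtree is never re-used in another; the bound $|\mc E_v^+|\le 3$ together with flow monotonicity (or constancy of $\lambda_v$) lets me invoke Lemma~\ref{lem:monotonicity}, giving that every link flow is non-decreasing in time; and the monotonicity of $S(\mc J,\rlb,\mu)$ in both $\rlb$ and $\mu$ (Lemma~\ref{lem:s-function-decreasing}) will be used repeatedly.

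For the base case $\mc E_v^+(0)=\{e\}$, killing $v$ means killing $e$, which happens either by overload---costing at least $\flowmax_e-f_e\ge\flowmax_e-\mu$ on $e$ since $f_e\le\mu$---or by deactivation of the head $\tau_e$. In the latter case the inflow into $\tau_e$ equals the flow on $e$, is non-decreasing, and is bounded by $\mu$; applying the induction hypothesis to the strictly smaller network rooted at $\tau_e$ and using that $S(\mc E_{\tau_e}^+,\cdot,\cdot)$ is non-increasing in its last argument bounds this cost below by $S(\mc E_{\tau_e}^+,\zerobf,\mu)$. Taking the minimum of the two options recovers exactly $S_e(\mu)=S(\{e\},\rlb^*,\mu)$ by the definition \eqref{eq:Se-def}.

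For the inductive step with $|\mc E_v^+(0)|=k+1\in\{2,3\}$, let $e$ be one of the first links to be deactivated, at a time $t_1$, and split the disturbance along the disjoint subtrees of Assumption~\ref{ass:polytree} into the part inside the subtree rooted at $e$ and the remaining part. Write $z:=G^v(\mc E_v^+(0),\mu)$ for the BPA-routing split at the \emph{maximal} inflow $\mu$; since no link at $v$ has died before $t_1$, flow monotonicity \eqref{eq:flow-monotonicity} gives $f_e(t_1-1)=G^v_e(\mc E_v^+(0),\lambda_v(t_1-1))\le G^v_e(\mc E_v^+(0),\mu)=z_e$, so the base-case bookkeeping yields a disturbance of at least $S_e(f_e(t_1-1))\ge S_e(z_e)$ inside the subtree of $e$. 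For the remaining part I would apply the induction hypothesis on $|\mc E_v^+|$ to the residual network that carries $v$ with outgoing set $\mc E_v^+(0)\setminus\{e\}$, obtaining a lower bound of $S(\mc E_v^+(0)\setminus\{e\},\rlb^*,\mu)\ge S(\mc E_v^+(0)\setminus\{e\},z,\mu)$, the last inequality by monotonicity of $S$ in its lower-bound argument (Lemma~\ref{lem:s-function-decreasing}) together with $z\ge\rlb^*$ from \eqref{eq:g-def-implication}. Finally I would combine the two estimates through Lemma~\ref{lem:S-split-upperbound} with $\mc K=\{e\}$ evaluated at $z\in g(\mc E_v^+(0),\rlb^*,\mu)$, which gives $S(\mc E_v^+(0),\rlb^*,\mu)\le S_e(z_e)+S(\mc E_v^+(0)\setminus\{e\},z,\mu)$, precisely the sum of the two lower bounds just obtained.

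The step I expect to be the main obstacle is the capacity bookkeeping hidden in the ``remaining part'' bound. The residual network at $t_1$ carries the already-reduced capacities $\flowmax(t_1)$, so the induction hypothesis applied from time $t_1$ only lower bounds the post-$t_1$ disturbance by the BPA value $S^{t_1}(\cdot)$ computed with these reduced capacities, and $S^{t_1}\le S^{0}$. To recover the original-capacity target $S^0(\mc E_v^+(0)\setminus\{e\},\rlb^*,\mu)$ I must account for the disturbance already spent on the surviving subtrees during $[0,t_1]$; I would do this by a quantitative, $1$-Lipschitz-in-capacity refinement of the telescoping estimate used in the upper-bound proof (cf. \eqref{eq:Se-telescopic}--\eqref{eq:Se-telescopic-base}), which is available because $S(\mc J,\rlb,\mu)$ is assembled from the affine maps $\flowmax_e-x_e$ through pointwise minima and maxima and is therefore nonexpansive in the capacity vector. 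Simultaneous deactivations at $t_1$ are absorbed into the same telescoping, and flow monotonicity enters decisively through the domination $f\le z$, which is what makes the per-link overload costs compose in the correct direction; the restriction $|\mc E_v^+|\le 3$ is used only to guarantee, via Lemma~\ref{lem:monotonicity}, the time-monotonicity of flows on which the whole composition rests.
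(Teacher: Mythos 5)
Your overall skeleton does match the paper's: an inductive argument (the paper uses backward induction on the node index with an inner recursion on the active out-link set, you use a double induction on network size and out-degree, which is equivalent in spirit), the splitting inequality of Lemma~\ref{lem:S-split-upperbound} evaluated at $z=G^v(\mc E_v^+(0),\mu)\in g(\mc E_v^+(0),\rlb^*,\mu)$, the domination $f(t)\le z$ from flow monotonicity, and the disjoint-subtree accounting afforded by Assumption~\ref{ass:polytree}. Up to the bound $S_e(f_e(t_1-1))\ge S_e(z_e)$ on the disturbance spent in the subtree of the first failed link, your argument is the paper's argument (cf.\ \eqref{eq:counter-example-proof}--\eqref{eq:BPA-implication}).

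The genuine gap is in your treatment of the ``remaining part.'' You apply the induction hypothesis to the residual network at time $t_1$ with the reduced capacities $\flowmax(t_1)$, and then try to recover the original-capacity value via a $1$-Lipschitz-in-capacity correction. Two problems. First, and fatally: the lemma being proved (and hence your induction hypothesis) is a statement about a network \emph{operating under its own BPA routing}. After $t_1$ the system is still running the BPA routing computed from the \emph{original} capacities $\flowmax$ via \eqref{eq:BPA-routing} and \eqref{eq:g-def-1}; this is in general not the BPA routing of the reduced-capacity network $(\mc V(t_1),\mc E(t_1),\flowmax(t_1))$, whose $g$-functions would be built from different $S$-functions. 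So the induction hypothesis simply does not cover that residual instance; you would need a strictly stronger statement (capacities $\flowmax'\le \flowmax$, routing computed from $\flowmax$), which is not what is being inducted on. Second, your nonexpansiveness justification is incomplete: $S$ is assembled not only through pointwise minima and maxima of the affine maps $\flowmax_e-x_e$ but also through \emph{sums} (see \eqref{eq:S-tilde-def-new}), and sums of $1$-Lipschitz functions are $1$-Lipschitz in $\ell_1$ only when the summands depend on disjoint capacity variables; that itself needs an induction leaning on Assumption~\ref{ass:polytree}. The paper sidesteps both issues by never passing to reduced capacities: it splits the \emph{total} disturbance, counted from $t=0$ and thus including what was spent on surviving subtrees before $t_1$, as $\Dnorm(\delta^v)=\Dnorm(\tdelta^{\mc K})+\sum_{e\in\mc K}\Dcum_e(\mc T)$, combines this with Lemma~\ref{lem:S-split-upperbound} to obtain $\Dnorm(\tdelta^{\mc K})<S\big(\mc E_v^+(t_1),f^*,\mu\big)$ --- an inequality against the original-capacity function $S$ --- and then recurses on the smaller active link set \emph{within the same network}, where the induction hypothesis (which explicitly allows initial conditions $(\mc V(0),\mc E(0))\subseteq(\mc V,\mc E)$) legitimately applies. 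Rewriting your inductive step with this all-time accounting eliminates both $S^{t_1}$ and the Lipschitz patch, and also absorbs simultaneous failures cleanly by taking $\mc K$ to be the whole set of links dying at $t_1$.
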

\begin{proof} Let $\mc E_v^+ = \{e_1,e_2,e_3\}$ and $f_i^*:=G^v_{e_i}(\mc E_v^+(0),\mu)$ for all $e_i \in \mc E_v^+(0)$. We consider three possible scenarios for $|\mc E_v^+(0)|$ separately.
We prove by backward induction on $v$ in $\zerountil{n-1}$.
First consider $v=n-1$.  
When $\mc E_v^+(0)=\{e_1\}$,  \eqref{eq:Se-def} implies that $S_{e_1}(\mu) = \flowmax_{1} - \mu \leq \flowmax_1 - \lambda_v(t)$. Therefore, for all $t \geq 0$, $\Dcum_{e_1}(t) \leq \Dcum_{e_1}(\mc T) \leq \Dnorm(\delta^v)<S_{e_1}(\mu)=\flowmax_{1}-\mu \leq \flowmax_{1} - \lambda_v(t)$. That is, $f_{e_1}(t)=\lambda_v(t) < \flowmax_{1} - \Dcum_{e_1}(t)=\flowmax_1(t)$ for all $t \geq 0$, and hence $v \in \mc V(\mc T)$.

When $\mc E_v^+(0)=\{e_1,e_2\}$, let $t_1:=\min \setdef{t \geq 0}{\mc E_v^+(t) \neq \mc E_v^+(0)}$. 
To avoid triviality, assume $t_1 < \infty$, and let $\mc K:=\mc E_v^+(0) \setminus \mc E_v^+(t_1)$ be the set of links to become inactive simultaneously at $t_1$. Then, necessarily $\Dcum_e(t_1) \geq [\flowmax_e-f_e(t_1)]^+ \geq [\flowmax_e - f_e^*]^+$ for all $e \in \mc K$, where the second inequality follows from flow monotonicity. Therefore,
\begin{equation}
\label{eq:counter-example-proof}
\sum_{e \in \mc K} \Dcum_e(\mc T) \geq  \sum_{e \in \mc K} \Dcum_e(t_1) \geq \sum_{e \in \mc K} [\flowmax_e-f_e^*]^+. 
\end{equation}
Let $\tdelta^{\mc K}$ be such that $\tdelta^{\mc K}_e(t) \equiv 0$ for all $e \in \mc K$ and $\tdelta^{\mc K}_e(t) \equiv \delta^v_e(t)$ for all $e \in \mc E_v^+(t_1)$. Therefore, 
\be
\label{eq:cost-split}
\Dnorm(\delta^v)=\Dnorm(\tdelta^{\mc K}) + \sum_{e \in \mc K} \Dcum_e(\mc T).
\ee 
Combining \eqref{eq:counter-example-proof} with Lemma 4, where we note that $f^*:=G^v(\mc E_v^+(0),\mu) \in g(\mc E_v^+(0),\rlb^*,\mu)$ (from \eqref{eq:BPA-routing}), we get 
\be
\label{eq:BPA-implication}
S(\mc E_v^+(0), \rlb^*, \mu) \leq \sum_{e \in \mc K} \Dcum_e(\mc T) + S(\mc E_v^+(t_1), f^*,\mu)
\ee

Combining \eqref{eq:BPA-implication} and \eqref{eq:cost-split} with $\Dnorm(\delta^v) < S(\mc E_v^+(0), \rlb^*, \mu)$, we get 
\begin{equation}
\label{eq:smallD-ub}
\Dnorm(\tdelta^{\mc K}) < S(\mc E_v^+(t_1), f^*,\mu).
\end{equation} 
If $|\mc E_v^+(t_1)|=0$, then \eqref{eq:smallD-ub} is a contradiction, and if $|\mc E_v^+(t_1)|=1$, the proof is then completed by using the $|\mc E_v^+(0)|=1$ case since $S(\mc E_v^+(t_1), f^*,\mu)=S(\mc E_v^+(t_1), \zerobf,\mu)$.

For $\mc E_v^+(0)=\{e_1,e_2, e_3\}$, one follows the same argument as before to arrive at \eqref{eq:smallD-ub}. If $|\mc E_v^+(t_1)| \leq 1$, then we use the same arguments as before. If $|\mc E_v^+(t_1)| = 2$, then necessarily $\mc E_v^+(0)=\mc E_v^+$, in which case $f^*=\rlb^*$. Therefore, one can continue with the 
$|\mc E_v^+(0)|=2$ case to complete the proof.  This proves the lemma for $v=n-1$. 

Assume that the lemma is true for all $v \geq \ell$ for some $\ell \geq 1$. Let $v=\ell-1$. 
We recall Lemma~\ref{lem:monotonicity} for monotonicity of $f_i(t)$. As for the $v=n-1$, we consider three cases depending on the value of $|\mc E_v^+(0)|$
and provide proof using similar arguments. We provide a few details only for $\mc E_v^+(0)=\mc E_v^+=\{e_1,e_2,e_3\}$. Lemma~\ref{lem:monotonicity} implies that $f_i(t) \leq f_i^*=G_i^v(\mc E_v^+, \mu)$ for all $i =1, 2, 3$. 
Let $\delta^v =\sum_{i=0}^3 \delta^v_i$, where, for $i=1,2,3$, $\delta^v_i$ is the component of $\delta^v$ on links consisting of $e_i$ and the sub-tree rooted at $\tau_{e_i}$, and $\delta^v_0$ is the component on the rest of the links in the network. Let $\Dcum_i(t)$, $i=0,1,2,3$, be defined accordingly. 
Let $t_1:=\min \setdef{t \geq 0}{\mc E_v^+(t) \neq \mc E_v^+}$. 
To avoid triviality, assume $t_1 < \infty$, and let $\mc K:=\mc E_v^+ \setminus \mc E_v^+(t_1)$ be the set of links to become inactive simultaneously at $t_1$.
Then, necessarily $\Dcum_e(t_1) \geq S_e(f_e(t_1)) \geq S_e(f_e^*)$ for all $e \in \mc K$, where the second inequality follows from Lemma~\ref{lem:s-function-decreasing}. Following similar arguments as before, we arrive at \eqref{eq:smallD-ub}, after which we use the relevant case depending on the value of $|\mc E_v^+(t_1)| \leq 2$. This establishes the proof for 
$v=\ell-1$, and hence by backward induction for all $v \in \zerountil{n-1}$.
\end{proof}

Theorem~\ref{thm:lower-bound} is obtained from Lemma~\ref{lem:lb-lem1} by substituting $v=0$, $(\mc V(0), \mc E(0))=(\mc V, \mc E)$, $\lambda_v(t) \equiv \lambda$, and noting that $S(\mc E_0^+,\rlb^*,\lambda) = S(\mc E_0^+,\zerobf,\lambda)=S^*(\mc N, \lambda)$.

\subsection{Proof of Proposition~\ref{prop:monotonicity-sufficient-conditions-graphical}}
%


The BPA routing for case (a) is explicitly computed in \eqref{eq:Gstar1-caseb}, which readily implies flow-monotonicity. 
The proof for case (c) follows from the constructs used in the proof of case (b). 
Therefore, we provide details only for case (b).

For brevity in notation, let $y(\mu) \equiv G^{v_0}(\mc E_{v_0}^+,\mu)$
be the flow under BPA routing.  For brevity in notation, and since the lower bound constraints imposed by $\rlb$ are redundant in this case, we drop the dependence of $S(.\rlb,.)$ on $\rlb$.
Following Figure~\ref{fig:Rv0}(b), the general relationship between $S_{e_1}(\mu)$ and $S(\mc E_{v_1}^+,\mu)$ can be written as:
\begin{equation}
\label{eq:S1-cased}
S_{e_1}(\mu) =
 \left\{\ba{lcl} 
S(\mc E_{v_1}^+,\mu) & \se & \mu \in [\bar{\mu}_1,\bar{\mu}_2], \\
\left[\flowmax_{1}-\mu\right]^+ & \se & \mu \in  [0,\bar{\mu}_1] \cup [\bar{\mu}_2,\flowmax_{1}] \,, 
\ea \right. 
\end{equation}
where $\bar{\mu}_1=2\left(\flowmax_{3} +\flowmax_{4}-\flowmax_{1}\right)$, $\bar{\mu}_2=2 \flowmax_{1} - (\flowmax_{3} + \flowmax_{4})$ and one can write an expression for $S(\mc E_{v_1}^+,\mu)$ similar to \eqref{eq:Rv-caseb}. We prove flow-monotonicity by showing that $\frac{\de}{\de \mu}y(\mu) \geq 0$. 
Let $\BPAflowmax_i=\sup\{\mu: \, S_{e_i}(\mu) > 0\}$ for $i=1,2$ be the effective flow capacity of link $e_i$. 
When $y(\mu)$ is on the boundary of the feasible set $\mc X_{v_0}(\mc E_{v_0}^+,\mu)$, without loss of generality, assume that $y_1(\mu)=[\mu-\BPAflowmax_2]^+$ and $y_2(\mu)=\min\{\mu, \BPAflowmax_2\}$, which is trivially flow-monotone. 
When $y(\mu)$ is in the interior of the feasible set, $y_2$ satisfies $S_{e_2}(y_2) + S_{e_1}(\mu)=S_{e_1}(\mu-y_2)+S_{e_2}(\mu)$. Therefore, by the implicit function theorem,\footnote{$S_{e_1}(\mu)$ and $S_{e_2}(\mu)$ are continuous piecewise linear functions, e.g., see Figure~\ref{fig:Rv0} (c). Therefore, one can find analytic functions that can approximate $S_{e_1}(\mu)$ and $S_{e_2}(\mu)$, as well as their derivative arbitrarily closely at almost all points. We implicitly assume that $S_{e_1}(\mu)$ and $S_{e_2}(\mu)$ are replaced with such analytic approximations.} we have that 
\be
\label{eq:x3star-derivative-twolinks}
 \frac{d}{d\mu}y_2(\mu)=\frac{S'_{e_2}(\mu)+S'_{e_1}(\mu-y_2)-S'_{e_1}(\mu)}{S'_{e_2}(y_2)+S'_{e_1}(\mu-y_2)},
\ee 
where $S'_{e_i}(y) \equiv \frac{\de}{\de y} S_{e_i}(y)$, $i=1,2$. The strictly decreasing property of $S_{e_i}$, $i=1,2$ from Lemma~\ref{lem:s-function-decreasing} implies that the denominator of \eqref{eq:x3star-derivative-twolinks} is negative for all $\mu \in \left(0, \BPAflowmax_{1} + \BPAflowmax_{2} \right)$. \eqref{eq:Rv-caseb} and \eqref{eq:S1-cased} imply that $S'_{e_1}(\mu-y_2)-S'_{e_1}(\mu) \leq 1/2$ if $\mu < \flowmax_1$ and 
equal to zero if $\mu > \flowmax_1$. This combined with $S_{e_2}(\mu) \equiv [\flowmax_2-\mu]^+$ and the assumption that $\flowmax_2 \geq \flowmax_1$ implies that the numerator of \eqref{eq:x3star-derivative-twolinks} is non positive for all $\mu < \BPAflowmax_1 + \BPAflowmax_2$. Hence,  
 $\frac{d}{d\mu}y_2(\mu) \geq 0$.
The proof for $ \frac{d}{d\mu}y_1(\mu) \geq 0$ follows along similar lines. 

\section{Conclusions and Future Work}
\label{sec:conclusions}
In this paper, we proposed a dynamical model for cascading failures in single-commodity network flows, where the network dynamics is governed by a deterministic and possibly adversarial disturbance process which incrementally reduces flow capacity on the links, and distributed oblivious routing policies that have information only about the 
local inflow and active status of outgoing links, and in particular no information about the disturbance process. The salient feature of this
model is to couple the flow dynamics with the link and node inactivation dynamics. An immediate outcome of this coupling is that, links and nodes to fail successively are not necessarily adjacent to each other. 
We quantified margin of resilience to be the minimum cumulative capacity reductions across time and links of the network, under which the network looses its transferring property. We presented an algorithm that provides an upper bound on the margin of resilience for directed acyclic graphs between a single origin-destination pair. The same algorithm motivates a routing policy which provably matches the upper bound for networks which are tree like, have out-degree at most 3, and induce monotonicity in the flow dynamics. 

In future, we plan to extend our analysis to networks with general graph topologies, multi-commodity flows, non-oblivious routing policies with possibly multi hop information, 
stochastic disturbance processes, reversible link activation dynamics under finite time link recovery, and exogenous coupling between failure and recovery of distant links due to coupling between the given network and other exogeneous networks. We also plan to investigate computationally efficient, and possibly distributed, algorithms for (approximate) computation of the margin of resilience. Finally, we plan to extend our formulation to the physics of other networks such as power, gas, water and supply chains.

\bibliographystyle{IEEEtran}
\bibliography{ksmain,savla} 

\appendix 
In this section, we state and prove certain useful properties of $S(\mc  J, \rlb,\mu)$ defined in \eqref{eq:S-tilde-def-new}.

\begin{lemma}
\label{lem:s-function-decreasing}
Consider two networks $\mc N^1=(\mc V, \mc E^1, \flowmax^1)$ and $\mc N^2=(\mc V, \mc E^2, \flowmax^2)$, each satisfying Assumption~\ref{ass:acyclicity}. Let $S^1$ and $S^2$ be the functions computed by the Backward Propagation Algorithm for $\mc N^1$ and $\mc N^2$ respectively. For any $v \in \mc V \setminus \{n\}$, let $\mc J$ be any subset of links common to $\mc N^1$ and $\mc N^2$, and outgoing from $v$ .
Then,
$$0 \leq \mu_2 \leq \mu_1, \, \, \zerobf \leq \rlb^2 \leq \rlb^1, \, \,  \mc E^1 \subseteq \mc E^2 \text{ and } \flowmax^1 \leq \flowmax^2 \implies S^1(\mc K, \rlb^1, \mu_1) \leq S^2(\mc J, \rlb^2, \mu_2),$$
where $\mc K:=\setdef{e \in \mc J}{S^1_e(\rlb^1_e) > 0}$.
\end{lemma}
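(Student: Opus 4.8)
The plan is to establish all four monotonicities packaged in the statement --- in the inflow $\mu$, in the lower bound $r$, in the capacities $\flowmax$ together with the link set $\mc E$, and in the passage from the live subset $\mc K$ to $\mc J$ --- by a single double induction that mirrors the two recursions \eqref{eq:Se-def} and \eqref{eq:S-tilde-def-new} defining $S$. The outer induction is on the node $v$, taken in reverse topological order (Assumption~\ref{ass:acyclicity} guarantees such an order), so that when $v$ is processed the full claim is already available at every downstream head node $\tau_e$, $e \in \mc E_v^+$; the inner induction is on $|\mc J|$, starting from singletons.

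The dependence on $r$ is essentially free, and I would dispose of it first: in \eqref{eq:S-tilde-def-new} the objective $\min_{e}\big(S_e(x_e)+S(\mc J\setminus\{e\},x,\mu)\big)$ does not involve $r$ at all, since the inner recursion re-enters with the allocation $x$ (not $r$) as the new lower bound. Thus $r$ enters only through the feasible set $\mc X_v(\mc J,r,\mu)$ of \eqref{eq:X-v-def}, and $r^2\le r^1$ gives the inclusion $\mc X_v(\mc J,r^1,\mu)\subseteq\mc X_v(\mc J,r^2,\mu)$, so maximizing the same objective over a larger set yields $S(\mc J,r^1,\mu)\le S(\mc J,r^2,\mu)$. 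For the single-link comparison I would use \eqref{eq:Se-def}: the first term obeys $\flowmax^1_e-\mu_1\le\flowmax^2_e-\mu_2$ directly, while the downstream term $S^1(\mc E^{+,1}_{\tau_e},\zerobf,\mu_1)\le S^2(\mc E^{+,2}_{\tau_e},\zerobf,\mu_2)$ is exactly the lemma at $\tau_e$, supplied by the outer induction hypothesis (the enlargement of the out-link set from $\mc E^{+,1}_{\tau_e}$ to $\mc E^{+,2}_{\tau_e}$ is absorbed by the set-monotonicity already built into the claim).

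For the inductive step at general $\mc J$ I would fix a maximizer $x^1\in g(\mc K,r^1,\mu_1)$, so that $S^1(\mc K,r^1,\mu_1)=\min_{e\in\mc K}\big(S^1_e(x^1_e)+S^1(\mc K\setminus\{e\},x^1,\mu_1)\big)$, and then exhibit a single feasible $y\in\mc X_v(\mc J,r^2,\mu_2)$ witnessing a large enough value of the right-hand max in \eqref{eq:S-tilde-def-new}. The natural construction puts $y_e\in[r^2_e,x^1_e]$ on $\mc K$ (possible since $r^2_e\le r^1_e\le x^1_e$) and $y_e=r^2_e$ on the dead links $\mc J\setminus\mc K$, rebalancing so that $\sum_{\mc J}y_e=\mu_2\le\mu_1$; one then checks, term by term in the inner minimum over $e\in\mc J$, that every term dominates $S^1(\mc K,r^1,\mu_1)$. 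For a dead link $e\in\mc J\setminus\mc K$ one uses $S^2_e(y_e)\ge0$ together with $\mc K\subseteq\mc J\setminus\{e\}$ and the inner hypothesis applied to the smaller set $\mc J\setminus\{e\}$; for a live link $e\in\mc K$ one combines the single-link comparison $S^2_e(y_e)\ge S^1_e(x^1_e)$ (valid because $y_e\le x^1_e$) with the inner hypothesis on $\mc K\setminus\{e\}$. The appearance of $\mc K$ rather than $\mc J$ on the left is precisely what renders the dead-link terms harmless, and it is what is needed for the telescoping identity in \eqref{eq:Se-telescopic}, where links that fail by overload carry $S_e=0$.

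I expect the delicate point to be the simultaneous reconciliation of three constraints while building $y$: the sum constraint $\sum_{\mc J}y_e=\mu_2$, the two-sided box $r^2\le y\le\flowmax^2$, and the competing requirements $y_e\le x^1_e$ on $\mc K$ (for the single-link step) versus the lower-bound ordering needed to invoke the inner hypothesis on the reduced sets. In particular, the edge cases where a feasible set is empty --- so that $S=0$ by convention --- must be handled with care, and it is exactly here that the definition of $\mc K$ as the live subset $\{e\in\mc J:S^1_e(r^1_e)>0\}$ does the work, ruling out the degenerate configurations in which a positive lower bound on a saturated link would otherwise make the larger set $\mc J$ infeasible while the smaller set $\mc K$ still carries positive margin. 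Establishing that this construction is always available, and that it is compatible with the capacity and link enlargements $\flowmax^1\le\flowmax^2$, $\mc E^1\subseteq\mc E^2$ feeding the single-link estimate, is the crux of the argument; the template for the pure set-monotonicity piece is the induction already carried out for \eqref{eq:K-J-ineq}.
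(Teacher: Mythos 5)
Your skeleton (double induction on nodes and on $|\mc J|$, the $\rlb$-monotonicity disposed of as a feasible-set inclusion, the singleton base case via \eqref{eq:Se-def} plus the downstream hypothesis) coincides with the paper's, but your core inductive step does not close. You fix a maximizer $x^1 \in g(\mc K,\rlb^1,\mu_1)$, build one witness $y \in \mc X_v(\mc J,\rlb^2,\mu_2)$ with $y \le x^1$ on $\mc K$, and verify the minimum in \eqref{eq:S-tilde-def-new} term by term; for a live link $e \in \mc K$ this requires $S^2(\mc J\setminus\{e\}, y, \mu_2) \ge S^1(\mc K\setminus\{e\}, x^1, \mu_1)$. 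The only tool you allow yourself is the inner hypothesis (the lemma for smaller sets), applied with network-1 lower bound $x^1$ and network-2 lower bound $y$; but that only yields $S^2(\mc J\setminus\{e\}, y,\mu_2) \ge S^1(\mc L, x^1, \mu_1)$ where $\mc L = \setdef{e' \in \mc J\setminus\{e\}}{S^1_{e'}(x^1_{e'}) > 0}$ is the live set under the \emph{increased} lower bound $x^1$, and $\mc L$ can be a strict subset of $\mc K\setminus\{e\}$, since a maximizer may saturate links that were alive under $\rlb^1$. To get back from $\mc L$ to $\mc K\setminus\{e\}$ you need $S^1(\mc K\setminus\{e\}, x^1, \mu_1) \le S^1(\mc L, x^1, \mu_1)$, i.e., that appending dead links never \emph{increases} $S$. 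This is exactly the half of the paper's claim (i) proved in \eqref{eq:S1-difficult-proof-eq1}, and it is not a consequence of the lemma statement: the lemma, applied with the two networks and all parameters equal, only gives the opposite direction (live $\le$ full). So the induction hypothesis must be strengthened to carry the equality $S^1(\mc K,\rlb,\mu) = S^1(\mc J,\rlb,\mu)$ alongside the inequality --- which is precisely what the paper does by splitting the statement into four claims (i)--(iv) and inducting on them jointly, with (i) requiring both \eqref{eq:S1-difficult-proof-eq1} and a separate special witness ($y=\rlb^1$ on $\mc K$, all residual inflow on the dead link) with a two-case analysis. None of this appears in your sketch, and with it your ``one-shot'' construction essentially collapses back into the paper's argument.

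Two further points. First, the natural repair of using a witness with $y \le \rlb^1$ on $\mc K$ (so that the hypothesis can be invoked with unchanged lower bound $\rlb^1$, whose live set is exactly $\mc K$) is not available either: the balance constraint $\onebf' y = \mu_2$ forces $y_e > \rlb^1_e$ somewhere whenever $\mu_2 > \sum_{e\in\mc J\setminus\mc K}\rlb^2_e + \sum_{e\in\mc K}\rlb^1_e$. This is why the paper handles the inflow comparison as a separate claim (iii), via the relaxation of $\onebf'x=\mu_2$ to $\onebf'x\ge\mu_2$ and an interpolation argument; your plan has no substitute for that step. Second, your assertion that the definition of $\mc K$ ``does the work'' in the empty-feasible-set cases is misplaced: $\mc K$ trims the network-1 side of the inequality, whereas the problematic degeneracy sits on the network-2 side, where the dead links' lower bounds $\rlb^2$ still enter the sum constraint in \eqref{eq:X-v-def} and can render $\mc X_v(\mc J,\rlb^2,\mu_2)$ empty while $\mc X_v(\mc K,\rlb^1,\mu_1)$ is not.
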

\begin{proof}
We split the lemma as follows:
\begin{enumerate}
\item $S^1(\mc K, \rlb^1,\mu_1) = S^1(\mc J, \rlb^1,\mu_1)$;
\item $S^1(\mc J, \rlb^1, \mu_1) \leq S^2(\mc J,\rlb^1,\mu_1)$;
\item $S^2(\mc J, \rlb^1,\mu_1) \leq S^2(\mc J, \rlb^1,\mu_2)$; and
\item $S^2(\mc J, \rlb^1,\mu_2) \leq S^2(\mc J, \rlb^2,\mu_2)$.
\end{enumerate}
Out of these, (iv) is trivial, and hence we omit its proof. We prove (i)-(iii) by double induction, on the number of nodes $n+1$ and the cardinality of $\mc J$. 
It is immediate to verify that the claim holds true for $n=1$ and $|\mc J|=1$, since with $\mc J=\{e\}$, $S^i(e,\rlb,\mu)=S^i_{e_1}(\mu)=[\flowmax^i_{e_1} - \mu]^+$ for all $\rlb$, $i=1,2$. Assume that the claim is true for $|\mc V|=n+1$ and for $|\mc J|\le k$. Consider a $\mc J$ of cardinality $k+1$. \eqref{eq:S-tilde-def-new} and \eqref{eq:g-def-1} imply that, for all $z \in g(\mc K \cup \{j\},\rlb^1,\mu_1)$, $j \in \mc J \setminus \mc K$:
\be
\label{eq:S1-difficult-proof-eq1}
S^1(\mc K \cup \{j\},\rlb^1,\mu_1) \leq S^1_j(z_j) + S^1(\mc K, z, \mu_1) \leq S^1(\mc K, \rlb^1, \mu_1),
\ee
where the second inequality follows from the fact that $z \geq \rlb^1$ implies $0 \leq S^1_j(z_j) \leq S^1_j(\rlb^1_j)  = 0$, and using (iv) from induction. On the other hand, consider $y$ such that $y_e=\rlb^1_e$ for all $e \in \mc K$ and $y_j = \mu_1 - \sum_{e \in \mc K} \rlb^1_e \geq \rlb^1_j$, where the inequality follows from the feasibility of $y$. For such a $y$, \eqref{eq:S-tilde-def-new} implies that there exists $i \in \mc K \cup \{j\}$ such that 
\be
\label{eq:S1-difficult-proof-eq2}
S^1(\mc K \cup \{j\},\rlb^1,\mu_1) \geq S^1_i(y_i) + S^1(\mc K \cup \{j\} \setminus \{i\}, y, \mu_1)
\ee
Consider \eqref{eq:S1-difficult-proof-eq2} under two cases. (a) $i = j$. In this case, since $S^1_j(y_j) = 0$, \eqref{eq:S1-difficult-proof-eq2} gives $S^1(\mc K \cup \{j\},\rlb^1,\mu_1) \geq  S^1(\mc K \cup \{j\} \setminus \{j\}, y, \mu_1)=S^1(\mc K, \rlb^1, \mu_1)$, where the last equality follows from the fact that the components of $\rlb^1$ and $y$ along $\mc K$ are the same. (b) $i \neq j$. In this case, recalling that $y_e=\rlb^1_e$ for all $e \in \mc K$, and applying (i) from induction to the second term in \eqref{eq:S1-difficult-proof-eq2}, we get that 
\be
\label{eq:S1-difficult-proof-eq3}
S^1(\mc K \cup \{j\},\rlb^1,\mu_1) \geq S^1_i(\rlb^1_i) + S^1(\mc K \setminus \{i\}, y, \mu_1)=S^1_i(\rlb^1_i) + S^1(\mc K \setminus \{i\}, \rlb^1, \mu_1).
\ee 
For every $z \in g(\mc K, \rlb^1,\mu_1)$, we have
\be
\label{eq:S1-difficult-proof-eq4}
S^1_i(\rlb^1_i) + S^1(\mc K \setminus \{i\}, \rlb^1, \mu_1) \geq S^1_i(z_i) + S^1(\mc K \setminus \{i\}, z, \mu_1) \geq S^1(\mc K, \rlb^1, \mu_1),
\ee
where the first inequality follows from $z \geq \rlb^1$ and (iii) from induction.
Combining \eqref{eq:S1-difficult-proof-eq3} and \eqref{eq:S1-difficult-proof-eq4}, we arrive at the same conclusion as case (a), i.e., 
\be
\label{eq:S1-difficult-proof-eq5}
S^1(\mc K \cup \{j\},\rlb^1,\mu_1) \geq S^1(\mc K, \rlb^1, \mu_1).
\ee
Combining \eqref{eq:S1-difficult-proof-eq1} and \eqref{eq:S1-difficult-proof-eq5}, we establish (ii) when $|\mc J \setminus \mc K|=1$. The proof for arbitrary $|\mc J \setminus \mc K|$ follows from repetitive application of this procedure. 

The proof for (ii) easily follows from induction since, for every $x$, and $e \in \mc J$,
$$S^1_e(x_e) + S^1(\mc J \setminus \{e\}, x, \mu_1) \leq S^2_e(x_e) + S^2(\mc J \setminus \{e\}, x, \mu_1).$$

In order to prove (iii), using the fact that $S_e^2(\mu)$ is non-increasing in $\mu$ and $S^2(\mc J\setminus\{e\},\rlb,\mu)$ non-increasing in $\rlb$ from induction, 
one gets that, for all $\rlb^1$ such that $\onebf'\rlb^1 \le\mu_2$, 
$$\max_{\substack{x\ge\rlb^1\\ \onebf'x\ge\mu_2}}\min_{e\in\mc J}\l\{S^2_e(x_e)+S^2(\mc J\setminus\{e\},x,\mu_2)\r\}
=\max_{\substack{x\ge\rlb^1\\ \onebf'x=\mu_2}}\min_{e\in\mc J}\l\{S^2_e(x_e)+S^2(\mc J\setminus\{e\},x,\mu_2)\r\}
=S^2(\mc J,\rlb^1,\mu_2)\,.
$$
Specifically, to see why the first equality above holds true, let the maximum in the rightmost side be achieved in some $x\ge\rlb^1$ such that $\onebf'x\ge\mu_2$, and let $\lambda\in[0,1]$ be such that $(1-\lambda)\onebf'x+\lambda\onebf'\rlb^1=\mu_2$ (such $\lambda$ exists since $\onebf'x\ge\mu_2$ and $\onebf'\rlb^1 \le\mu_2$). Then, $y:=(1-\lambda)x+\lambda\rlb^1$ satisfies $x\ge y\ge\rlb^1$, $\onebf'y=\mu_2$, and $S^2_e(y_e)+S^2(\mc J\setminus\{e\},y,\mu_2)\ge S^2_e(x_e)+S^2(\mc J\setminus\{e\},x,\mu_2)$ for all $e$. 
Then, for $\mu_1 \ge \mu_2$, 
\begin{multline*}
S^2(\mc J,\rlb^1,\mu_1)
=\max_{\substack{x\ge\rlb^1\\ \onebf'x=\mu_1}}\min_{e\in\mc J}\l\{S^2_e(x_e)+S^2(\mc J\setminus\{e\},x,\mu_1)\r\} \\
\le\max_{\substack{x\ge\rlb^1\\ \onebf'x\ge\mu_2}}\min_{e\in\mc J}\l\{S^2_e(x_e)+S^2(\mc J\setminus\{e\},x,\mu_2)\r\}
=S^2(\mc J,\rlb^1,\mu_2)\,.
\end{multline*}

This concludes the proof for $|\mc V|=n+1$ and $|\mc J| \leq k+1$. A similar argument allows one to extend the validity of the result to $|\mc V|=n+2$ with $|\mc J|=1$. The lemma then follows by induction. 
\end{proof}


\end{document}